\DeclareMathOperator{\Hessian}{Hess}
\theoremstyle{thmstyleone}%
\newtheorem{theorem}{Theorem}
\newtheorem{proposition}[theorem]{Proposition}%
\newtheorem{assumption}{Assumption}
\newtheorem{lemma}[theorem]{Lemma} 
\newtheorem{corollary}[theorem]{Corollary}
\theoremstyle{thmstyletwo}%
\theoremstyle{thmstylethree}%
\title{Simulation Based Composite Likelihood}
\date{16th January 2025}
\author{Lorenzo Rimella\\
ESOMAS, University of Turin, Via Verdi 8, Turin, 10124, Italy\\
Statistics Initiative, Collegio Carlo Alberto, Piazza Arbarello 8, Turin, 10122, Italy\\
Chris Jewell\\
Mathematical Sciences, Lancaster University, Lancaster, LA14YF, United Kingdom\\
Paul Fearnhead\\
Mathematical Sciences, Lancaster University, Lancaster, LA14YF, United Kingdom
}
\begin{document}
 \maketitle
 

\abstract{Inference for high-dimensional hidden Markov models is challenging due to the exponential-in-dimension computational cost of calculating the likelihood. To address this issue, we introduce an innovative composite likelihood approach called ``Simulation Based Composite Likelihood'' (SimBa-CL). With SimBa-CL, we approximate the likelihood by the product of its marginals, which we estimate using Monte Carlo sampling. In a similar vein to approximate Bayesian computation (ABC), SimBa-CL requires multiple simulations from the model, but, in contrast to ABC, it provides a likelihood approximation that guides the optimization of the parameters. Leveraging automatic differentiation libraries, it is simple to calculate gradients and Hessians to not only speed up optimization but also to build approximate confidence sets. We present extensive empirical results which validate our theory and demonstrate its advantage over SMC, and apply SimBa-CL to real-world Aphtovirus data.}


\section{Introduction}

Discrete-state hidden Markov models (HMMs) are common in many applications, such as epidemics \cite[]{keeling2011modeling}, systems biology \cite[]{wilkinson2018stochastic} and ecology \cite[]{glennie2023hidden}. Increasingly there is interest in individual-based models \cite[e.g.][]{rimella2022inference}, in which the HMM explicitly describes the state of each individual agent in a population. For example, an individual-based epidemic model may characterize each person in a population as having a latent state, being either susceptible, infected, or recovered.  This state is typically observed noisily, with a sample of individuals being detected as infected with a possibly imperfect diagnostic test \cite[e.g.][]{cocker2023drum}.  Thus, whilst there may be only a small number of states for each individual, this corresponds to a latent state-space that grows exponentially with the number of individuals.

In theory, likelihood calculations for such discrete-state HMMs are tractable using the forward-backward recursions \cite[]{scott2002bayesian}. However, the computational cost of these recursions is at least linear in the size of the state-space of the HMM: this means that they are infeasible for individual-based models with moderate or larger population sizes. This has led to a range of approximate inference methods. These include Monte Carlo methods such as Markov chain Monte Carlo (MCMC) and sequential Monte Carlo (SMC). Whilst such methods can work well, often they scale poorly with the population size -- which may lead to poor mixing of MCMC algorithms or large Monte Carlo variance of the weights in SMC. An alternative approach is approximate Bayesian computation (ABC), where one simulates from the model for different parameter values, and then approximates the posterior for the parameter based on how similar each simulated dataset is to the true data. Such a method needs informative, low-dimensional, summary statistics to be available so that one can accurately measure how close a simulated dataset is to the true data. Furthermore, ABC can struggle with complex models with many parameters, as the number of summary statistics needs to increase with the number of parameters \cite[]{fearnhead2012constructing}. 

In this paper, we consider individual-based HMMs where we have individual-level observations. We present a computationally efficient method for inference that is based on the simple observation: if we fix the state of all members of the population except one, then we can analytically calculate the conditional likelihood of that one individual using forward-backward recursions. This idea has been used before within MCMC algorithms that update the state of each individual in turn conditional on the states of the other individuals \cite[]{fintzi2017efficient,touloupou2020scalable}. Here we use it in a different way. By simulating multiple realizations of the states of the other individuals we can average the conditional likelihood to obtain a Monte Carlo estimate of the likelihood of the data for a given individual. We then sum the log of these estimated likelihoods over individuals to obtain a composite log-likelihood \cite[]{varin2008composite} that can be maximized using, for example, stochastic gradient ascent, to estimate the parameters.  

We introduce the general class of models we consider in Section \ref{sec:model}. We then show how to obtain a Monte Carlo estimate of the likelihood for the observations associated with a single individual, which can be used as the basis of a composite likelihood for our model. The calculation of the likelihood for each individual involves accounting for feedback between the state of the individual in question, and the probability distribution of future states of the rest of the population. A computationally more efficient method can be obtained by ignoring this feedback -- and we present theory that bounds the error of this approach, and shows that it can decay to zero as the population size tends to infinity. Then in Section \ref{sec:var_sens_matrix} we show how we can get confidence regions around estimators based on maximizing our composite likelihood. We then demonstrate the efficiency for individual-based epidemic models both on simulated data and on data from the 2001 UK foot and mouth outbreak.

\section{Model} \label{sec:model}

\subsection{Notation}

Given the integer $t \in \mathbb{N}$, we denote the set of integers from $1$ to $t$ as $[t]$, and we use $[0:t]$ if we want to include $0$. Additionally, we use $[t]$ as shorthand for indexing, for instance $x_{[t]}$ denotes the collection $x_1,\dots, x_t$. If $x$ is an $N$-dimensionsal vector, then given an index $n \in [N]$, we use $x^n$ to denote the $n$th component of $x$ and $x^{\setminus n}$ to denote the $\left (N-1\right )$-dimensional vector obtained by removing the $n$th component from $x$. If required, we augment the superscript notation and use $x^{(i)}$ to refer to the vector $x^{(i)}$ with components $x^{(i),n}$. 
For a finite and discrete set $\mathcal{S}$, we represent the cardinality of $\mathcal{S}$ as $\textbf{card}\left (S\right )$, and we use the shorthand $\sum_{x}$ to express the sum over all elements of $\mathcal{S}$. We use bold font to denote random variables and regular font for deterministic quantities. For the underlying probability measure, we commonly use $p$ and, for the sake of clarity, we focus on its functional form, for instance, we use $p\left (x_t|x_{t-1},\theta\right )$ for the probability of $\mathbf{x}_t=x_t$ given $\mathbf{x}_{t-1}=x_{t-1}$ and the parameters $\theta$.

\begin{figure*}[httb!]
\centering
\begin{subfigure}{0.325\textwidth}
\resizebox{0.9\textwidth}{!}{
\begin{tikzpicture}[-latex, auto, node distance =2 cm and 2.5cm ,on grid ,state/.style ={ circle ,top color =white , draw , text=blue,font=\bfseries\Large , minimum width =1.2cm}]

\node[state] (Xt-1) at (0,0) {$x_{{t-1}}$};
\node[state] (Xt) [right =of Xt-1] {$x_{t}$};
\node[state] (Xt+1) [right =of Xt  ] {$x_{{t+1}}$};

\node[draw=none] (none)[above =of Xt-1] {};
\node[state] (Yt-1)[below =of Xt-1] {$y_{t-1}$};
\node[state] (Yt)[below =of Xt] {$y_{t}$};
\node[state] (Yt+1)[below =of Xt+1] {$y_{t+1}$};

\node[draw=none] (fYt-1)[below =of Yt-1] {};
\node[draw=none] (fYt)  [below =of Yt  ] {};
\node[draw=none] (fYt+1)[below =of Yt+1] {};

\node[draw=none] (ffYt-1)[below =of fYt-1] {};
\node[draw=none] (ffYt)  [below =of fYt  ] {};
\node[draw=none] (ffYt+1)[below =of fYt+1] {};

{};

\path (-2,0) edge [bend left = 0] node {} (Xt-1);
\path (Xt-1) edge [bend left = 0] node {} (Xt);
\path (Xt) edge [bend left = 0] node {} (Xt+1);
\path (Xt+1) edge [bend left = 0] node {} (7,0);

\path (Xt-1) edge [bend left = 40, color = red] node {} (Yt-1);
\path (Xt) edge [bend left = 40, color = red] node {} (Yt);
\path (Xt+1) edge [bend left = 40, color = red] node {} (Yt+1);

\end{tikzpicture}
}
\end{subfigure}
\begin{subfigure}{0.625\textwidth}
\hfill
\resizebox{0.9\textwidth}{!}{
\begin{tikzpicture}[-latex, auto, node distance =1.3 cm and 4cm ,on grid ,state/.style ={ circle ,top color =white , draw , text=blue,font=\bfseries\large, minimum width =1.2 cm}]

\node[state] (Xt-11) at (0,0) {$x_{{t-1}}^1$};
\node[state] (Xt1) [right =of Xt-11]   {$x_{t}^1$};
\node[state] (Xt+11) [right =of Xt1  ] {$x_{{t+1}}^1$};

\node[state] (Yt-11)[below =of Xt-11]   {$y^1_{t-1}$};
\node[state] (Yt1)[below =of Xt1]       {$y^1_{t}$};
\node[state] (Yt+11)[below =of Xt+11]   {$y^1_{t+1}$};

\node[state] (Yt-12)[above =of Xt-11]   {$y^2_{t-1}$};
\node[state] (Yt2)[above =of Xt1]       {$y^2_{t}$};
\node[state] (Yt+12)[above =of Xt+11]   {$y^2_{t+1}$};

\node[state] (Xt-12) [above =of Yt-12] {$x_{t-1}^2$};
\node[state] (Xt2) [above =of Yt2]     {$x_{t}^2$};
\node[state] (Xt+12) [above =of Yt+12] {$x_{t+1}^2$};

\node[state] (Yt-13)[above =of Xt-12]   {$y^3_{t-1}$};
\node[state] (Yt3)[above =of Xt2]       {$y^3_{t}$};
\node[state] (Yt+13)[above =of Xt+12]   {$y^3_{t+1}$};

\node[state] (Xt-13) [above =of Yt-13] {$x_{t-1}^3$};
\node[state] (Xt3) [above =of Yt3]     {$x_{t}^3$};
\node[state] (Xt+13) [above =of Yt+13] {$x_{t+1}^3$};

\node[state] (Yt-14)[above =of Xt-13]   {$y^4_{t-1}$};
\node[state] (Yt4)[above =of Xt3]       {$y^4_{t}$};
\node[state] (Yt+14)[above =of Xt+13]   {$y^4_{t+1}$};

\node[state] (Xt-14) [above =of Yt-14] {$x_{t-1}^4$};
\node[state] (Xt4) [above =of Yt4]     {$x_{t}^4$};
\node[state] (Xt+14) [above =of Yt+14] {$x_{t+1}^4$};

\path (-4, 0  ) edge node {} (Xt-11);
\path (-4, 2.6) edge [bend right = 0] node {} (Xt-11);
\path (-4, 5.2) edge [bend right = 0] node {} (Xt-11);
\path (-4, 7.8) edge [bend right = 0] node {} (Xt-11);

\path (-4, 0  ) edge [bend left = 0] node {} (Xt-12);
\path (-4, 2.6) edge node {} (Xt-12);
\path (-4, 5.2) edge [bend right = 0] node {} (Xt-12);
\path (-4, 7.8) edge [bend right = 0] node {} (Xt-12);

\path (-4, 0  ) edge [bend left = 0] node {} (Xt-13);
\path (-4, 2.6) edge [bend left = 0] node {} (Xt-13);
\path (-4, 5.2) edge node {} (Xt-13);
\path (-4, 7.8) edge [bend right = 0] node {} (Xt-13);

\path (-4, 0  ) edge [bend left = 0] node {} (Xt-14);
\path (-4, 2.6) edge [bend left = 0] node {} (Xt-14);
\path (-4, 5.2) edge [bend left = 0] node {} (Xt-14);
\path (-4, 7.8) edge node {} (Xt-14);

\path (Xt-11) edge node {} (Xt1);
\path (Xt-12) edge [bend right = 0] node {} (Xt1);
\path (Xt-13) edge [bend right = 0] node {} (Xt1);
\path (Xt-14) edge [bend right = 0] node {} (Xt1);

\path (Xt-11) edge [bend left = 0] node {} (Xt2);
\path (Xt-12) edge node {} (Xt2);
\path (Xt-13) edge [bend right = 0] node {} (Xt2);
\path (Xt-14) edge [bend right = 0] node {} (Xt2);

\path (Xt-11) edge [bend left = 0] node {} (Xt3);
\path (Xt-12) edge [bend left = 0] node {} (Xt3);
\path (Xt-13) edge node {} (Xt3);
\path (Xt-14) edge [bend right = 0] node {} (Xt3);

\path (Xt-11) edge [bend left = 0] node {} (Xt4);
\path (Xt-12) edge [bend left = 0] node {} (Xt4);
\path (Xt-13) edge [bend left = 0] node {} (Xt4);
\path (Xt-14) edge node {} (Xt4);

\path (Xt1) edge node {} (Xt+11);
\path (Xt2) edge [bend right = 0] node {} (Xt+11);
\path (Xt3) edge [bend right = 0] node {} (Xt+11);
\path (Xt4) edge [bend right = 0] node {} (Xt+11);

\path (Xt1) edge [bend left = 0] node {} (Xt+12);
\path (Xt2) edge node {} (Xt+12);
\path (Xt3) edge [bend right = 0] node {} (Xt+12);
\path (Xt4) edge [bend right = 0] node {} (Xt+12);

\path (Xt1) edge [bend left = 0] node {} (Xt+13);
\path (Xt2) edge [bend left = 0] node {} (Xt+13);
\path (Xt3) edge node {} (Xt+13);
\path (Xt4) edge [bend right = 0] node {} (Xt+13);

\path (Xt1) edge [bend left = 0] node {} (Xt+14);
\path (Xt2) edge [bend left = 0] node {} (Xt+14);
\path (Xt3) edge [bend left = 0] node {} (Xt+14);
\path (Xt4) edge node {} (Xt+14);

\path (Xt+11) edge node {} (12, 0  );
\path (Xt+11) edge node {} (12, 2.6);
\path (Xt+11) edge node {} (12, 5.2);
\path (Xt+11) edge node {} (12, 7.8);

\path (Xt+12) edge [bend right = 0] node {} (12, 0  );
\path (Xt+12) edge [bend right = 0] node {} (12, 2.6);
\path (Xt+12) edge [bend right = 0] node {} (12, 5.2);
\path (Xt+12) edge [bend right = 0] node {} (12, 7.8);

\path (Xt+13) edge [bend right = 0] node {} (12, 0  );
\path (Xt+13) edge [bend right = 0] node {} (12, 2.6);
\path (Xt+13) edge [bend right = 0] node {} (12, 5.2);
\path (Xt+13) edge [bend right = 0] node {} (12, 7.8);

\path (Xt+14) edge [bend right = 0] node {} (12, 0  );
\path (Xt+14) edge [bend right = 0] node {} (12, 2.6);
\path (Xt+14) edge [bend right = 0] node {} (12, 5.2);
\path (Xt+14) edge [bend right = 0] node {} (12, 7.8);

\path (Xt-11) edge [bend left = 40, color = red] node {} (Yt-11);
\path (Xt-12) edge [bend left = 40, color = red] node {} (Yt-12);
\path (Xt-13) edge [bend left = 40, color = red] node {} (Yt-13);
\path (Xt-14) edge [bend left = 40, color = red] node {} (Yt-14);

\path (Xt1) edge [bend left = 40, color = red] node {} (Yt1);
\path (Xt2) edge [bend left = 40, color = red] node {} (Yt2);
\path (Xt3) edge [bend left = 40, color = red] node {} (Yt3);
\path (Xt4) edge [bend left = 40, color = red] node {} (Yt4);

\path (Xt+11) edge [bend left = 40, color = red] node {} (Yt+11);
\path (Xt+12) edge [bend left = 40, color = red] node {} (Yt+12);
\path (Xt+13) edge [bend left = 40, color = red] node {} (Yt+13);
\path (Xt+14) edge [bend left = 40, color = red] node {} (Yt+14);

\end{tikzpicture}
}
\end{subfigure}
\caption{Left: conditional independence structure of a standard HMM. Right: conditional independence structure of an HMM satisfying \eqref{eq:model_factorization}, with $N=4$.}\label{fig:HMM}
\end{figure*}

\subsection{Hidden Markov Models and Likelihood Computation}

A hidden Markov model (HMM) $\left (\mathbf{x}_0, \left (\mathbf{x}_t, \mathbf{y}_t\right )_{t \geq 1}\right )$ is a stochastic process where the unobserved process $\left (\mathbf{x}_t\right )_{t \geq 0}$ is a Markov chain, and the observed process $\left (\mathbf{y}_t\right )_{t \geq 1}$ is such that, for any $t\geq 1$, $\mathbf{y}_t$ is conditionally independent of all the other variables given $\mathbf{x}_t$. See \cite{Chopin2020} for a comprehensive review of HMMs. 

Within this paper, we focus on HMMs with finite-dimensional state-spaces. Precisely, we consider $\left (\mathbf{x}_t\right )_{t \geq 0}$ to take values on the state-space $\mathcal{X}^N$, which satisfies a product form, $\mathcal{X}^N = \bigtimes_{n \in [N]} \mathcal{X}$, where $\mathcal{X}$ is finite and discrete. We also consider $\left (\mathbf{y}_t\right )_{t \geq 1}$ to take values in the space $\mathcal{Y}^N$, which also satisfies a product form, $\mathcal{Y}^N = \bigtimes_{n \in [N]} \mathcal{Y}$, but here $\mathcal{Y}$ can be of any form.

Given a collection of parameters $\theta$, an HMM is fully defined through its components: the initial distribution $p\left (x_0|\theta\right )$, which is the distribution of $\mathbf{x}_0$; the transition kernel $p\left (x_t|x_{t-1},\theta\right )$, which is the distribution of $\mathbf{x}_t$ given $\mathbf{x}_{t-1}$; and the emission distribution $p\left (y_t|x_t,\theta\right )$, which is the distribution of $\mathbf{y}_t$ given $\mathbf{x}_t$. For a given time horizon $T \in \mathbb{N}$, we may assume that the data sequence $y_1,\dots,y_T$ is generated from the aforementioned hidden Markov model with parameters $\theta^\star$. Our primary interest is in inferring the parameter $\theta^\star$, responsible for generating the data or, in cases where the model is not fully identifiable, a set of parameters that are equally likely. Given the assumption that $\mathcal{X}$ is finite and discrete, the probability distribution $p\left (x_0|\theta\right )$ takes the form of a probability vector with $\textbf{card}\left (\mathcal{X}\right )^N$ elements, while $p\left (x_t|x_{t-1}, \theta\right )$ corresponds to a $\textbf{card}\left (\mathcal{X}\right )^N \times \textbf{card}\left (\mathcal{X}\right )^N$ stochastic matrix.
The computation of the likelihood for HMMs with discrete state-space is relatively straightforward and involves marginalization over the entire state-space:
\begin{equation} \label{eq:HMM_likelihood}
\begin{split}
    &p\left (y_{[T]}|\theta\right )= \sum_{x_{[0:T]}} p\left (x_0|\theta\right ) \prod_{t \in [T]} p\left (x_t|x_{t-1},\theta\right ) p\left (y_t|x_t, \theta\right ).
\end{split}
\end{equation}
In practice, to avoid marginalizing on an exponential-in-time state-space, the likelihood is recursively computed using the forward algorithm, which recursively computes the filtering distribution $p(x_t|y_{t},\theta)$ and the likelihood increments $p(y_t|y_{[t-1]},\theta)$. The $t+1$ step of the forward algorithm comprises two operations, namely, prediction:
\begin{equation}
    \begin{split}
    &\begin{Bmatrix}
        p\left (x_{t+1}|x_{t},\theta\right )\\
        p\left (x_{t}|y_{[t]},\theta\right )
    \end{Bmatrix}
    \overset{\text{prediction}}{\longrightarrow}
    \begin{Bmatrix}
        p\left (x_{t+1}|y_{[t]},\theta\right ) = \sum\limits_{x_{t}} p\left (x_{t+1}|x_{t},\theta\right )p\left (x_{t}|y_{[t]},\theta\right )
    \end{Bmatrix},
    \end{split}
\end{equation}
where the transition kernel is applied to the previous filtering distribution, and correction:
\begin{equation}
    \begin{split}
    &\begin{Bmatrix}
        p\left (y_{t+1}|x_{t+1},\theta\right )\\
        p\left (x_{t+1}|y_{[t]},\theta\right )
    \end{Bmatrix}
    \overset{\text{correction}}{\longrightarrow}\begin{Bmatrix}
        p\left (x_{t+1}|y_{[t+1]},\theta\right ) = \frac{p(y_{t+1}|x_{t+1},\theta) p\left (x_{t+1}|y_{[t]},\theta\right )}{p\left (y_{t+1}|y_{[t]},\theta\right )}\\
        p\left (y_{t+1}|y_{[t]},\theta\right ) \hspace{-0.125cm}= \hspace{-0.25cm}\sum\limits_{x_{t+1}} p(y_{t+1}|x_{t+1},\theta) p\left (x_{t+1}|y_{[t]},\theta\right )
    \end{Bmatrix}
    \end{split}
\end{equation}
from which the likelihood increments $p\left (y_{t}|y_{[t-1]},\theta\right )$, with $p\left (y_{1}|y_{[0]},\theta\right ) \coloneqq p\left (y_{1}|\theta\right )$, are then combined to compute the likelihood:
\begin{equation}
    p\left (y_{[T]}|\theta\right ) =\prod_{t \in [T]} p\left (y_{t}|y_{[t-1]},\theta\right ).
\end{equation}

Despite its simplicity, the forward algorithm necessitates a marginalization on the full state-space, incurring a computational cost that is, at worst, quadratic in the cardinality of the state-space. This translates to a  complexity of $\mathcal{O}\left (\mathbf{card}\left (\mathcal{X}\right )^{2N}\right )$, making the forward algorithm unfeasible for large values of $N$. 


\subsection{Factorial Structure} \label{sec:factorial_structure}


We consider HMMs with initial distribution, transition kernel, and emission distribution that satisfy the following factorizations:
\begin{equation} \label{eq:model_factorization}
    \begin{split}
        &p\left (x_0|\theta\right ) = \prod_{n \in [N]} p\left (x_0^n|\theta\right ),\\
        &p\left (x_t|x_{t-1}, \theta\right ) = \prod_{n \in [N]} p\left (x^n_t|x_{t-1},\theta\right ),\\
        &p\left (y_t|x_{t}, \theta\right ) = \prod_{n \in [N]} p\left (y^n_t|x_{t}^n,\theta\right ),
    \end{split}
\end{equation}
which essentially says that we can decompose the initial distribution in $N$ probability vectors of size $\mathbf{card}\left(\mathcal{X}\right)$, the transition kernel in $N$ stochastic matrices that are $\mathbf{card}\left(\mathcal{X}\right) \times \mathbf{card}\left(\mathcal{X}\right)$, whose elements also depend on $x_{t-1}$, and the observation $n$ is conditionally independent of all the other variables given $\mathbf{x}_t^n$, see Figure \ref{fig:HMM}. Note that the introduced factorization does not resolve our problems as the $p\left (x^n_t|x_{t-1},\theta\right )$ still depends on the whole space at time $t-1$, not allowing a full decoupling across the dimensions. Rather, it serves as an essential foundation upon which we construct our approximation. Furthermore, the factorization given by \eqref{eq:model_factorization} is natural in several real-world applications, including epidemics \citep{rimella2022approximating, rimella2022inference}, traffic modelling \citep{silva2015predicting} and finance \citep{samanidou2007agent}.

It is important to note that \eqref{eq:model_factorization}, apart from holding for many real-world discrete-time models, does not require evenly spaced observation. Indeed, we could equivalently work on a more general version of \eqref{eq:model_factorization}, where data are observed at $0=\tau_0<\tau_1<\dots<\tau_t$, and both the transition kernel and the emission distribution are time inhomogeneous \citep{rimella2022inference,duffield2023state}:
\begin{equation} \label{eq:general_model_factorization}
    \begin{split}
        &p_0\left (x_0|\theta\right ) = \prod_{n \in [N]} p_0\left (x_0^n|\theta\right ),\\
        &p_{\tau_t}\left (x_{\tau_t}|x_{\tau_{t-1}}, \theta\right ) = \prod_{n \in [N]} p_{\tau_t}\left (x_{\tau_t}^n|x_{\tau_{t-1}},\theta\right ),\\
        &p_{\tau_t}\left (y_{\tau_t}|x_{\tau_t}, \theta\right ) = \prod_{n \in [N]} p_{\tau_t}\left (y^n_{\tau_t}|x_{\tau_t}^n,\theta\right ).
    \end{split}
\end{equation}
To avoid cumbersome notation we use the formulation from \eqref{eq:model_factorization} throughout the paper.


\section{Simulation Based Composite Likelihood: SimBa-CL} \label{sec:simba_cl}

From the model structure shown in Section \ref{sec:factorial_structure}, if we fix all but one component of the latent process, $x^{\backslash n}_{[T]}$ say, we can leverage the factorization and calculate probabilities related to the time-trajectory of the remaining state, $x^n_{[T]}$, with a computational cost that is $\mathcal{O}\left (\mathbf{card}\left (\mathcal{X}\right )\right )$. This idea has been used within Gibbs-style MCMC updates for epidemics see \cite{fintzi2017efficient,touloupou2020scalable}. We show how to use this idea, together with using Monte Carlo to average over $x^{\backslash n}_{[T]}$, to calculate the marginal likelihoods $p (y^n_{[T]}|\theta )$. We can then use the product of these marginal likelihoods, $\prod_{n \in [N]} p (y^n_{[T]}|\theta )$, as a composite likelihood \cite[]{varin2008composite} that can be maximized to estimate $\theta$. 

Using $p\left (y_{[T]}|\theta\right ) \approx \prod_{n \in [N]} p\left (y^n_{[T]}|\theta\right )$ still falls short, as the computation of $p\left (y_{[T]}^n|\theta\right )$ continues to require a recursive marginalization on $\mathcal{X}^N$. Yet, we can express the marginal likelihood $p\left (y_{[T]}^n|\theta\right )$ as:
\begin{equation} \label{eq:reduced_cost_marginals}
    \begin{split}
    &p\left (y_{[T]}^n|\theta\right )= \sum_{x_{[0:T-1]}^{\setminus n}} p\left (x_{[0:T-1]}^{\setminus n}|\theta\right ) p\left (y_{[T]}^n|x_{[0:T-1]}^{\setminus n}, \theta\right ),
    \end{split}
\end{equation} 
where:
\begin{equation} \label{eq:conditional_reduced_cost_marginals}
\begin{split}
    &p\left (y_{[T]}^n|x_{[0:T-1]}^{\setminus n}, \theta\right )= \sum_{x_{[0:T]}^n} p\left (x_T^n|x_{T-1}, \theta\right ) p\left (x_{[0:T-1]}^{n}|x_{[0:T-1]}^{\setminus n}, \theta\right )\prod_{t \in [T]} p\left (y_t^n|x_t^n, \theta\right ).
\end{split}
\end{equation}
We have two necessary ingredients for calculating $p\left (y_{[T]}^n|\theta\right )$: firstly, $p\left (y_{[T]}^n|x_{[0:T-1]}^{\setminus n}, \theta\right )$, which demands $T$ recursive marginalizations on $\mathcal{X}$ given $x_{[0:T-1]}^{\setminus n}$; secondly, a marginalization on $\mathcal{X}^{N-1}$ through $p\left (x_{[0:T-1]}^{\setminus n}|\theta\right )$, see \eqref{eq:reduced_cost_marginals}. \eqref{eq:reduced_cost_marginals} using Monte Carlo sampling.

We refer to this procedure as ``Simulation Based Composite Likelihood'', or ``SimBa-CL'' in short. In the following sections, we give an in-depth discussion on SimBa-CL and show how we can target the true marginals of the likelihood and build a likelihood approximation in $\mathcal{O}\left (N^2\right )$, see Section \ref{sec:simba_feed}, how to approximate the marginals of the likelihood and build an approximation of the likelihood in $\mathcal{O}\left (N\right )$, see Section \ref{sec:simba_withoutfeed}, and how to generalize SimBa-CL, see Section \ref{sec:simba_general}. For the sake of presentation, we remove the dependence on the parameter $\theta$ and focus on the filtering aspects of the algorithms for a fixed $\theta$.

\subsection{SimBa-CL with Feedback} \label{sec:simba_feed}

Given efficient sampling from $p \left (x_{[0:T-1]}^{\setminus n} \right )$ and low-cost evaluation of $p \left ( y_{[T]}^n|x_{[0:T-1]}^{\setminus n} \right )$ for a given $x_{[0:T-1]}^{\setminus n}$, we can obtain a Monte Carlo estimate of the marginal likelihood from \eqref{eq:reduced_cost_marginals}:
\begin{equation}\label{eq:sampling_marginals}
    \begin{split}
        p\left (y^n_{[T]}\right ) 
        & \approx
        \frac{1}{P} \sum_{i \in [P]} p\left (y_{[T]}^n|x_{[0:T-1]}^{(i), \setminus n} \right ),
    \end{split}
\end{equation} 
where $P \in \mathbb{N}$ is the number of Monte Carlo samples and $x_{[0:T-1]}^{(i),\setminus n} \sim p\left (x_{[0:T-1]}^{\setminus n}\right )$. Repeating  \eqref{eq:sampling_marginals} for all $n \in [N]$ and computing the product across $n$ of these Monte Carlo estimates represents a reasonable strategy for approximating the likelihood of the model.

Two ingredients are pivotal in the computation of  \eqref{eq:sampling_marginals}: (i) sampling from the model and (ii) calculating $p \left (y_{[T]}^n|x_{[0:T-1]}^{\setminus n} \right )$. Sampling from $p \left (x_{[0:T-1]}^{\setminus n} \right )$ can be achieved by sampling $x_{[0:T-1]}$ from $p \left (x_{[0:T-1]} \right )$ and then selecting the subset $x_{[0:T-1]}^{\setminus n}$. Sampling from the entire process is generally straightforward.

For the computation of $p \left (y_{[T]}^n|x_{[0:T-1]}^{\setminus n} \right )$, it is important to recognize that $p \left (x_{[0:T-1]}^{n}|x_{[0:T-1]}^{\setminus n} \right )$ can be reformulated as a product between the transition dynamics and the probability of observing a certain simulation outcome:
\begin{equation}\label{eq:recursive_feedback_kernel}
\begin{split}
    &p\left (x_{[0:T-1]}^n|x_{[0:T-1]}^{\setminus n}\right ) =
    p\left (x_{0}^n\right ) \prod_{t \in [T-1]} p\left (x_{t}^n|x_{t-1}\right ) f \left (x_{t-1}^n, x_{[0:t]}^{\setminus n} \right ),
\end{split}
\end{equation}
where we refer to $f(x_{t-1}^n, x_{[0:t]}^{\setminus n} )\coloneqq p ( x_{t}^{\setminus n}|x_{t-1}^n, x_{[0:t-1]}^{\setminus n} )$ as the simulation feedback, and so:
\begin{equation} \label{eq:simulation_feed_definition}
\begin{split}
    &f \left (x_{t-1}^n, x_{[0:t]}^{\setminus n} \right )=
    \frac{ \prod\limits_{\bar{n} \in [N] \setminus n} p\left (x_{t}^{\bar{n}}|x_{t-1}\right )}{\sum\limits_{\bar{x}_{t-1}^n} \prod\limits_{\bar{n} \in [N] \setminus n} p\left (x_{t}^{\bar{n}}|\bar{x}_{t-1}^n, x_{[t-1]}^{\setminus n}\right ) p\left (\bar{x}_{t-1}^n| x_{[0:t-1]}^{\setminus n}\right )},
\end{split}
\end{equation}
where $p\left ({x}_{0}^n| x_{[0:0]}^{\setminus n}\right ) = p\left(x_0^n\right)$ for the factorization of the initial distribution. The intuition is that this term accounts for how changing $x^n_{t-1}$ affects the probability of $x^{\setminus n}_t$. More details on the factorization \eqref{eq:recursive_feedback_kernel} and the derivation of the simulation feedback \eqref{eq:simulation_feed_definition} are available in Section A.1 of the supplementary material.

By reformulating $p\left (x_{[0:T-1]}^n|x_{[0:T-1]}^{\setminus n}\right )$ as depicted in \eqref{eq:recursive_feedback_kernel}, we arrive at the following expression:
\begin{equation} \label{eq:simulation_likelihood_with_feedback_subroutine}
\begin{split}
    &p\left (y_{[T]}^n|x_{[0:T-1]}^{\setminus n}\right )= \sum_{x_{[0:T]}^{n}} p\left (x_{0}^n\right ) \prod_{t \in [T-1]} f \left ( x_{t-1}^n, x_{[0:t]}^{\setminus n} \right )\prod_{t \in [T]} p\left (x_{t}^n|x_{t-1}\right )  p\left (y_t^n|x_t^n\right ),
\end{split}
\end{equation}
which resembles the likelihood of an HMM, see \eqref{eq:HMM_likelihood}. Specifically, it comprises the usual transition dynamic term $p\left (x_t^n|x_{t-1}\right )$ accompanied by two likelihood terms: one originating from the simulation outcome $f \left ( x_{t-1}^n, x_{[0:t]}^{\setminus n} \right )$, and another concerning the observation $p\left (y_t^n|x_t^n\right )$. We can then establish a forward algorithm involving two corrections, one that is correcting according to the emission distribution:
\begin{equation}\label{eq:wfeed_corr_obs}
    \begin{split}
    &\begin{Bmatrix}
        p\left (y_{t}^n | x_t^n \right )\\
        p\left (x_{t}^n|y_{[t-1]}^n,x_{[0:t]}^{\setminus n}\right )
    \end{Bmatrix}
    \overset{\substack{\text{observation}\\\text{correction}}}{\longrightarrow}
    \begin{Bmatrix}
        p\left (x_{t}^n|y_{[t]}^n,x_{[0:t]}^{\setminus n}\right ) = \frac{p\left (y_t^n | x_t^n \right ) p\left (x_{t}^n|y_{[t-1]}^n,x_{[0:t]}^{\setminus n}\right )}{p\left (y_{t}^n|y_{[t-1]}^n,x_{[0:t]}^{\setminus n}\right )}\\
        p\left (y_{t}^n|y_{[t-1]}^n,x_{[0:t]}^{\setminus n}\right ) \qquad\qquad\qquad\qquad\qquad\\= \sum\limits_{x_t^n} p\left (y_t^n | x_t^n \right ) p\left (x_{t}^n|y_{[t-1]}^n,x_{[0:t]}^{\setminus n}\right )
    \end{Bmatrix},
    \end{split}
\end{equation}
and the other that is correcting according to the simulation feedback:
\begin{equation*}\label{eq:wfeed_corr_feed}
    \begin{split}
    &\begin{Bmatrix}
        f\left ({x}_{t}^n, x_{[0:t+1]}^{\setminus n}\right )\\
        p\left (x_{t}^n|y_{[t]}^n,x_{[0:t]}^{\setminus n}\right )
    \end{Bmatrix}
    \overset{\substack{\text{feedback}\\\text{correction}}}{\longrightarrow}
    \begin{Bmatrix}
        p\left (x_{t}^n|y_{[t]}^n,x_{[0:t+1]}^{\setminus n}\right )\qquad\qquad\qquad\qquad\qquad\quad\\ = \frac{f\left ({x}_{t}^n, x_{[0:t+1]}^{\setminus n}\right ) p\left (x_{t}^n|y_{[t]}^n,x_{[0:t]}^{\setminus n}\right )}{p\left (x_{t+1}^{\setminus n}|y_{[t]}^n,x_{[0:t]}^{\setminus n}\right )}\\
        p\left (x_{t+1}^{\setminus n}|y_{[t]}^n,x_{[0:t]}^{\setminus n}\right )\qquad\qquad\qquad\qquad\qquad\quad\\
        = \sum\limits_{x_{t}^n} f\left ({x}_{t}^n, x_{[0:t+1]}^{\setminus n}\right ) p\left (x_{t}^n|y_{[t]}^n,x_{[0:t]}^{\setminus n}\right )
    \end{Bmatrix}.
    \end{split}
\end{equation*}
The prediction follows as is in the basic HMM scenario with $p(x_t^n|x_{t-1})$ as transition kernel and $p\left (x_{t-1}^n|y_{[t-1]}^n,x_{[0:t]}^{\setminus n}\right )$ for the distribution to update:
\begin{equation}\label{eq:wfeed_pred_feed}
    \begin{split}
    &\begin{Bmatrix}
        p(x_t^n|x_{t-1})\\
        p\left (x_{t-1}^n|y_{[t-1]}^n,x_{[0:t]}^{\setminus n}\right )
    \end{Bmatrix}
    \overset{\substack{\text{feedback}\\\text{prediction}}}{\longrightarrow}
    \begin{Bmatrix}
        p\left (x_{t}^n|y_{[t-1]}^n,x_{[0:t]}^{\setminus n}\right ) \qquad\qquad\qquad\qquad\\= \sum\limits_{x_{t-1}^n} p(x_t^n|x_{t-1}) p\left (x_{t-1}^n|y_{[t-1]}^n,x_{[0:t]}^{\setminus n}\right )
    \end{Bmatrix}.
    \end{split}
\end{equation}

The computation of the simulation feedback $f\left (x_{t-1}^n, x_{[0:t-1]}^{\setminus n}\right )$ relies on $p\left (x_{t-1}^n|x_{[0:t-1]}^{\setminus n}\right )$, the posterior distribution of ${x}_{t-1}^n$ given the simulation output as observations. Consequently, this interpretation enables the employment of another forward algorithm to compute recursively these intermediate quantities, where the correction step is given by: 
\begin{equation}\label{eq:ffeed_corr}
    \begin{split}
    &\begin{Bmatrix}
        \prod\limits_{\bar{n} \in [N] \setminus n} p\left (x_t^{\bar{n}}| x_{t-1} \right )\\
        p\left (x_{t-1}^n|x_{[0:t-1]}^{\setminus n}\right )
    \end{Bmatrix}
    \overset{\text{correction}}{\longrightarrow} 
    \begin{Bmatrix}
        p\left (x_{t-1}^n|x_{[0:t]}^{\setminus n}\right ) \qquad\qquad\qquad\qquad\qquad\qquad\\= \frac{\prod\limits_{\bar{n} \in [N] \setminus n} p\left (x_t^{\bar{n}}| x_{t-1} \right ) p\left (x_{t-1}^n|x_{[0:t-1]}^{\setminus n}\right )}{p(x_t^{\setminus n}|x_{[0:t-1]}^{\setminus n})}\\
        p(x_t^{\setminus n}|x_{[0:t-1]}^{\setminus n})\qquad\qquad\qquad\qquad\qquad\qquad\\ = \sum\limits_{x_{t-1}^n} \prod\limits_{\bar{n} \in [N] \setminus n} p\left (x_t^{\bar{n}}| x_{t-1} \right ) p\left (x_{t-1}^n|x_{[0:t-1]}^{\setminus n}\right )
    \end{Bmatrix},
    \end{split}
\end{equation}
and the prediction follows:
\begin{equation}\label{eq:ffeed_pred}
    \begin{split}
    &\begin{Bmatrix}
        p\left(x_t^n|x_{t-1}\right)\\
        p\left (x_{t-1}^n|x_{[0:t]}^{\setminus n}\right )
    \end{Bmatrix}
    \overset{\text{prediction}}{\longrightarrow}
    \begin{Bmatrix}
        p\left (x_{t}^n|x_{[0:t]}^{\setminus n}\right ) = \sum\limits_{x_{t-1}^n} p\left(x_t^n|x_{t-1}\right) p\left (x_{t-1}^n|x_{[0:t]}^{\setminus n}\right )
    \end{Bmatrix}.
    \end{split}
\end{equation}

An iterative application of the aforementioned steps provides a collection of likelihood increments on both the simulation output and the observations, enabling the computation of $p\left (y_{[T]}^n|x_{[0:T-1]}^{\setminus n}\right )$ as follows:
\begin{equation} \label{eq:simulation_likelihood}
\begin{split}
    &p\left (y_{[T]}^n|x_{[0:T-1]}^{\setminus n}\right )= p\left (y_T^n|y^n_{[T-1]},x_{[0:T-1]}^{\setminus n}\right ) \prod_{t \in [T-1]} p\left (y_t^n|y^n_{[t-1]},x_{[0:t]}^{\setminus n}\right ) p\left (x_t^{\setminus n}|y^n_{[t-1]},x_{[0:t-1]}^{\setminus n}\right ),
\end{split}
\end{equation}
where $p\left (y_1^n|y^n_{[0]},x_{[0:0]}^{\setminus n}\right ) \coloneqq p\left (y_1^n|x_{0}^{\setminus n}\right )$ and $p\left (x_1^{\setminus n}|y^n_{[0]},x_{[0]}^{\setminus n}\right )\coloneqq p\left (x_1^{\setminus n}|x_{[0]}^{\setminus n}, \theta\right )$. 

\begin{algorithm*}[httb!]
\caption{SimBa-CL with feedback}\label{alg:simulation_likelihood_with_feedback}
\begin{algorithmic}
\Require $p\left (x_0\right )$, $p\left (x_t|x_{t-1}\right ), p\left (y_t|x_t\right )$ and their factorizations
    \For{\textbf{each }$i \in [P]$}
        \State $x_{0}^{(i),} \sim p\left (x_{0}\right )$
        \For{$t \in [T]$}
            \State  $x_{t}^{(i),} \sim p\left (x_{t}|x_{t-1}^{(i),}\right )$ and compute $f\left (x_{t-1}^n,x_{[0:t]}^{(i),\setminus n} \right)$
            \For{\textbf{each }$n \in [N]$}
             \State if $t\neq T$, run the feedback correction \eqref{eq:wfeed_corr_feed} and get: 
             $
             \begin{Bmatrix}
                p\left (x_{t-1}^n|y_{[t-1]}^n,x_{[0:t]}^{(i),\setminus n}\right )\\
                p\left (x_{t}^{(i),\setminus n}|y_{[t-1]}^n,x_{[0:t-1]}^{(i),\setminus n}\right )
             \end{Bmatrix}
             $
             \State Run the prediction \eqref{eq:wfeed_pred_feed} and get: 
             $
             \begin{Bmatrix}
                 p\left (x_{t}^n|y_{[t-1]}^n,x_{[0:t]}^{(i),\setminus n}\right )
             \end{Bmatrix}
             $
             \State Run the observation correction \eqref{eq:wfeed_corr_obs} and get: 
             $
             \begin{Bmatrix}
                 p\left (x_{t}^n|y_{[t]}^n,x_{[0:t]}^{(i),\setminus n}\right )\\
                 p\left (y_{t}^n|y_{[t-1]}^n,x_{[0:t]}^{(i),\setminus n}\right )
             \end{Bmatrix}
             $
             \State Run the correction \eqref{eq:ffeed_corr} and get:
             $
             \begin{Bmatrix}
                 p\left (x_{t-1}^n|x_{[0:t]}^{(i),\setminus n}\right )\\
                 p(x_t^{(i),\setminus n}|x_{[0:t-1]}^{(i),\setminus n})
             \end{Bmatrix}
             $
             \State Run the prediction \eqref{eq:ffeed_pred} and get: 
             $
             \begin{Bmatrix}
                 p\left (x_{t}^n|x_{[0:t]}^{(i),\setminus n}\right )
             \end{Bmatrix}
             $
        \EndFor
        \EndFor
        \State Compute $p\left (y_{[T]}^n|x_{[0:T-1]}^{(i),\setminus n}\right )$ as in \eqref{eq:simulation_likelihood}
    \EndFor
\State Return $\frac{1}{P} \sum_{i \in [P]} p\left (y_{[T]}^n|x_{[0:T-1]}^{(i),\setminus n}\right )$
\end{algorithmic}
\end{algorithm*}

The final algorithm, named ``SimBa-CL with feedback'', is presented in Algorithm \ref{alg:simulation_likelihood_with_feedback} and it shows a computational complexity of $\mathcal{O}\left ( T N^2 P \mathbf{card}\left (\mathcal{X} \right )^2\right )$, wherein $P$, $T$, $N$ come from looping over number of simulations, time steps and dimensions, $\mathbf{card}\left (\mathcal{X}\right )^2$ comes from marginalizing over the state-space $\mathcal{X}$, and the extra $N$ terms refers to the simulation feedback computation.\\ This cost can potentially be reduced to $\mathcal{O}\left ( T N P \max_n \{ \mathbf{card}\left (\mathbf{Neig}\left (n\right )\right ) \} \mathbf{card}\left (\mathcal{X}\right )^2\right )$ if the transition kernel $p\left (x_t^n|x_{t-1},\theta\right )$ presents some local structure. Precisely, if $p\left (x_t^n|x_{t-1},\theta\right )=p\left (x_t^n|\bar{x}_{t-1},\theta\right )$ for any $x^{\mathbf{Neig}\left (n\right )}_{t-1} = \bar{x}^{\mathbf{Neig}\left (n\right )}_{t-1}$, where $\mathbf{Neig}$ represents a function mapping any $n \in [N]$ onto a set in the power set of $[N]$. This indicates that computing the simulation feedback can be computationally cheaper if the inter-dimension interactions are sparse. Also, the algorithm can be readily parallelized across both the dimensions and the simulations, rendering the dependence concerning $(P, N)$ less heavy in the presence of suitable hardware. 

\subsection{SimBa-CL without Feedback} \label{sec:simba_withoutfeed}

An important aspect of SimBa-CL with feedback is that it targets the true marginals of the likelihood. 
However, one could contemplate a strategy involving the removal of the simulation feedback and design a SimBa-CL that is more computationally efficient, while being ``close'' to SimBa-CL with feedback. 

Looking back at \eqref{eq:simulation_likelihood_with_feedback_subroutine}, omitting the simulation feedback yields to:
\begin{equation} \label{eq:simulation_likelihood_no_feedback_subroutine}
\begin{split}
    &p\left (y_{[T]}^n|x_{[0:T-1]}^{\setminus n}\right )
    \approx \sum_{x_{[0:T]}^{n}} p\left (x_{0}^n\right ) \prod_{t \in [T]} p\left (x_{t}^n|x_{t-1}\right )  p\left (y_t^n|x_t^n\right )
    \eqqcolon \tilde{p}\left (y_{[T]}^n|x_{[0:T-1]}^{\setminus n}\right ),
\end{split}
\end{equation}
from which we have the following marginal likelihood approximation:
\begin{equation} \label{eq:reduced_cost_marginals_no_fedback}
	\begin{split}
	p\left (y_{[T]}^n\right ) 
    \approx \sum_{x_{[0:T-1]}^{\setminus n}} p\left (x_{[0:T-1]}^{\setminus n}\right ) \tilde{p}\left (y_{[T]}^n|x_{[0:T-1]}^{\setminus n}\right )
    \eqqcolon \tilde{p}\left (y_{[T]}^n\right ),
	\end{split}
\end{equation} 
where we emphasized that we are relying on an approximation by using the notation $\tilde{p}$. It can be seen that $\tilde{p}\left (y_{[T]}^n\right )$ is still a proper marginal likelihood, as it sums to one when marginalizing on $\mathcal{Y}$, but it is not the true marginal likelihood. 

Upon scrutinizing \eqref{eq:simulation_likelihood_no_feedback_subroutine}, we can recognize the same likelihood structure as  \eqref{eq:HMM_likelihood}. This time, we are isolating our calculation to a single component $n$ and fixing the others through simulation from the model. This suggests that a simple forward algorithm can be run in isolation on each dimension. 
Concretely, the corresponding forward algorithm will require a prediction step:
\begin{equation}\label{eq:pred_nofeed}
    \begin{split}
        &\begin{Bmatrix}
            p\left(x_{t+1}^n|x_{t}\right)\\
            \tilde{p}\left (x^n_{t}|y^n_{[t]}, x_{[0:t-1]}^{\setminus n}\right )
        \end{Bmatrix} 
        \overset{\text{prediction}}{\longrightarrow}
        \begin{Bmatrix}
            \tilde{p}\left (x^n_{t+1}|y^n_{[t]}, x_{[0:t]}^{\setminus n}\right )\qquad\qquad\qquad\qquad\\ = \sum\limits_{x_{t}^n} p\left(x_{t+1}^n|x_{t}\right)\tilde{p}\left (x^n_{t}|y^n_{[t]}, x_{[0:t-1]}^{\setminus n}\right )
        \end{Bmatrix},
    \end{split}
\end{equation}
and a correction step:
\begin{equation}\label{eq:corr_nofeed}
    \begin{split}
        &\begin{Bmatrix}
            p\left(y_{t+1}^n|x_{t+1}^n\right)\\
            \tilde{p}\left (x^n_{t+1}|y^n_{[t]}, x_{[0:t]}^{\setminus n}\right )
        \end{Bmatrix} 
        \overset{\text{correction}}{\longrightarrow} 
        \begin{Bmatrix}
            \tilde{p}\left (x^n_{t+1}|y^n_{[t+1]}, x_{[0:t]}^{\setminus n}\right )\qquad\qquad\qquad\qquad\\ = \frac{p\left(y_{t+1}^n|x_{t+1}^n\right) \tilde{p}\left (x^n_{t+1}|y^n_{[t]}, x_{[0:t]}^{\setminus n}\right )}{\tilde{p}\left (y^n_{t+1}|y^n_{[t]}, x_{[0:t]}^{\setminus n}\right )}\\
            \tilde{p}\left (y^n_{t+1}|y^n_{[t]}, x_{[0:t]}^{\setminus n}\right )\qquad\qquad\qquad\qquad\\ = \sum\limits_{x_{t+1}^n} p\left(y_{t+1}^n|x_{t+1}^n\right) \tilde{p}\left (x^n_{t+1}|y^n_{[t]}, x_{[0:t]}^{\setminus n}\right )
        \end{Bmatrix}.
    \end{split}
\end{equation}

\begin{algorithm*}[h]
\caption{Simulation likelihood without feedback}\label{alg:simulation_likelihood_no_feedback}
\begin{algorithmic}
\Require $p\left (x_0\right )$, $p\left (x_t|x_{t-1}\right )$, $p\left (y_t|x_t\right )$ and their factorization
 \For{\textbf{each }$i \in [P]$}
    \State $x_0^{(i),} \sim p(x_0)$
    \For{$t \in [T]$}
        \State $x_t^{(i),} \sim p\left (x_t| x_{t-1}^{(i),}\right )$
        \For{\textbf{each }$n \in [N]$}
        \State Run the prediction \eqref{eq:pred_nofeed} and get: 
        $
        \begin{Bmatrix}
            \tilde{p}\left (x^n_{t}|y^n_{[t-1]}, x_{[0:t-1]}^{(i),\setminus n}\right )
        \end{Bmatrix}
        $
        \State Run the correction \eqref{eq:corr_nofeed} and get: 
        $
        \begin{Bmatrix}
            \tilde{p}\left (x^n_{t}|y^n_{[t]}, x_{[0:t-1]}^{(i),\setminus n}\right )\\
            \tilde{p}\left (y^n_{t}|y^n_{[t-1]}, x_{[0:t-1]}^{(i),\setminus n}\right )
        \end{Bmatrix}
        $
        \EndFor
        \EndFor
\EndFor
\State Return $\frac{1}{P} \sum_{i \in [P]} \tilde{p}\left (y_{[T]}^n|x_{[0:T-1]}^{i,\setminus n}, \theta\right )$
\end{algorithmic}
\end{algorithm*}

Recursively applying \eqref{eq:pred_nofeed} and \eqref{eq:corr_nofeed} provides a sequence of approximate marginal likelihood increments which can be then used to approximate the marginal likelihood for a fixed simulation in the usual way:
\begin{equation} \label{eq:simulation_likelihood_no_feed}
    \tilde{p}\left (y_{[T]}^n|x_{[0:T-1]}^{\setminus n}\right ) = \prod_{t \in [T]} \tilde{p}\left (y_t^n|y^n_{[t-1]},x_{[0:t-1]}^{\setminus n}\right ),
\end{equation}
where $\tilde{p}\left (y_1^n|y^n_{[0]},x_{[0:0]}^{\setminus n}\right ) \coloneqq \tilde{p}\left (y_1^n|x_{[0]}^{\setminus n}\right )$. The marginal likelihood approximation is then obtained as the mean of the Monte Carlo approximations, and we named this final algorithm SimBa-CL without feedback, see Algorithm \ref{alg:simulation_likelihood_no_feedback}. 

When comparing Algorithm \ref{alg:simulation_likelihood_no_feedback} with Algorithm \ref{alg:simulation_likelihood_with_feedback}, the simplicity of the latter becomes evident. The computational cost is reduced from $\mathcal{O}\left (T N^2 P \mathbf{card}\left (\mathcal{X}\right )^2 \right )$ to 
$\mathcal{O}\left (T N P \mathbf{card}\left (\mathcal{X}\right )^2\right )$. As with Algorithm \ref{alg:simulation_likelihood_with_feedback}, our new SimBa-CL procedure is parallelizable on both $N$ and $P$. 

\subsection{KL-bound for SimBa-CL with and without Feedback}

To evaluate the impact of excluding the simulation feedback from SimBa-CL, and so understand when is worth including the feedback, a natural approach is to compare the two estimates of the marginal likelihood:
\begin{align} 
    &p\left (y_{[T]}^n\right )=  \sum_{x_{[0:T]}^{\setminus n}} p\left (x_{[0:T]}^{\setminus n}\right )
 \sum_{x_{[0:T]}^n} p\left (x_{[0:T]}^{n}|x_{[0:T]}^{\setminus n}\right ) \prod_{t \in [T]} p\left (y_t^n|x_t^n\right ),\\
    &\tilde{p}\left (y_{[T]}^n\right )= \sum_{x_{[0:T]}^{\setminus n}} p\left (x_{[0:T]}^{\setminus n}\right ))
 \sum_{x_{[0:T]}^n} p\left (x_0^{n}\right ) \prod_{t \in [T]} p\left (x_t^n|x_{t-1},\right ) p\left (y_t^n|x_t^n\right ).
\end{align}
As both $p\left (y_{[T]}^n\right )$ and $\tilde{p}\left (y_{[T]}^n\right )$ are probability distributions on $\mathcal{Y}^{T}$, a simple way of comparing them is to measure the Kullback-Leibler divergence, which we denote with $\mathbf{KL}\left [ p\left (\mathbf{y}_{[T]}^n\right ) || \tilde{p}\left (\mathbf{y}_{[T]}^n\right ) \right ]$. 
The objective of this section is then to establish an upper bound for the $\mathbf{KL}$, demonstrating that under suitable assumptions, in the large-$N$ limit, the ``without feedback'' approximate marginal likelihood is a good approximation of the true marginal likelihood.

Naturally, we must rely on technical assumptions for theoretical results, which we will strive to explain from an intuitive perspective as much as possible. Our result relies on some assumptions, which we now explain.
\begin{assumption}\label{ass:kernel}
For any $n, \bar{n} \in [N]$ and for any $x_t^{\bar{n}} \in \mathcal{X}$, if $x_{t-1}, \bar{x}_{t-1} \in \mathcal{X}^N$ are such that $x_{t-1}^{\setminus {n}}= \bar{x}_{t-1}^{\setminus {n}}$ then:
\begin{equation}
\begin{split}
    &\left | p\left (x_t^{\bar{n}}|x_{t-1}\right ) - p\left (x_t^{\bar{n}}|\bar{x}_{t-1}\right ) \right |\leq \frac{1}{N} \left |d_{n,\bar{n}}\left (x_{t-1}^{{n}}\right ) - d_{n,\bar{n}}\left (\bar{x}_{t-1}^{{n}}\right ) \right |,
\end{split}
\end{equation}
where $d_{n,\bar{n}}:\mathcal{X} \to \mathbb{R}_+$. 
\end{assumption}

Assumption \ref{ass:kernel} ensures the boundness of the transition dynamic when altering the states of only ${n} \in [N]$ at time $t-1$. This essentially asserts that changing the state of a single component at time $t-1$ minimally impacts the dynamics of the other components. In essence, the impact is measured in terms of the function $d_{n,\bar{n}}$, which shows how changes in $n$ affect any other dimension $\bar{n}$. This concept is similar in flavour to the decay of correlation property explained in \cite{rebeschini2015can} and \cite{rimella2022exploiting}, which ensures a weak sensitivity of the conditional distributions on any $n$ given a perturbation on any other dimension. Conceptually, the main emphasis is that the dynamics are only impacted by single dimension-level perturbations at the scale $N^{-1}$. In systems of increasing size with interconnected dimensions, we expect that single dimension-level changes have diminishing effects on the dimension as a whole, which is also intuitively plausible in numerous real-world applications, like individual-based models for epidemiology \citep{rimella2022inference,cocker2023investigating}. Observe that we could redefine $d_{n,\bar{n}}$ as $d_{n,\bar{n}} \slash N$ and essentially mask the $N^{-1}$ scaling. However, we made the $N^{-1}$ dependence explicit. Our theoretical results will depend on the variability of $d_{n,\bar{n}}$, and will give $O(N^{-1})$ error bounds for models where the $d_{n,\bar{n}}$ functions have variance that is bounded as $N$ increases. 
 
\begin{assumption}\label{ass:boundeness}
    For any $n, \bar{n} \in [N]$, if $x_{t-1}, \bar{x}_{t-1} \in \mathcal{X}^N$ are such that $x_{t-1}^{\setminus \bar{n}}= \bar{x}_{t-1}^{\setminus \bar{n}}$ then there exists $ 0 < \epsilon < 1$ such that:
    \begin{equation}
    \begin{split}
        &\sum_{x^{{n}}_t} p\left (x_t^{{n}}|x_{t-1}\right ) \frac{1}{p\left (x_t^{n}|\bar{x}_{t-1}\right )^2} \leq \frac{1}{\epsilon^2}, \quad \text{and} \quad \sum_{x^{{n}}_t} p\left (x_t^{{n}}|x_{t-1}\right ) \frac{1}{p\left (x_t^{n}|\bar{x}_{t-1}\right )^3} \leq \frac{1}{\epsilon^3}.
    \end{split}
    \end{equation}
\end{assumption}

Assumption \ref{ass:boundeness} states that given two initial states $x_{t-1}, \bar{x}_{t-1}$ which differ only in their $n$th element, any transition
which is possible under $x_{t-1}$ is not too unlikely under $\bar{x}_{t-1}$. Note that this assumption is not ruling out absorbing states, but rather ensures that the non-zero elements of $p\left (x_t^{{n}}|x_{t-1}\right )$ will stay not-zero in $p\left (x_t^{n}|\bar{x}_{t-1}\right )$. 



Under assumptions \ref{ass:kernel}-\ref{ass:boundeness} and the additional assumption that the effect of an interaction on a single dimension does not exceed $N$, we can state the following theorem.

\begin{theorem}\label{thm:KL_bound}
If $\left | d_{n,\bar{n}} \left ( x^{n} \right )  - d_{n,\bar{n}} \left ( \bar{x}^{n} \right ) \right | < N$ for any $x^n,\bar{x}^n \in \mathcal{X}$ and assumptions \ref{ass:kernel}-\ref{ass:boundeness} hold, then for any $n \in [N]$:
    \begin{equation}
    \begin{split}
        &\mathbf{KL} \left [ p\left (\mathbf{y}_{[T]}^n\right )||\tilde{p}\left (\mathbf{y}_{[T]}^n\right ) \right ]\leq 
        \frac{a(\epsilon)}{N} \sum_{t \in [T]} 
        \mathbb{E} \left \{ \frac{1}{N}\sum_{\bar{n} \in [N], \bar{n} \neq n} \mathbb{V}ar \left [ d_{n,\bar{n}} \left ( \mathbf{x}_{t-1}^{n} \right ) \Big{|} \mathbf{x}_{[0:t-1]}^{\setminus n} \right ] \right \},
    \end{split}
    \end{equation}
 where $a(\epsilon) \coloneqq 2\left [ \frac{1}{2 \epsilon^2} + \frac{1}{3 \epsilon^3} \right ]$.
\end{theorem}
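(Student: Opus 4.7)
The plan is to reduce the $T$-step KL on $\mathcal{Y}^T$ to a sum of one-step KLs on $\mathcal{X}$, and then to bound each one-step KL via a low-order Taylor expansion, with the resulting second and third moments controlled by Assumptions~\ref{ass:kernel}--\ref{ass:boundeness}.

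First, by the data processing inequality, $\mathbf{KL}[p(\mathbf{y}_{[T]}^n)\|\tilde p(\mathbf{y}_{[T]}^n)]$ is dominated by the KL between the full joint laws on $(\mathbf{y}_{[T]}^n, \mathbf{x}_{[0:T]})$. Since the emissions $\prod_t p(y_t^n|x_t^n)$ and the marginal $p(\mathbf{x}_{[0:T]}^{\setminus n})$ are shared between $p$ and $\tilde p$, only the conditional of $\mathbf{x}_{[0:T]}^n$ contributes, and by the factorisation~\eqref{eq:recursive_feedback_kernel} the log-ratio equals $\sum_t \log f(x_{t-1}^n, x_{[0:t]}^{\setminus n})$. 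A short Bayes' rule calculation identifies $f$ as $p(x_{t-1}^n|x_{[0:t]}^{\setminus n})/p(x_{t-1}^n|x_{[0:t-1]}^{\setminus n})$, so taking the outer expectation under $p$ yields
\[
\mathbf{KL}[p(\mathbf{y}_{[T]}^n)\|\tilde p(\mathbf{y}_{[T]}^n)] \leq \sum_{t\in[T]} \mathbb{E}\bigl[\mathbf{KL}[p(\cdot|\mathbf{x}_{[0:t]}^{\setminus n})\|p(\cdot|\mathbf{x}_{[0:t-1]}^{\setminus n})]\bigr].
\]

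Second, each inner KL equals $\mathbb{E}_{x_{t-1}^n\sim\text{prior}}[\phi(A/B)]$ with $\phi(u) = u\log u - u + 1$, $A(x_{t-1}^n) = \prod_{\bar n\neq n} p(x_t^{\bar n}|x_{t-1})$ and $B = \mathbb{E}[A]$. Since $\mathbb{E}[A/B - 1] = 0$, Taylor expanding $\phi$ around $u = 1$ with integral remainder gives $\phi(u) \leq \tfrac12(u-1)^2 + \tfrac{|u-1|^3}{6\min(1,u)^2}$, splitting the bound into a second- and a third-order piece. I would then symmetrize with an independent copy $\bar x_{t-1}^n$ drawn from the prior, pass to the multiplicative form $A(x_{t-1}^n)/A(\bar x_{t-1}^n) = \prod_{\bar n\neq n}(1+\delta_{\bar n})$, and use Assumption~\ref{ass:kernel} to bound $|\delta_{\bar n}| \leq (s_{\bar n}/N)|d_{n,\bar n}(x_{t-1}^n) - d_{n,\bar n}(\bar x_{t-1}^n)|/p(x_t^{\bar n}|\bar x_{t-1})$. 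Expanding the product to leading order and applying Cauchy--Schwarz reduces both moments to sums over $\bar n\neq n$ of terms containing an $s_{\bar n}^k/N^k$ factor, $|d_{n,\bar n}(x_{t-1}^n) - d_{n,\bar n}(\bar x_{t-1}^n)|^k$ and $1/p(x_t^{\bar n}|\bar x_{t-1})^k$ for $k=2,3$. Averaging over $x_t^{\bar n}$ converts the $1/p^k$ factor into $1/\epsilon^k$ via Assumption~\ref{ass:boundeness}; the symmetrization $\mathbb{E}[(d_{n,\bar n}(\mathbf{x}_{t-1}^n) - d_{n,\bar n}(\bar{\mathbf{x}}_{t-1}^n))^2] = 2\mathbb{V}ar[d_{n,\bar n}(\mathbf{x}_{t-1}^n)|\mathbf{x}_{[0:t-1]}^{\setminus n}]$ produces the variance at order~$2$, while the side condition $|d_{n,\bar n}(x^n) - d_{n,\bar n}(\bar x^n)| < N$ is used to dominate $|d-d'|^3 \leq N(d-d')^2$ and reduce the cubic moment to a variance as well. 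Collecting the two orders yields $a(\epsilon)$ and $s_{\bar n}^{\text{MAX}} = \max\{s_{\bar n}^2, s_{\bar n}^3\}$ as in the statement.

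The hard part will be the bookkeeping in the second step: expanding $\prod_{\bar n\neq n}(1+\delta_{\bar n}) - 1$ cleanly so that the dominating contributions collapse onto the diagonal sum $\sum_{\bar n\neq n}\delta_{\bar n}^2$ (and its cubic analogue) rather than producing a nest of $\binom{N}{2}$-type cross-terms that would ruin the $1/N$ decay. The side condition $|d_{n,\bar n}(x^n) - d_{n,\bar n}(\bar x^n)| < N$ is essential both for keeping the $\delta_{\bar n}$'s uniformly small (so the product expansion converges) and for converting third absolute moments into variances; the two inequalities in Assumption~\ref{ass:boundeness} are tailored exactly to absorb the $1/p^2$ and $1/p^3$ singularities factor-by-factor, and without both of them the third-order remainder could not be handled.
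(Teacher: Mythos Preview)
Your first reduction is correct and matches the paper: data processing plus the factorisation~\eqref{eq:recursive_feedback_kernel} collapses the KL to $\sum_t \mathbb{E}[\log f(\mathbf{x}_{t-1}^n,\mathbf{x}_{[0:t]}^{\setminus n})]$, which is indeed $\sum_t \mathbb{E}\bigl[\mathbf{KL}[p(\cdot\mid\mathbf{x}_{[0:t]}^{\setminus n})\,\|\,p(\cdot\mid\mathbf{x}_{[0:t-1]}^{\setminus n})]\bigr]$.

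The gap is in your second step. You propose to control $\phi(A/B)$ by passing to $A(x_{t-1}^n)/A(\bar x_{t-1}^n)=\prod_{\bar n\neq n}(1+\delta_{\bar n})$ and expanding the product. But Assumption~\ref{ass:boundeness} gives you only \emph{expectation} bounds $\mathbb{E}[1/p^k]\le 1/\epsilon^k$, not a pointwise lower bound on $p(x_t^{\bar n}\mid\bar x_{t-1})$. Hence $|\delta_{\bar n}|=|p-\bar p|/\bar p$ is \emph{not} uniformly small even under the side condition $|d-d'|<N$ (which only controls the numerator), so the product expansion need not converge and the cross-term bookkeeping you flag as ``the hard part'' cannot be carried out under the stated hypotheses. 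The paper sidesteps this entirely: writing
\[
\log f(x_{t-1}^n,x_{[0:t]}^{\setminus n}) \;=\; -\log\!\Bigl(\mathbb{E}_{\bar x_{t-1}^n}\Bigl[\prod_{\bar n\neq n}\tfrac{p(x_t^{\bar n}\mid \bar x_{t-1})}{p(x_t^{\bar n}\mid x_{t-1})}\Bigr]\Bigr)
\;\le\;
\mathbb{E}_{\bar x_{t-1}^n}\Bigl[\sum_{\bar n\neq n}\log\tfrac{p(x_t^{\bar n}\mid x_{t-1})}{p(x_t^{\bar n}\mid \bar x_{t-1})}\Bigr]
\]
by Jensen's inequality on $-\log$. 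This single step linearises the product over $\bar n$ into a \emph{sum} of one-dimensional terms; after averaging over $x_t^{\setminus n}$ each summand becomes $\mathbf{KL}[p(\mathbf{x}_t^{\bar n}\mid x_{t-1})\,\|\,p(\mathbf{x}_t^{\bar n}\mid \bar x_{t-1})]$, and now the Taylor bound and Assumptions~\ref{ass:kernel}--\ref{ass:boundeness} can be applied factor-by-factor with no cross-terms at all. The cubic-to-quadratic reduction via $|d-d'|<N$ and the identification of the variance then proceed exactly as you describe.
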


\begin{proof}
    The proof requires the Data Processing inequality, a Taylor expansion of the function $f\left (z\right )=\log\left (1+z\right )$, and Jensen's inequality. The full proof is available in Section A.2 of the supplementary material.
\end{proof}


From Theorem \ref{thm:KL_bound} we can observe that the approximation improves when: (i) $N$ increases; and (ii) the expected variance of the interaction term across dimensions decreases. Hence, our SimBa-CL without feedback will be more or less the same as the SimBa-CL with feedback whenever we are considering a sufficiently large $N$ and when changing the state on one dimension does not affect the other dimensions too much.

\subsection{SimBa-CL on General Partitions} \label{sec:simba_general}

Up until now, we have implicitly assumed that approximating $p\left (y_{[T]}\right )$ involves a product of marginals across all dimensions. However, it is worth considering that a complete factorization across $[N]$ might not be the optimal choice. 
For example, when considering an epidemic model with households, it may be better to factorize over households rather than individuals, due to the strong dependencies within each household.

Consider a partition $\mathcal{K}$ on $[N]$ and reformulate our likelihood approximation as follows:
\begin{equation} \label{eq:simba_general_part}
\begin{split}
    &p\left (y_{[T]}\right ) \approx \prod_{ K \in \mathcal{K}} p\left (y_{[T]}^K\right ), \quad \text{or} \quad p\left (y_{[T]}\right ) \approx \prod_{ K \in \mathcal{K}} \tilde{p}_{\mathcal{K}}\left (y_{[T]}^K\right ),
\end{split}
\end{equation}
where on the top we have the actual product of the true marginals and on the bottom $\tilde{p}_{\mathcal{K}}$ denotes the generalization of $\tilde{p}$. As for SimBa-CL with and without feedback we can reformulate our marginals and approximate marginals as a simulation from the model followed by an HMM likelihood:
\begin{equation}\label{eq:general_part_with_feed_1}
    \begin{split}
    & p\left (y_{[T]}^K\right ) \coloneqq  \sum_{x_{[0:T]}^{\setminus K}} p\left (x_{[0:T]}^{\setminus K}\right ) \sum_{x_{[0:T]}^K} p\left (x_{[0:T]}^{K}|x_{[0:T]}^{\setminus K}\right ) \prod_{t \in [T]} \prod_{n \in K} p\left (y_t^n|x_t^n\right ),  
    \end{split}
\end{equation}
which corrects according to the interactions inside $K$, and
\begin{equation}\label{eq:general_part_without_feed}
    \begin{split}
    & \tilde{p}_\mathcal{K}\left (y_{[T]}^K\right )\coloneqq \sum_{x_{[0:T]}^{\setminus K}} p\left (x_{[0:T]}^{\setminus K}\right ) \sum_{x_{[0:T]}^K} \prod_{n \in K} p\left (x_0^{n}\right ) \prod_{t \in [T]} p\left (x_t^n|x_{t-1}\right ) p\left (y_t^n|x_t^n\right ), 
    \end{split}
\end{equation}
which does not use any feedback.

As seen in Section \ref{sec:simba_feed} and in Section \ref{sec:simba_withoutfeed},  \eqref{eq:general_part_with_feed_1} aims to approximate the true marginals over $K \in \mathcal{K}$, while \eqref{eq:general_part_without_feed} only offers approximations. Once again, akin to SimBa-CL with feedback, if we want to approximate the true marginals of the likelihood we need some form of simulation feedback. This time, the simulation feedback will be from $x_{[0:t]}^{\setminus K}$ onto $x_{t-1}^K$, as we are considering probability distributions on $K$. 



\begin{table*}[httb!]
    \centering
    \begin{tabular}{llll}
    \hline
     Method & Computational cost\\
    \hline
    Simulation & $\mathcal{O}(N \mathcal{C}_{\mathcal{M}}(1))$\\
    \hline
    SMC & $\mathcal{O}(N \mathcal{C}_{\mathcal{M}}(P) +  N \mathcal{C}_q(P))$\\
    \hline
    SimBa-CL WF & $\mathcal{O}(N \mathcal{C}_{\mathcal{M}}(P) + N^2 P \mathbf{card}(\mathcal{X})^2)$\\
    SimBa-CL WOF & $\mathcal{O}(N \mathcal{C}_{\mathcal{M}}(P) + N P \mathbf{card}(\mathcal{X})^2)$\\
    SimBa-CL WF on $\mathcal{K}$ & $\mathcal{O}(N \mathcal{C}_{\mathcal{M}}(P) + \mathbf{card}(\mathcal{K}) (N-K_{max}) P \mathbf{card}(\mathcal{X})^{2 K_{max}})$\\
    SimBa-CL WOF on $\mathcal{K}$ & $\mathcal{O}(N \mathcal{C}_{\mathcal{M}}(P) + \mathbf{card}(\mathcal{K}) P \mathbf{card}(\mathcal{X})^{2 K_{max}})$\\
    \hline
    \end{tabular}
    \caption{Computational cost for a single time step.  We denote with $\mathcal{C}_{\mathcal{M}}(1)$ the cost of simulating from a single dimension of the model. For the SMC, $\mathcal{C}_q(1)$ is the cost of computing the proposal distribution $q$ (see \cite{rimella2022approximating} for an example). For SimBa-CL, we use ``WF'' to indicate ``with feedback'', ``WOF'' to indicate ``without feedback'', and $K_{max} \coloneqq \max_{K \in \mathcal{K}} \mathbf{card}(K)$.}
    \label{table:comp_cost}
\end{table*}

We can then easily adapt Algorithm \ref{alg:simulation_likelihood_with_feedback} and Algorithm \ref{alg:simulation_likelihood_no_feedback}, transitioning from a full factorization to a factorization on the partition $\mathcal{K}$. The algorithm requires operations on the space $\mathcal{X}^K$ and so a computational cost that is exponential in the maximum number of components contained in $K$. 
More details and theoretical results are available in Section A.3 of the supplementary material.


We refer to Table \ref{table:comp_cost} for a comparison of computational costs for a single time step. The final computational cost of each algorithm is then obtained by multiplying the values in Table \ref{table:comp_cost} by $T$. SMC is used as the computational baseline. We observe that SMC is more expensive to run compared to SimBa-CL whenever the time required to compute the proposal distribution exceeds the time required to run the recursion (with or without feedback) in SimBa-CL. When considering SimBa-CL on partitions, the computational cost becomes linear in the number of partition elements, but exponential in the size of the largest element of the partition (in the worst-case scenario). This is because the dimensions within each element of the partition are now considered jointly. When including the feedback in SimBa-CL on partitions, we must iterate over all the elements external to the partition, i.e. $[N]\setminus K$,  requiring $N-\max_{K \in \mathcal{K}} \mathbf{card}(K)$ (in the worst-case scenario).


\section{Inference and Confidence Sets} \label{sec:var_sens_matrix}


Composite likelihoods are generally obtained as the product of likelihood components, whose structure is dependent on the considered model, see \cite{varin2011overview} for a review of composite likelihood methods. 
Both SimBa-CL with and without feedback calculate a composite likelihood as the product of the marginals of the likelihood. We can use asymptotic theory for composite likelihood to get approximate confidence regions around the maximum composite likelihood estimator. 

The first step is to find the maximum composite likelihood estimator $\hat{\theta}_{CL}$ by maximizing the composite likelihood $\mathcal{L}_{CL}\left (\theta;y_{[T]}\right ) = \prod_{K \in \mathcal{K}} \mathcal{L}^K_{CL}\left (\theta;y^K_{[T]}\right )$ or, equivalently, the composite log-likelihood $\ell_{CL}\left (\theta;y_{[T]}\right ) = \sum_{K \in \mathcal{K}}\ell^K_{CL}\left (\theta;y^K_{[T]}\right )$, where $\mathcal{L}^K_{CL}\left (\theta;y^K_{[T]}\right )$ and $\ell^K_{CL}\left (\theta;y^K_{[T]}\right )$ depends on the considered SimBa-CL.

The second step is to notice that in the composite likelihood literature, we have some form of asymptotic normality for $\hat{\theta}_{CL}$ \citep{varin2008composite}:
\begin{equation}\label{eq:approx_gaussian}
    \hat{\theta}_{CL} \overset{d}{\approx}  \mathcal{N} \left ( \theta, G\left (\theta\right ) ^{-1} \right ),
\end{equation}
where $G\left (\theta\right )$ is the Godambe information matrix \citep{godambe1960optimum}. It then follows that, upon estimating the Godambe information matrix, we can build confidence sets for $\hat{\theta}_{CL}$ as multidimensional ellipsoids.

The final step is to estimate the Godambe information matrix, which is given by $G\left (\theta\right ) = S\left (\theta\right )  V\left (\theta\right )^{-1} S\left (\theta\right )$, decomposed in terms of the sensitivity matrix and the variability matrix:
\begin{equation}
\begin{split}
    &S\left (\theta\right ) = \mathbb{E}_{\theta} \left \{ -\Hessian_{\theta} \left [ \ell_{CL}\left (\theta;\mathbf{y}_{[T]}\right ) \right ] \right \} \quad \text{and} \quad V\left (\theta\right ) = \mathbb{V}\text{ar}_{\theta} \left \{ \nabla_{\theta} \left [ \ell_{CL}\left (\theta;\mathbf{y}_{[T]}\right ) \right ] \right \},
\end{split}
\end{equation}
where $\Hessian_{\theta}$ and $\nabla_{\theta}$ are the Hessian and the gradient with respect to $\theta$. Given that we can compute the Hessian and the gradient given $\theta,{y}_{[T]}$, we can also estimate expectation and the variance via simulations from the model (expected information). More details on this aspect of SimBa-CL along with approximating $S\left (\theta\right )$ and $V\left (\theta\right )$ with the actual observations (observed information) are discussed in Section B of the supplementary material, with also more experiments available in Section C. 


\subsection{Bartlett Identities} \label{sec:Bartlett_identities}

The computation of the Hessian can be resource-intensive and variance estimation might be noisy. We can then simplify the form of the sensitivity and variability matrix by invoking the first and second Bartlett identities. When considering SimBa-CL with feedback the identities hold asymptotically in the number of Monte Carlo samples $P$, while for SimBa-CL without feedback they hold only approximately. The sensitivity matrix and variability matrix can be then reformulated as:
\begin{equation}
\begin{split}
    &S\left (\theta\right )= 
    \sum_{K \in \mathcal{K}} \mathbb{E}_{\theta} \left [ \nabla_{\theta} \ell^K_{CL}\left (\theta;y^K_{[T]}\right ) \nabla_{\theta} \ell^K_{CL}\left (\theta;y^K_{[T]}\right )^\top \right ],\\
    &V\left (\theta\right )= \sum_{K, \tilde{K} \in \mathcal{K}} \mathbb{E}_{\theta} \left [ \nabla_{\theta} 
     \ell^K_{CL}\left (\theta;y^K_{[T]}\right ) \nabla_{\theta}
     \ell^{\tilde{K}}_{CL}\left (\theta;y^{\tilde{K}}_{[T]}\right )^\top \right ],
\end{split}
\end{equation}
where $=$ becomes $\approx$ if we do not include the feedback, and where both matrices can be once again estimated via simulation. More details are available in Section B.3 of the supplementary material.


\section{Limitations and extensions} \label{sec:limit_ext}

It is evident from Table \ref{table:comp_cost} that considering SimBa-CL on a partition $\mathcal{K}$ results in a computational cost that is linear in the number of elements in the partition but exponential in the size of the elements within the partition, which quickly becomes infeasible. As shown in Section \ref{sec:KL_div_exp} and Section \ref{sec:like_surfaces_exp}, choosing a coarser partition does not affect performance when considering a model with dimensions that exhibit low correlation. This is the case, for instance, when using an individual-based model in epidemiology, where individuals have similar behaviours, e.g. they are homogeneous mixing \citep{ju2021sequential,rimella2022approximating}. However, it is common for dimensions to cluster together in terms of correlation. For example, again in an individual-based model for epidemiology, individuals are often assigned to households and have a high likelihood of being infected by others within the same household \citep{rimella2022inference}. In such scenarios, the SimBa-CL approach described in Section \ref{sec:simba_general} with $\mathcal{K}$ representing the partition induced by the households is preferable, but comes with an increased computational cost. As a solution, one could further partition households into smaller subsets and implement localized feedback to adjust accordingly. This hybrid SimBa-CL would eliminate feedback across elements of the partition while retaining feedback within the elements of the partition, which could potentially correct the approximation.

The factorial structure shown in \eqref{eq:model_factorization} restricts SimBa-CL to discrete-time models with possibly unevenly timed observations. If we are interested in continuous-time models and the instantaneous dynamics are independent across dimensions given the current state, an Euler approximation of the continuous-time dynamics can be used to construct the discrete-time model. However, if the time between observations is large, such an Euler approximation may not be accurate. In such cases, we can use a sufficiently small discrete-time step to ensure the accuracy of the Euler approximation, with the additional time steps corresponding to points where no observations are available. This approach results in several prediction steps without correction, followed by a correction step when data are observed.

Unconditional simulation from the model has the benefit of making SimBa-CL almost as computationally efficient as simulating directly from the model. Indeed, after the simulation phase, only simple marginalizations on $\mathcal{X}$ are required, which can be even parallelized across the $N$ dimensions. However, the simulation procedure does not incorporate information from the data, which makes it possible for SimBa-CL to generate realizations of $\mathbf{x}_{[0:t]}$ that are inconsistent with the data and might affect its performance via the transition kernel. Fortunately, Assumption \ref{ass:boundeness} guarantees that all the non-zero elements of the transition kernel remain non-zero, ruling out the possibility of SimBa-CL collapsing to zero likelihood estimates. That said, mismatches can still occur if there is significant uncertainty about the trajectory of the latent process. However, increasing the number of simulations $P$ reduces the probability of such mismatches. Note that we do not imply that there are no advantages to incorporating data information during the simulation process.  On the contrary, the closer the simulations are to the actual hidden trajectories, the better the performance \citep{whiteley2014,rimella2022approximating}. \\
One option, for example, could be to simulate from ${p}\left (x^n_{t}|y^n_{[t]}, x_{[0:t]}^{\setminus n}, \theta \right )$ for SimBa-CL with feedback, or from $\tilde{p}\left (x^n_{t}|y^n_{[t]}, x_{[0:t-1]}^{\setminus n}, \theta \right )$ for SimBa-CL without feedback. Including the observations should make the simulations more consistent with the data, possibly reducing both the bias and the variance of SimBa-CL. From a theoretical perspective, this could even result in a better bound compared to the one from Theorem \ref{thm:KL_bound}, which currently increases linearly with time. We provide a small experiment on this ``conditional'' SimBa-CL in Section C.6 of the supplementary material. \\
Alternatively, one could even attempt to avoid simulations from the model altogether by performing the prediction step with the modified transition kernel $\prod_{n \in [N]} p\left (x^n_t|x_{t-1}^n, \mathbb{E}_{\rho^{\setminus n}}\left [\mathbf{x}_{t-1}^{\setminus n} \right ], \theta \right )$, i.e. substituting the contribution from ${x}_{t-1}^{\setminus n}$ with its expected contribution under a distribution $\rho^{\setminus n}$, which could be either unconditional or conditional on the data. This approach would still exploit the factorial structure \eqref{eq:model_factorization}, and the computational cost could be further reduced whenever closed-form solutions are available \citep{rimella2025CAL}.

It can be observed from Section \ref{sec:simba_feed} and Section \ref{sec:simba_withoutfeed} that both SimBa-CL with and without feedback require simple operations that combine the initial distribution, transition kernel, and emission distribution. If these three quantities are continuous with respect to the parameters $\theta$, then, given the Monte Carlo simulations, it is possible to compute gradients via automatic differentiation. This process is readily implemented in multiple Machine Learning libraries and can be combined with any gradient descent technique to optimize the parameters \citep{zeiler2012adadelta,kingma2014adam}. However, the Monte Carlo simulation itself depends on the parameters $\theta$, which complicates the differentiation process. To be more precise, we consider quantities of the form $\sum_{n \in [N]} \log \left ( p(y_{[T]}^n|\theta)\right )$, with $\tilde{p}$ for SimBa-CL without feedback, meaning that we need to compute:
\begin{equation}\label{eq:diff_exp}
    \begin{split}
        &\frac{\partial p(y_{[T]}^n|\theta)}{\partial \theta}
        = \frac{\partial}{\partial \theta} \mathbb{E}_{\mathcal{M}} \left [ p\left (y_{[T]}^n|\mathbf{x}_{[0:T-1]}^{\setminus n},\theta\right ) \right ],
    \end{split}
\end{equation}
where $\mathcal{M}$ stands for simulating from the model. In our experiments we consider the approximation:
\begin{equation}
    \begin{split}
        \frac{\partial p(y_{[T]}^n|\theta)}{\partial \theta}
        &\approx \mathbb{E}_{\mathcal{M}} \left [ \frac{\partial}{\partial \theta} p\left (y_{[T]}^n|\mathbf{x}_{[0:T-1]}^{\setminus n},\theta\right ) \right ].
    \end{split}
\end{equation}
which computes derivatives by conditioning on the results of the Monte Carlo simulation, and essentially approximates the derivative of an expectation with the expectation of a derivative. On the one hand, this approximation reduces computational complexity by ignoring part of the derivative. On the other hand, it produces biased estimates of the gradient.\\
As we require simulation from a finite state-space and therefore simulation from categorical random variables, we could alternatively use a continuous relaxation of categorical random variables known as the Gumbel-Softmax trick \citep{maddison2016concrete,jang2016categorical}. This technique substitutes the $\arg\max$ of i.i.d. Gumbel random variables \citep{gumbel1954statistical} with a Softmax function. In this way, sampling from a categorical distribution is expressed as a continuous function of i.i.d. random variables and suitable to the reparameterization trick \citep{blundell15}, which allows to exchange expectations with derivatives without any approximation. In our case we get:
\begin{equation}
    \begin{split}
        \frac{\partial p(y_{[T]}^n|\theta)}{\partial \theta}
        &\approx
        \mathbb{E}_{\mathcal{G}} \left [ \frac{\partial }{\partial \theta}p\left (y_{[T]}^n|s(\mathbf{G},\theta,\tau),\theta\right ) \right ],
    \end{split}
\end{equation}
where $\mathbf{G}$ is a collection of i.i.d. Gumbel random variables with distribution $\mathcal{G}$, and $s$ is a composition of continuous functions involving the Gumbel random variables, the parameter $\theta$, and the temperature parameter $\tau$.\\
Even though this approximation produces unbiased gradient estimates as $\tau \to 0$, it is computationally heavier and challenging to use when considering large $P$ and when computing confidence regions. We refer to Section C.6 of the supplementary material for a detailed discussion and additional experiments.


\section{Experiments}

The experiments centre on the field of epidemiological modelling, and specifically they focus on individual-based models (IBMs). Individual-based models arise when we want to describe an epidemic from an individual perspective. The complexity of IBMs lies in their high-dimensional state-space, making closed-form likelihood computationally unfeasible. However, these models satisfy the factorization outlined in \eqref{eq:model_factorization}, making them the perfect candidates for our SimBa-CL.

SimBa-CL is implemented in Python using the library TensorFlow and is available at the GitHub repository:\\ https://github.com/LorenzoRimella/SimBa-CL. All the experiments were run on a 32gb Tesla V100 GPU available on ``The High End Computing'' (HEC) facility at Lancaster University. For gradient computation we use TensorFlow's default automatic differentiation, see Section \ref{sec:limit_ext} for a deeper discussion.

In Section C of the supplementary material, interested readers can find more details about the experiments reported in the paper, as well as additional experiments on the following topics: a spatial SIS individual-based model; a comparison between SimBa-CL and its extensions; the role of $P$ in terms of the quality of the approximation.

\begin{table*}
    \centering
    \begin{tabular}{ccccccc}
    \hline
     $N$ & $1000$ &  $100$  & $100$  & $100$   & $100$  & $10$ \\ 
     $\beta_0$ & base & high & low  &  base & base & base \\             
     $\iota$ & base & base &  base &  base & low  & base \\        
     $\mathbf{KL}$ on feedback  & 0.3 (0.2) & 7.5 (3.2) & 0.8 (0.4) & 5.9 (2.9) & 1.4 (1) & 49.6 (17.2) \\ 
     $\mathbf{KL}$ on partition  & 1.4 (0.3) & 36.1 (6.6) & 3 (0.9) & 35.1 (7) & 5.6 (2) & 177.3 (36.5) \\
    \hline
    \end{tabular}
    \caption{Comparing empirical KL between SimBa-CL and under different scenarios. All the numerical values have been multiplied by $10^9$ to improve visualization.}
    \label{tab:KL_comparison}
\end{table*}

\subsection{The Effect of Feedback and Partition's Choice on SimBa-CL} \label{sec:comparison_simba}

In this section we perform a cross-comparison on a susceptible-infected-susceptible (SIS) IBM on SimBa-CL with feedback on $\mathcal{K}=\{\{1\},\dots,\{N\}\}$ (``fully factorized SimBa-CL with feedback''), SimBa-CL without feedback on $\mathcal{K}=\{\{1\},\dots,\{N\}\}$ (``fully factorized SimBa-CL without feedback''), and SimBa-CL without feedback on $\mathcal{K}=\{\{1,2\},\dots,\{N-1,N\}\}$ (``coupled SimBa-CL without feedback''), where $N$ is assumed to be even. 

\subsubsection{Model} \label{sec:model_SIS}

Building upon the framework introduced by \cite{ju2021sequential} and \cite{rimella2022approximating}, where $n$ represents an individual and $w_n$ is an observed vector of covariates. We consider a $p\left (x_0^n|\theta\right )$ with an initial probability of infection of ${1}\slash {1+\exp{\left (-\beta_0^\top w_n\right )}}$ and a transition kernel $p\left (x_t^n|x_{t-1}, \theta\right )$ with a probability of transitioning from S to I of \\ $1- \exp \left [ { - \lambda_n \left ( {\sum_{n \in [N]} \mathbb{I}\left (x_{t-1}^n=2\right ) }\slash {N} + \iota \right ) } \right ]$ and a probability of transitioning from I to S of $1-\exp \left (- \gamma_n \right)$, where $\lambda_n = {1}\slash{(1+\exp{\left (-\beta_{\lambda}^\top w_n\right )})}$ and $\gamma_n = {1} \slash {(1+\exp{\left (-\beta_{\gamma }^\top w_n\right )})}$ and with $w_n, \beta_0,\beta_\lambda,\beta_\gamma \in \mathbb{R}^2$. Moreover we consider $p\left (y_t^n|x_{t}^n, \theta\right ) = q^{x_t^n} \mathbb{I}\left (y_t^n \neq 0\right ) +\left (1- q^{x_t^n}\right ) \mathbb{I}\left (y_t^n = 0\right )$, with $q \in [0,1]^2$.
Unless specify otherwise, our baseline model employs $N=1000$, $T= 100$, $w_n$ to be such that $w_n^1 = 1$ and $w_n^2 \sim \mathbf{Normal}\left (0,1\right )$, and the data-generating parameters $\beta_0 = [-\log\left (\left (1 \slash 0.01\right )-1\right ), 0]^\top$, $\beta_{\lambda} = [-1, 2]^\top$, $\beta_{\gamma} = [-1, -1]^\top$, $q = [0.6, 0.4]^\top$ and $\iota=0.001$. It is also important to mention that the considered SIS satisfies Assumption \ref{ass:kernel} and Assumption \ref{ass:boundeness}, see Section A.1 of the supplementary material.

\begin{figure*}
    \centering
    \includegraphics[width=\textwidth]{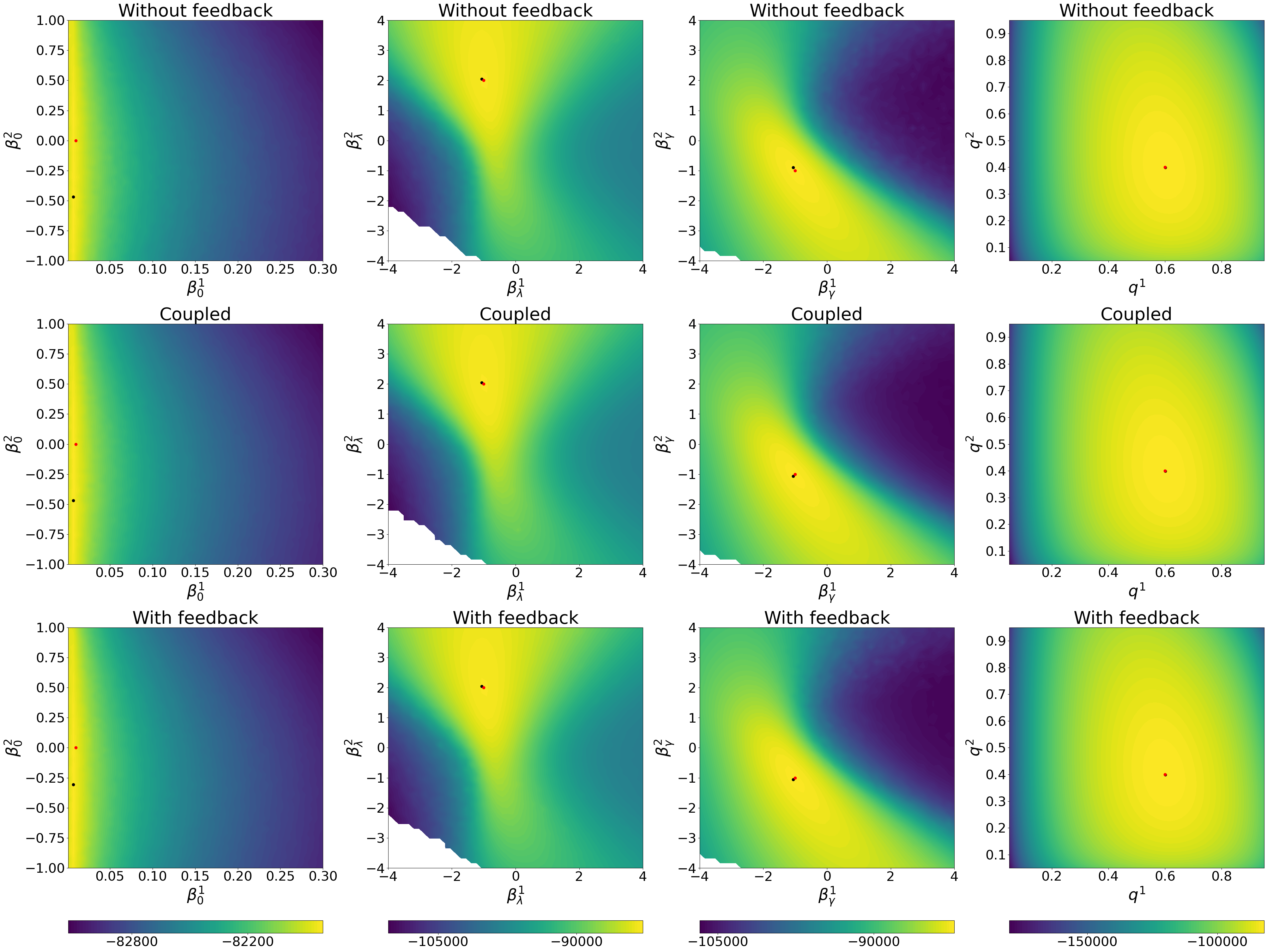}
    \caption{Profile log-likelihood surfaces for $\beta_0,\beta_\lambda,\beta_\gamma,q$. From top to bottom, fully factorized SimBa-CL without feedback, coupled SimBa-CL without feedback, and fully factorized SimBa-CL with feedback. Dots locate the data-generating parameter and the maximum on the grid. }
    \label{fig:log_like_surfaces}
\end{figure*}

\subsubsection{Empirical evaluation of the Kullback-Leibler divergence} \label{sec:KL_div_exp}

We start by comparing our SimBa-CL methods in terms of empirical Kullback-Leibler divergence (KL) on a set of simulated data. Different settings are considered: an increasing population size $N=[10,100,1000]$; either ``high'' $\beta_0 = [-6.9, 0]^\top$ or ``base'' $\beta_0 = [-4.60, 0]^\top$ or ``low'' $\beta_0 = [-2.20, 0]^\top$, i.e. around either $0.1\%$ or $1\%$ or $10\%$ of initial infected; and ``base'' and ``low'' $\iota=[0.001,0.01]$. Note that different $\beta_0$ and $\iota$ control the variance of the process, as having more infected at the beginning of the epidemic or including more environmental effects results in an epidemic that is closer to the equilibrium. We set $P=1024$ for all scenarios and for both SimBa-CL with feedback, SimBa-CL without feedback and coupled SimBa-CL without feedback. All the versions of SimBa-CL were run under the data-generating parameter. 


Table \ref{tab:KL_comparison} reports the mean and standard deviations of the empirical KL per each scenario. Focusing on the fourth row, which contrasts the fully factorized SimBa-CL with and without feedback, we can notice that: increasing $N$ decreases the KL, and decreasing the variance decreases the KL. Similar conclusions can be drawn for the fifth row, which compares the fully factorized SimBa-CL without feedback with the coupled SimBa-CL without feedback. These comments suggest less and less differences across the methods when increasing $N$ and decreasing the variance, which is in line with our theoretical results. 

\subsubsection{Comparing likelihood surfaces} \label{sec:like_surfaces_exp}

We proceed to undertake a comparison of profile likelihood surfaces for the baseline SIS IBM using the following protocol: (i) choose one among $\beta_0$, $\beta_{\lambda}$, $\beta_{\gamma}$ and $q$; (ii) simulate using the baseline model, and ensure at least 10 infected in the epidemic realization; (iii) create a bi-dimensional grid on the chosen parameter; (iv) per each element of the grid compute our SimBa-CL methods by fixing the other parameters to their real values. As for the previous section, we set $P=1024$ when computing all surfaces and for both SimBa-CL with feedback, SimBa-CL without feedback and coupled SimBa-CL without feedback. 

The outcomes of this experiment are illustrated in Figure \ref{fig:log_like_surfaces}. Interestingly, all the considered SimBa-CL exhibit a consistent shape, meaning that, in the SIS scenario, including the feedback or choosing a coarser partition has a limited impact on the overall likelihood. Furthermore, it becomes evident that all these methods effectively recover the data-generating parameter, except for $\beta_0$. Here there is an obvious identifiability issue with $\beta_0$ as $\beta_0^2=0$ implies covariate $w_n^2$ to not be used. However, given that $w_n^2$ is random, we could have more initial infection associated with $w_n^2<0$, which gives a higher likelihood to models where $\beta_0^2<0$. Similar reasoning can be replicated for $w_n^2>0$, which explains the symmetry of the likelihood surface of $\beta_0$ on the vertical axis.

\begin{figure*}
    \centering
    \includegraphics[width=0.8\textwidth]{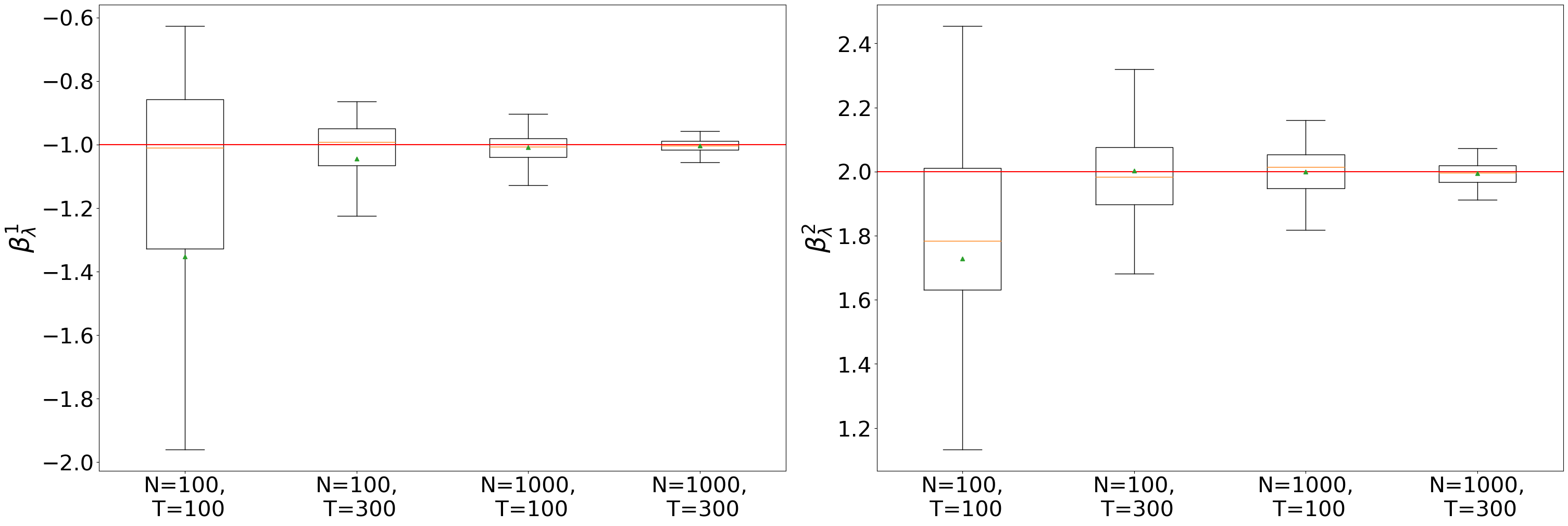}
    \caption{Box plots on the optimized $\beta_\lambda$. On the left, $\beta_\lambda^1$, on the right, $\beta_\lambda^2$. Horizontal solid lines within the boxes show the medians and triangles are used for the means. Horizontal solid lines show the true parameters.}
    \label{fig:bi_dim_beta_lambda_boxplot}
\end{figure*}

\subsection{Asymptotic Properties of SimBa-CL} \label{sec:coverage}

We can now turn to the problem of computing the maximum composite likelihood estimator and the corresponding confidence sets. Remark that for a sufficiently ``regular'' model and a sufficiently large $N$, including the simulation feedback and using a coarser partition bears marginal significance for SimBa-CL. We hence narrow our studies to the asymptotic properties of the fully factorized SimBa-CL without feedback. As a toy model, we consider again the SIS IBM described in Section \ref{sec:model_SIS}.

Crucially, it is worth emphasizing once more that gradients and Hessians can be computed through automatic differentiation as we have implemented SimBa-CL using TensorFlow, see Section \ref{sec:limit_ext}. This computational capability grants us access to the estimates of the variability and sensitivity matrix, as from Section \ref{sec:var_sens_matrix}, thereby avoiding any manual derivation of derivatives.

Remark that in both Section \ref{subsec:2dim_optim} and Section \ref{subsec:9dim_optim}, we set $P=500$ for parameter optimization and $P=100$ for computing confidence regions. These values of $P$ were selected to avoid out-of-memory errors.

\subsubsection{Maximum Simba-CL convergence and coverage in two dimensions} \label{subsec:2dim_optim}

We start our exploration by looking at the bi-dimensional parameter $\beta_\lambda$, given all the other parameters fixed to their baseline values. The hope is that $\beta_\lambda$ is easily identifiable as it highly influences the evolution of the epidemics, and so we can test the asymptotic properties on a well-behaved parameter. 

To explore the asymptotics of SimBa-CL we investigate four scenarios with an increasing amount of data: (i) $N=100, T=100$; (ii) $N=100, T=300$; (iii) $N=1000, T=100$; (iv) $N=1000, T=300$. Per each scenario, we simulate $100$ epidemics and per each dataset we optimize $\beta_\lambda$ through Adam optimization \citep{kingma2014adam}, aiming to minimize the negative log-likelihood. After optimization, we have a sample of $100$ bi-dimensional parameters per each scenario, which can be turned into box plots as shown in Figure \ref{fig:bi_dim_beta_lambda_boxplot}. As both $T$ and $N$ increase, we can observe an evident shrinkage towards the true parameter, suggesting consistency of the maximum SimBa-CL estimator when $N$ and $T$ increase. 

\begin{figure*}
    \centering
    \includegraphics[width= \textwidth]{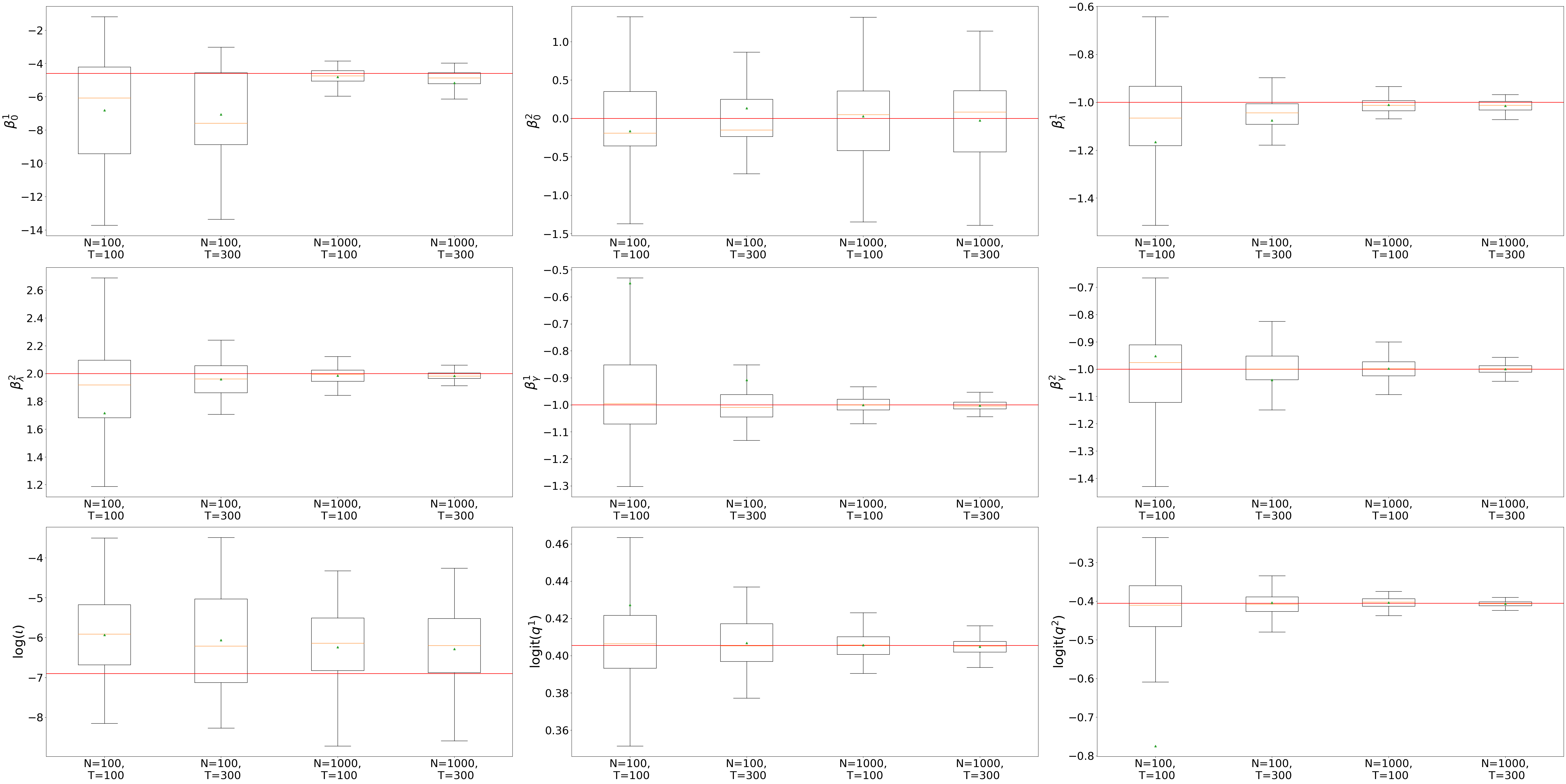}
    \caption{Box plots on the optimized $\theta = \left (\beta_0, \beta_\lambda, \beta_\gamma, q, \iota\right )$. Parameter labels are reported on the y-axis. Horizontal solid lines within the boxes show the medians and triangles are used for the means. Horizontal solid lines show the true parameters.}
    \label{fig:9dim_asympt}
\end{figure*}

\begin{table*}
    \centering
    \begin{tabular}{lccccc}
     Parameter & $\beta_0$ & $\beta_\lambda$ & $\beta_\gamma$ & $q$ & $\iota$  \\
    \hline
    Without Bartlett & $0.17 \text{ and }0.05$ & $0.61 \text{ and }0.87$ & $0.8 \text{ and }1.$ & $0.87 \text{ and } 0.5$ & $0.02$\\ 
    With Bartlett & $0.98 \text{ and } 0.89$ & $0.99 \text{ and } 0.75$ & $0.97 \text{ and } 0.97$ & $1. \text{ and } 0.98$ & $0.92$\\
    \hline
    \end{tabular}
    \caption{Empirical coverage per each parameter when computing the Godambe information matrix with and without the approximate Bartlett identities. Whenever the parameter is bi-dimensional the coverage per each component is reported in the same cell separated by ``$\text{and}$''.}
    \label{tab:9dim_coverage}
\end{table*}

Taking the investigation a step further, we analyse the empirical coverage of confidence sets built as explained in Section \ref{sec:var_sens_matrix}. We consider $N=1000, T=300$, and the optimized parameters from the previous experiment. We start by calculating the Godambe information matrix without using the Bartlett identities, and we build $95\%$ 2-dimensional confidence sets for our parameter $\beta_\lambda$. The procedure results in a coverage of $1$ and so an overestimation of uncertainty. However, when repeating the same procedure using the Bartlett identities, the coverage now aligns with the theoretical coverage of $0.95$. This favourable outcome can be attributed to less noisy estimates, as we are exploiting the factorization in the model and so computing expectations on lower dimensional spaces, see Section C.3 of the supplementary materials.

\subsubsection{Maximum Simba-CL convergence and coverage in nine dimensions}\label{subsec:9dim_optim}

Transitioning to a substantially more intricate scenario, we now estimate all the parameters of the model $\theta = \left (\beta_0, \beta_\lambda, \beta_\gamma, q, \iota\right )$. Analogous to the $2$-dimensional case, we simulate from the model $100$ times, and per each simulation, we optimize the parameters using Adam optimizer. The outcomes are reported in Figure \ref{fig:9dim_asympt}.

Foremost, it becomes apparent that increasing the value of $T$ does not influence $\beta_0$. This arises because observations in the later periods carry scarce information on the initial condition. Moreover, recall that $w_n^2 \sim \mathbf{Normal}\left (0, 1\right )$ and $\beta_0^2=0$. This makes the parameter not identifiable, as commented in Section \ref{sec:comparison_simba}. This ill-posed model definition implies that increasing $N$ will not improve the uncertainty around our estimate as $\beta_0^2<0$ and $\beta_0^2>0$ can be equally likely, while the unbiasedness is preserved due to the symmetry of the set of equally likely parameters. At the same time, the parameters $\iota$ and $\beta_\lambda$ are also hard to identify as smaller (or bigger) estimates of $\beta_\lambda$ will lead to bigger (or smaller) estimates of $\iota$. Indeed, $\beta_\lambda$ governs the infection rates from the community, while $\iota$ represents the environmental effect. It is then clear that generating an epidemic from $\beta_\lambda,\iota$ is equivalent to generating one by decreasing $\beta_\lambda$ and increasing accordingly $\iota$. This correlation is especially vivid in Figure \ref{fig:9dim_asympt}, as an overestimation of $\iota$ (log-scale) leads to an underestimation of $\beta_\lambda$.

Clearly, in a $9$-dimensional scenario, the process of recovering empirically the theoretical coverage is substantially more complicated. Jointly, we find $0$ coverage of the $9$-dimensional $95\%$ confidence sets, irrespective of whether the Bartlett identities are employed or not. We then compute the confidence intervals on single parameters by marginalizing the $9$-dimensional Gaussian distribution. Marginally, we find that using the approximate Bartlett identities improves the coverages, see Table \ref{tab:9dim_coverage} for numerical values, with comments similar to the previous section for the quality of the estimates. 

\subsection{Comparing SimBa-CL with Sequential Monte Carlo} \label{sec:comparison_SMC}

As SimBa-CL methods provide biased estimates of the likelihood, the objective of this section is to compare SimBa-CL methods with both sequential Monte Carlo (SMC) and block sequential Monte Carlo (BSMC) algorithms. While SMC provides an unbiased particle estimate of the likelihood \citep{Chopin2020}, this quantity can suffer high variance in high-dimensional scenarios. On the other hand, BSMC deals with the curse of dimensionality by providing a factorized, albeit biased, particle estimate of the likelihood \citep{rebeschini2015can}. 

Building upon the previous sections, we compare SMC and BSMC with fully factorized SimBa-CL without feedback. Regarding the SMC comparison, we consider two approaches: the auxiliary particle filter (APF) and the SMC with proposal distribution given by \cite{rimella2022approximating}. Due to the curse of dimensionality, we expect poor performances of the APF, hence we include the Block APF in our analysis. The block APF works as the Block particle filter \citep{rebeschini2015can}, a BSMC algorithm, but it proposes particles according to the transition kernel informed by the current observation. 

\subsubsection{Model}

In the subsequent sections, we work again on the SIS IBM from Section \ref{sec:model_SIS}. However, we also analyse a susceptible-exposed-infected-removed (SEIR) IBM. Specifically, we still have some bi-dimensional covariates $w_n$, while $p(x_0^n|\theta)$ is now $4$-dimensional with the second and the fourth components being zero and the first and the third components being as in SIS IBM. Similarly, the transition kernel $p(x_t^n|x_{t-1}, \theta)$ is a $4$ by $4$ matrix with the same dynamics of SIS IBM when considering transitions from S to E and from I to R, with the addition of a transition E to I with probability $1-\exp(-\rho)$. Unless specified otherwise, we consider as the baseline model the one with $N=1000$, $T= 100$ and the data-generating parameters set to $\beta_0 = [-\log\left (\left (1 \slash 0.01\right )-1\right ), 0]^\top$, $\beta_{\lambda} = [-1, 2]^\top$, $\rho=0.2$, $\beta_{\gamma} = [-1, -1]^\top$, $q = [0,0,0.6, 0.4]^\top$. Observe, that the first two elements of $q$ are both zero, meaning that we do not observe any susceptible and exposed.

\subsubsection{SimBa-CL and SMC for an individual-based SIS model}

We consider the baseline SIS model with $N=1000$, and precisely: we generate the data; we run SimBa-CL and the baseline algorithms $100$ times on the given data using the data-generating parameters; and we estimate the mean and standard deviation of the log-likelihood. The results are reported in Table \ref{tab:N1000_SIS}. 

\begin{table*}[httb!]
    \centering
    \begin{tabular}{lcccc}
     P                       & 512               & 1024              & 2048              & Time (sec)\\
    \hline
    APF              & -81103.37 (46.04) & -81046.57 (49.65) & -80976.34 (36.08) & 1.05s \\
    h=5              & -79551.92 (1.79)  & -79552.24 (1.6)   & -79552.81 (1.57)  & 3.78s \\
    h=10             & -79551.9 (1.81)   & -79552.22 (1.47)  & -79553.01 (1.56)  & 5.61s \\
    Block APF        & -79565.69 (5.84)  & -79558.44 (3.95)  & Out of memory     & 2.97s \\
    SimBa-CL           & -79612.74 (3.4)   & -79612.31 (2.37)  & -79612.34 (1.55)  & 1.03s \\
    \hline
    \end{tabular}
    \caption{Log-likelihood means and log-likelihood standard deviations for the baseline SIS model with $N=1000$. $h$ is the number of future observations included in \cite{rimella2022approximating} ($h=0$ correspond to APF).}
    \label{tab:N1000_SIS}
\end{table*}

\begin{figure*}[httb!]
    \centering
    \includegraphics[width=\textwidth]{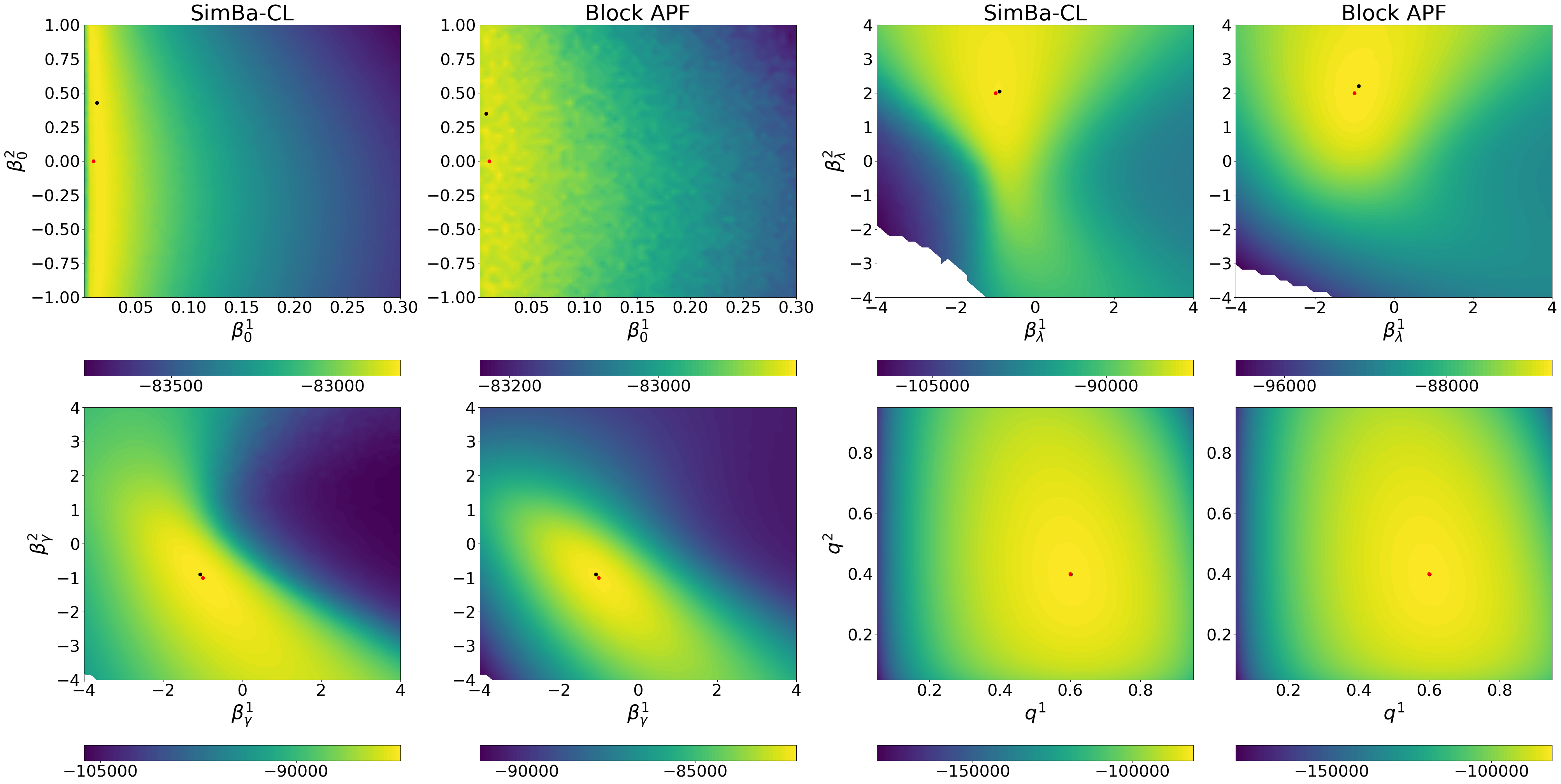}
    \caption{Profile log-likelihood surfaces for $\beta_0,\beta_\lambda,\beta_\gamma,q$ from fully factorized SimBa-CL without feedback (first and third column) and Block APF (second and the fourth columns). Dots are used for the data-generating parameter and the maximum on the grid. }
    \label{fig:simba_vs_block}
\end{figure*}

Notably, the method proposed by \cite{rimella2022approximating} emerges as the best method in terms of log-likelihood mean and variance, as it yields unbiased estimates of the likelihood and reduces the variance. Our SimBa-CL exhibits superior computational efficiency, with a running time that is also almost three times faster than vanilla Block APF. It is worth noting that even though the bias of SimBa-CL is significant the log-likelihood variance is considerably lower than vanilla APF and comparable with the Block APF. 
It can be observed that the Block APF run out of memory when increasing the number of particles. This is attributed to the necessity of maintaining $P$ particles for each individual and, at the same time, performing a multinomial resampling for each of these particles. 

We also run a paired comparison on the profile likelihood surfaces as in Section \ref{sec:comparison_simba}, and report them in Figure \ref{fig:simba_vs_block}. It can be noticed that both  SimBa-CL and Block APF generate similar likelihood surfaces whose maxima are close to the data-generating parameter. Both $P$ and the number of particles were set to $1024$. Unfortunately, we could not include in our studies the SMC from \cite{rimella2022approximating} for computational reasons.

\begin{table*}[httb!]
    \centering
    \begin{tabular}{lcccc}
     P                       & 512              & 1024             & 2048             & Time (sec)\\
    \hline
     APF               & Failed            & Failed            & Failed            & 1.2s \\
     h=5               & -43447.56 (52.04) & -43419.52 (51.08) & -43391.0 (52.41)  & 4.44s \\
     h=20              & -43004.55 (5.38)  & -43001.9 (4.65)   & -42999.76 (3.7)   & 11.08s \\
     h=50              & -42999.93 (3.44)  & -42998.13 (2.72)  & -42996.74 (2.39)  & 20.88s \\
     Block APF         & Failed            & Failed            & Failed            & 2.09s \\
     SimBa             & -43683.85 (9.54)  & -43683.67 (7.35)  & -43683.76 (5.16)  & 1.25s \\
    \hline
    \end{tabular}
    \caption{Log-likelihood means and log-likelihood standard deviations for the baseline SEIR model with $N=1000$. $h$ is the number of future observations included in \cite{rimella2022approximating} ($h=0$ correspond to APF).}
    \label{tab:N1000_SEIR}
\end{table*}

\subsubsection{SimBa-CL and SMC for an individual-based SEIR model}

The section concludes with a comparison of the baseline SEIR IBM. As for the SIS model, we simulate synthetic data, we run our SimBa-CL along with the baseline algorithms using the data-generating parameter, and we estimate the mean and variance of the resulting log-likelihood computations. The outcomes are reported in Table \ref{tab:N1000_SEIR}. 

Note that the SEIR scenario is considerably more complex than the SIS scenario. In the SEIR case, observing only infected and removed makes it difficult for the SMC algorithms to prevent particle failure, without the use of very informative proposal distributions. The underlying intuition is that if we propose a susceptible individual at time $t$ and then we observe that individual to be infected, or removed, at time $t+1$, it becomes impossible to recover from our incorrect proposal at time $t$. 

Table \ref{tab:N1000_SEIR} clearly shows that, to avoid failure of the SMC, we need a smart proposal distribution as the one proposed by \cite{rimella2022approximating}, and also a large $h$ to reach a reasonable log-likelihood variance. On the other hand, our SimBa-CL is able to reach comparable log-likelihood variance almost ten times faster than the SMC. 

\subsection{2001 UK Foot and Mouth Disease Outbreak} \label{sec:FMD}

In the year 2001, the United Kingdom experienced an outbreak of foot and mouth disease, a highly contagious virus affecting cloven-hoofed animals. Over an 8-month period, 2026 farms out of 188361 in the UK were infected, concentrated in the North and South West of England, and costing an estimated £8 billion to public and private sectors \citep{FMDreport2002}.  

Data on the outbreak are covered by copyright, see Defra (Department for environment, food and rural affairs) website (\href{http://www.defra.gov.uk}{http://www.defra.gov.uk}). From the full dataset we extracted: farm coordinates, farm notification status with date, number of cattle and number of sheep in the farm. We also localised our studies in the surroundings of Cumbria and selected a total of $8791$ farms, which integrates the studies by \cite{jewell2009bayesian}, see Figure \ref{fig:FMD_cities} for a graphical representation. 

\begin{figure}
    \centering
    \includegraphics[width=0.8\textwidth]{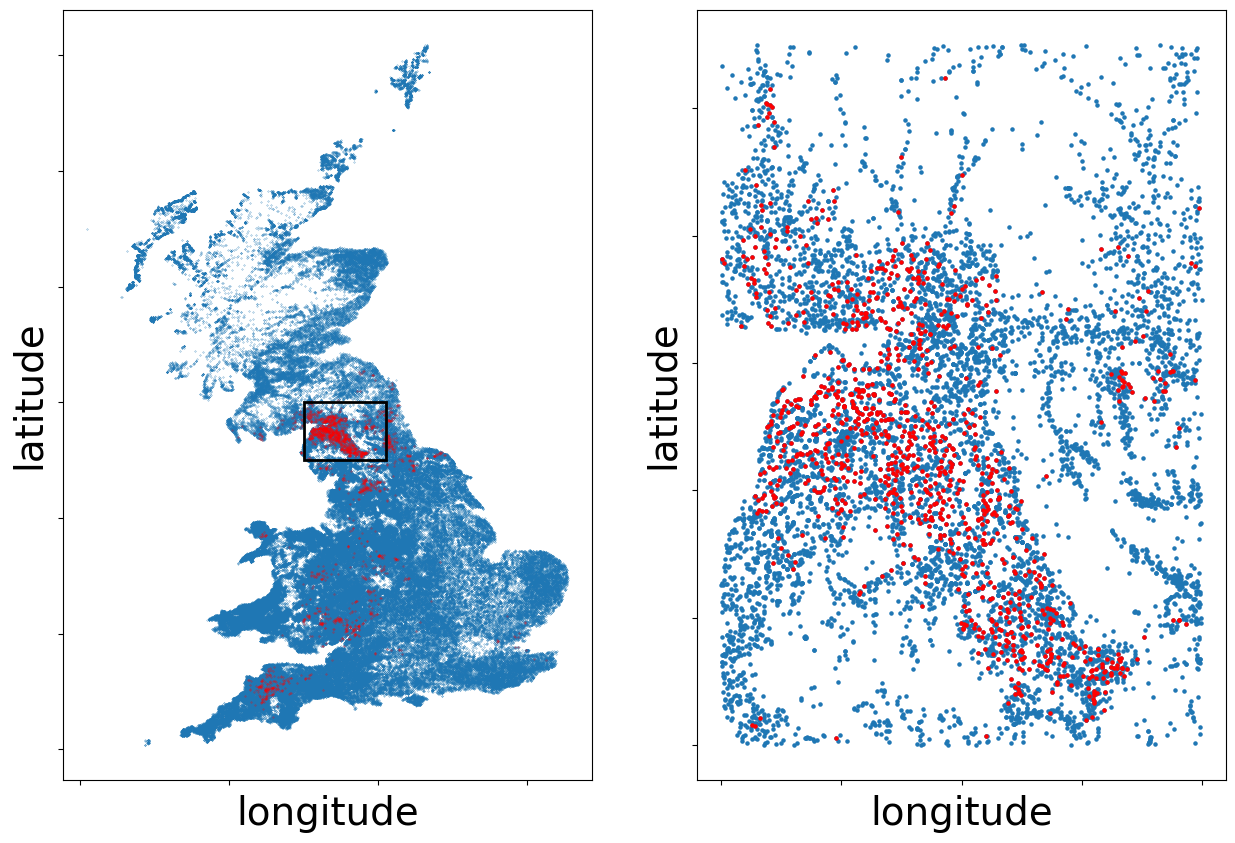}
    \caption{Graphical representation of the FMD outbreak. Top: the full outbreak in Great Britain. Bottom: the analysed farms. Red dots are used to indicate farms that got infected during the outbreak. The black rectangle encloses the analysed farms in the full plot.}
    \label{fig:FMD_cities}
\end{figure}

\begin{figure*}
    \centering
    \includegraphics[width=0.8\textwidth]{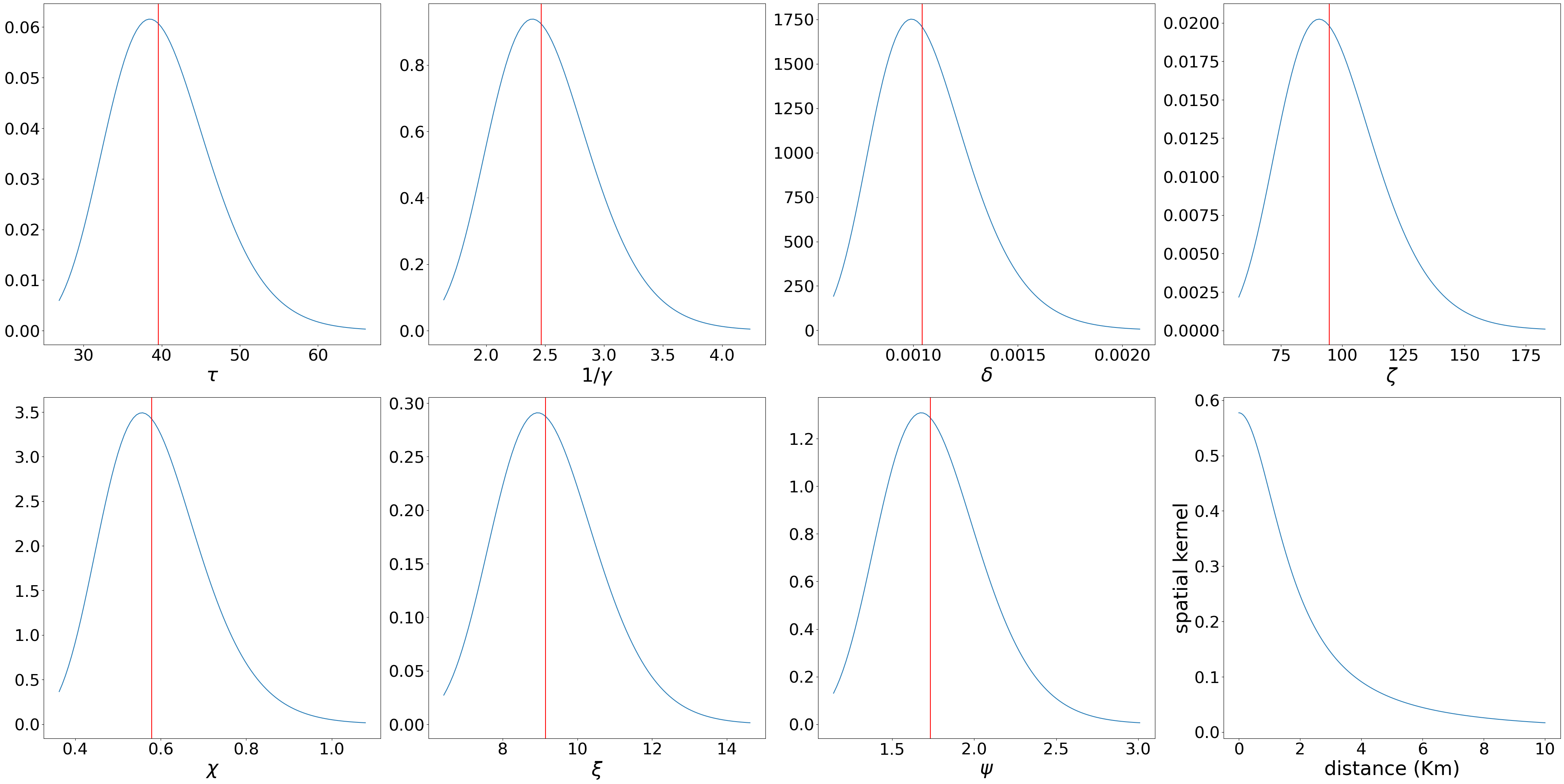}
    \caption{FMD parameters' distributions and spatial kernel decay. Parameters' labels are reported on the x-axis. Solid lines represent the maximum SimBa-CL estimator. The last plot of the second row shows the spatial decay of infectivity in Km. }
    \label{fig:FMD_parameters}
\end{figure*}

\subsubsection{Model}

Similar to previous models, we consider an IBM with farms as the individual. We assume farms exist in Susceptible, Infected, Notified (i.e. quarantined on detection), and Removed states (the SINR model).  Transitions from S to I and I to N follow a discrete-time stochastic process, with infected farms immediately quarantined, and N to R (farm culling) occurs deterministically after 1 day.

We consider an initial probability of infection for farm $n$ of $1-\exp\left \{-\tau {\sum_{\tilde{n} \in [N]} \lambda_{\tilde{n}, n}} \slash {N }\right \}$, with parameter $\tau > 0$.  We assume transition probabilities from I to N of $1-\exp\{-\gamma\}, \; \gamma > 0$, and from N to R of $1$. We assume individual infection probabilities, i.e. transitions from S to I, of $1-\exp\left \{- {\sum_{\tilde{n} \in [N]} \lambda_{\tilde{n}, n} \mathbb{I}\left( x_{t-1}^{\tilde{n}}=2\right)}\right \}$, where $\lambda_{\tilde{n}, n}$ is the infection pressure exerted by an infected farm $\tilde{n}$ to a susceptible farm $n$ and formulated as:
\begin{equation}\label{eq:FMD_is}
\begin{split}
    \lambda_{\tilde{n},n} = \frac{\delta}{N} &\left [ \zeta \left (w_{\tilde{n}}^c\right )^{\chi} + \left (w_{\tilde{n}}^s\right )^{\chi} \right ] \left [ \xi \left (w_n^c\right )^{\chi} + \left (w_n^s\right )^{\chi} \right ]  \frac{\psi}{E_{\tilde{n},n}^2 + \psi^2},
\end{split}
\end{equation}
with $\delta, \xi, \zeta, \chi, \psi$ positive parameters, $w_n^c$ number of cattle in the $n$-th farm, $w_n^s$ number of sheep in the $n$-th farm and $E_{\tilde{n},n}$ the Euclidean distance in kilometres between farm $\tilde{n}$ and farm $n$. This SINR model is an example of heterogeneous mixing IBM as the infectious contacts are not homogeneous in space. The emission distribution follows the usual formulation $p\left (y_t^n|x_{t}^n, \theta\right ) = q^{x_t^n} \mathbb{I}\left (y_t^n \neq 0\right ) +\left (1- q^{x_t^n}\right ) \mathbb{I}\left (y_t^n = 0\right )$, with $q=[0,0,1,0]^\top$ as we observe perfectly the notified.

We set $P=500$ when performing the optimization and $P=200$ when computing confidence regions.

\subsubsection{Inference}

We run $100$ optimization using Adam on our fully factorized SimBa-CL without feedback and select the ``best'' simulation according to its final SimBa-CL score. We then estimate the Godambe information matrix using the approximate Bartlett identities. As we learned the parameters in a log-scale, we need to use log-Normal when plotting the parameters distributions, see Figure \ref{fig:FMD_parameters}. 

The parameter $\tau$ can be intuitively understood as the time interval between the first infection and the first notified infections, marking the onset of the notification process. In Figure \ref{fig:FMD_parameters} we can recognize an optimal $\tau$ of about $40$, suggesting a relatively slow start in notifying farms. Furthermore, we can observe a mean time before notification of about $2.5$ days, leading to an estimate of $3.5$ days for the mean infection period, encompassing the period from farm infection to culling. This implies a relatively fast intervention once the notification process is implemented. Also, from the last row of Figure \ref{fig:FMD_parameters}, we can notice a decrease of over $60\%$ of the infectivity after just 2 km, which could be used to define containment zones around infected farms.

\begin{table}[httb!]
    \centering
    \begin{tabular}{rrrr}
       Nr. cattle &   Nr. sheep &   Mean susc. &   Mean infec. \\
       \hline
       100 &        0 &                131.83 &            1364.99 \\
          0 &     1000 &                 54.69 &              54.69 \\
         50 &      500 &                124.84 &             950.17 \\
          2 &        6 &                 16.49 &             144.37 \\
       \hline
       \end{tabular}
    \caption{Mean susceptibility and mean infectivity for four farms conformations.}
    \label{tab:FMD_SI_farms}
\end{table}

Parameters $\zeta, \chi, \xi$ are more difficult to interpret as they regulate the susceptibility and infectivity of farms according to the number of animals. To help visualize their effect we produce Table \ref{tab:FMD_SI_farms}, which shows the average susceptibility and infectivity of a medium-sized farm with only cattle, a medium-sized farm with only sheep, a large-sized farm, and a small-sized farm. From Table \ref{tab:FMD_SI_farms} we can deduce that the effect of owing cattle is significantly higher than the one of owing sheep for both infectivity and susceptibility, and that even small farms can affect the epidemic spread. This agrees with the study from \cite{jewell2013bayesian} and also with the \cite{EUFMDreport2003}.


\section{Discussion}

In this work, we propose SimBa-CL, a novel composite likelihood approach for inference in high-dimensional HMM, which relies on simulations from the model and basic forward recursions to compute Monte Carlo approximations of the marginals of the likelihood. The computational cost of the algorithm is quadratic in the dimension $N$ when considering simulations with feedback, targeting the marginals of the likelihood exactly, and can be reduced to linear when ignoring them, albeit by introducing an approximation. For well-mixing models where the effect of changing the state of one individual has a decreasing impact on the dynamics of the other individuals, we provide Kullback-Leibler divergence bounds showing that the effect of the feedback becomes negligible as $N$ increases. We demonstrate that SimBa-CL is competitive with state-of-the-art SMC algorithms in terms of likelihood variance while being significantly faster and suited to automatic differentiation, allowing straightforward optimization of the parameters via Machine Learning libraries.

Although our experiments with SimBa-CL are limited to individual-based models for epidemiology, we have presented the methodology in the context of general HMMs. Indeed, SimBa-CL could be applied to standard compartmental models for epidemics \citep{chowell2004basic,lekone2006statistical}, where individual counts are modelled instead of individual states. Here, conjugacy properties of the model could be exploited to simplify the SimBa-CL filter and feedback \citep{whitehouse2022consistent}. Furthermore, state-space models and HMMs are used to model unknown skill levels in competitive sports based on game outcomes \citep{Herbrich2006,Strumbelj2012,duffield2023state}. For instance, SimBa-CL on a partition of football teams could be applied to estimate the skilled players across teams over time. Also, the factorization in \eqref{eq:model_factorization} can be recognized in traffic congestion applications \citep{silva2015predicting,rimella2022exploiting}, where business states are estimated using data from preallocated sensors. In this context, SimBa-CL appears to be a promising approach, in particular, SimBa-CL with feedback would allow spatial correlations to be properly accounted for, ensuring that no information is lost.

Our empirical results show good performance for parameter estimation, particularly for large $N$. We believe this stems from the fact that the approximation of ignoring feedback decreases as $N\rightarrow\infty$ and composite likelihood methods have good asymptotic properties, albeit these are shown for simpler data models \citep{varin2008composite,varin2011overview}. In essence, we are conjecturing that the likelihood surface induced by SimBa-CL might have some asymptotic shape whose maximum is coalescing around the data-generating parameter or a set of equally likely parameters. However, a rigorous proof of consistency is beyond the scope of this article and we leave it to future works.

\section{Data availability}

Code is open-source and available at the Github repository:\\ \href{https://github.com/LorenzoRimella/SimBa-CL}{https://github.com/LorenzoRimella/SimBa-CL}. 

Proofs and additional details are available in the supplementary material.

\section{Acknowledgements}

The authors acknowledge the contributions of the editor and two anonymous reviewers of the journal ``Statistics and Computing'', whose insightful feedback significantly improved the paper. The authors also thank the High-End Computing (HEC) facility at Lancaster University for providing the computational resources necessary to run the experiments. This work was supported by the EPSRC grants EP/R018561/1 (Bayes4Health) and EP/R034710/1 (CoSInES). PF acknowledges funding from the EPSRC grant EP/Y028783/1 (Prob\_AI).

\bibliographystyle{chicago}
\bibliography{references.bib}

\begin{thebibliography}{}

\bibitem[\protect\citeauthoryear{Blundell, Cornebise, Kavukcuoglu, and
  Wierstra}{Blundell et~al.}{2015}]{blundell15}
Blundell, C., J.~Cornebise, K.~Kavukcuoglu, and D.~Wierstra (2015, 07--09 Jul).
\newblock Weight uncertainty in neural network.
\newblock In F.~Bach and D.~Blei (Eds.), {\em Proceedings of the 32nd
  International Conference on Machine Learning}, Volume~37 of {\em Proceedings
  of Machine Learning Research}, Lille, France, pp.\  1613--1622. PMLR.

\bibitem[\protect\citeauthoryear{Chopin, Papaspiliopoulos, et~al.}{Chopin
  et~al.}{2020}]{Chopin2020}
Chopin, N., O.~Papaspiliopoulos, et~al. (2020).
\newblock {\em An introduction to sequential Monte Carlo}, Volume~4.
\newblock Berlin: Springer.

\bibitem[\protect\citeauthoryear{Chowell, Hengartner, Castillo-Chavez,
  Fenimore, and Hyman}{Chowell et~al.}{2004}]{chowell2004basic}
Chowell, G., N.~W. Hengartner, C.~Castillo-Chavez, P.~W. Fenimore, and J.~M.
  Hyman (2004).
\newblock The basic reproductive number of {Ebola} and the effects of public
  health measures: the cases of {Congo} and {Uganda}.
\newblock {\em Journal of Theoretical Biology\/}~{\em 229\/}(1), 119--126.

\bibitem[\protect\citeauthoryear{Cocker, Chidziwisano, Mphasa, Mwapasa, Lewis,
  Rowlingson, Sammarro, Bakali, Salifu, Zuza, et~al.}{Cocker
  et~al.}{2023}]{cocker2023investigating}
Cocker, D., K.~Chidziwisano, M.~Mphasa, T.~Mwapasa, J.~M. Lewis, B.~Rowlingson,
  M.~Sammarro, W.~Bakali, C.~Salifu, A.~Zuza, et~al. (2023).
\newblock Investigating one health risks for human colonisation with extended
  spectrum beta-lactamase producing e. coli and k. pneumoniae in malawian
  households: a longitudinal cohort study.
\newblock {\em The Lancet Microbe\/}.

\bibitem[\protect\citeauthoryear{Cocker, Sammarro, Chidziwisano, Elviss, Jacob,
  Kajumbula, Mugisha, Musoke, Musicha, Roberts, Rowlingson, Singer, Byrne,
  Edwards, Lester, Wilson, Hollihead, Thomson, Jewell, Morse, and
  Feasey}{Cocker et~al.}{2023}]{cocker2023drum}
Cocker, D., M.~Sammarro, K.~Chidziwisano, N.~Elviss, S.~Jacob, H.~Kajumbula,
  L.~Mugisha, D.~Musoke, P.~Musicha, A.~Roberts, B.~Rowlingson, A.~Singer,
  R.~Byrne, T.~Edwards, R.~Lester, C.~Wilson, B.~Hollihead, N.~Thomson,
  C.~Jewell, T.~Morse, and N.~Feasey (2023).
\newblock {Drivers of Resistance in Uganda and Malawi (DRUM): a protocol for
  the evaluation of One-Health drivers of Extended Spectrum Beta Lactamase
  (ESBL) resistance in Low-Middle Income Countries (LMICs) [version 2; peer
  review: 1 approved, 1 approved with reservations]}.
\newblock {\em Wellcome Open Res\/}~{\em 7}, 55.

\bibitem[\protect\citeauthoryear{{Directive of the Council of the European
  Union}}{{Directive of the Council of the European
  Union}}{2003}]{EUFMDreport2003}
{Directive of the Council of the European Union} (2003).
\newblock {COUNCIL DIRECTIVE 2003\/85\/EC of 29 September 2003 on Community
  measures for the control of foot-and-mouth disease repealing Directive
  85\/511\/EEC and Decisions 89\/531\/EEC and 91\/665\/EEC and amending
  Directive 92\/46\/EEC, Official Journal of the European Union.}

\bibitem[\protect\citeauthoryear{Duffield, Power, and Rimella}{Duffield
  et~al.}{2023}]{duffield2023state}
Duffield, S., S.~Power, and L.~Rimella (2023).
\newblock A state-space perspective on modelling and inference for online skill
  rating.
\newblock {\em arXiv preprint arXiv:2308.02414\/}.

\bibitem[\protect\citeauthoryear{Fearnhead and Prangle}{Fearnhead and
  Prangle}{2012}]{fearnhead2012constructing}
Fearnhead, P. and D.~Prangle (2012).
\newblock Constructing summary statistics for approximate {Bayesian
  computation: semi-automatic approximate Bayesian computation}.
\newblock {\em Journal of the Royal Statistical Society Series B: Statistical
  Methodology\/}~{\em 74\/}(3), 419--474.

\bibitem[\protect\citeauthoryear{Fintzi, Cui, Wakefield, and Minin}{Fintzi
  et~al.}{2017}]{fintzi2017efficient}
Fintzi, J., X.~Cui, J.~Wakefield, and V.~N. Minin (2017).
\newblock Efficient data augmentation for fitting stochastic epidemic models to
  prevalence data.
\newblock {\em Journal of Computational and Graphical Statistics\/}~{\em
  26\/}(4), 918--929.

\bibitem[\protect\citeauthoryear{Glennie, Adam, Leos-Barajas, Michelot,
  Photopoulou, and McClintock}{Glennie et~al.}{2023}]{glennie2023hidden}
Glennie, R., T.~Adam, V.~Leos-Barajas, T.~Michelot, T.~Photopoulou, and B.~T.
  McClintock (2023).
\newblock Hidden {M}arkov models: Pitfalls and opportunities in ecology.
\newblock {\em Methods in Ecology and Evolution\/}~{\em 14\/}(1), 43--56.

\bibitem[\protect\citeauthoryear{Godambe}{Godambe}{1960}]{godambe1960optimum}
Godambe, V.~P. (1960).
\newblock An optimum property of regular maximum likelihood estimation.
\newblock {\em The Annals of Mathematical Statistics\/}~{\em 31\/}(4),
  1208--1211.

\bibitem[\protect\citeauthoryear{Gumbel}{Gumbel}{1954}]{gumbel1954statistical}
Gumbel, E.~J. (1954).
\newblock {\em Statistical theory of extreme values and some practical
  applications: a series of lectures}, Volume~33.
\newblock New York: US Government Printing Office.

\bibitem[\protect\citeauthoryear{Herbrich, Minka, and Graepel}{Herbrich
  et~al.}{2006}]{Herbrich2006}
Herbrich, R., T.~Minka, and T.~Graepel (2006).
\newblock True{S}kill: A {B}ayesian {S}kill {R}ating {S}ystem.
\newblock {\em Advances in neural information processing systems\/}~{\em 19}.

\bibitem[\protect\citeauthoryear{Jang, Gu, and Poole}{Jang
  et~al.}{2016}]{jang2016categorical}
Jang, E., S.~Gu, and B.~Poole (2016).
\newblock Categorical reparameterization with gumbel-softmax.
\newblock {\em arXiv preprint arXiv:1611.01144\/}.

\bibitem[\protect\citeauthoryear{Jewell, Brown, Keeling, Green, Roberts,
  et~al.}{Jewell et~al.}{2013}]{jewell2013bayesian}
Jewell, C., J.~Brown, M.~Keeling, L.~Green, G.~Roberts, et~al. (2013).
\newblock Bayesian epidemic risk prediction-knowledge transfer and usability at
  all levels.
\newblock In {\em Society for Veterinary Epidemiology and Preventive Medicine.
  Proceedings of a meeting held in Madrid, Spain, 20-22 March 2013}, pp.\
  127--142. Society for Veterinary Epidemiology and Preventive Medicine.

\bibitem[\protect\citeauthoryear{Jewell, Kypraios, Neal, and Roberts}{Jewell
  et~al.}{2009}]{jewell2009bayesian}
Jewell, C.~P., T.~Kypraios, P.~Neal, and G.~O. Roberts (2009).
\newblock Bayesian analysis for emerging infectious diseases.
\newblock {\em Bayesian Analysis\/}~{\em 4}, 465--496.

\bibitem[\protect\citeauthoryear{Ju, Heng, and Jacob}{Ju
  et~al.}{2021}]{ju2021sequential}
Ju, N., J.~Heng, and P.~E. Jacob (2021).
\newblock Sequential monte carlo algorithms for agent-based models of disease
  transmission.
\newblock {\em arXiv preprint arXiv:2101.12156\/}.

\bibitem[\protect\citeauthoryear{Keeling and Rohani}{Keeling and
  Rohani}{2011}]{keeling2011modeling}
Keeling, M.~J. and P.~Rohani (2011).
\newblock {\em Modeling infectious diseases in humans and animals}.
\newblock Princeton: Princeton university press.

\bibitem[\protect\citeauthoryear{Kingma and Ba}{Kingma and
  Ba}{2014}]{kingma2014adam}
Kingma, D.~P. and J.~Ba (2014).
\newblock Adam: A method for stochastic optimization.
\newblock {\em arXiv preprint arXiv:1412.6980\/}.

\bibitem[\protect\citeauthoryear{Lekone and Finkenst{\"a}dt}{Lekone and
  Finkenst{\"a}dt}{2006}]{lekone2006statistical}
Lekone, P.~E. and B.~F. Finkenst{\"a}dt (2006).
\newblock Statistical inference in a stochastic epidemic {SEIR} model with
  control intervention: {Ebola} as a case study.
\newblock {\em Biometrics\/}~{\em 62\/}(4), 1170--1177.

\bibitem[\protect\citeauthoryear{Maddison, Mnih, and Teh}{Maddison
  et~al.}{2016}]{maddison2016concrete}
Maddison, C.~J., A.~Mnih, and Y.~W. Teh (2016).
\newblock The concrete distribution: A continuous relaxation of discrete random
  variables.
\newblock {\em arXiv preprint arXiv:1611.00712\/}.

\bibitem[\protect\citeauthoryear{Rebeschini and Van~Handel}{Rebeschini and
  Van~Handel}{2015}]{rebeschini2015can}
Rebeschini, P. and R.~Van~Handel (2015).
\newblock Can local particle filters beat the curse of dimensionality?
\newblock {\em The Annals of Applied Probability\/}~{\em 25}, 2809--2866.

\bibitem[\protect\citeauthoryear{Rimella, Alderton, Sammarro, Rowlingson,
  Cocker, Feasey, Fearnhead, and Jewell}{Rimella
  et~al.}{023b}]{rimella2022inference}
Rimella, L., S.~Alderton, M.~Sammarro, B.~Rowlingson, D.~Cocker, N.~Feasey,
  P.~Fearnhead, and C.~Jewell (2023b, 07).
\newblock {Inference on extended-spectrum beta-lactamase Escherichia coli and
  Klebsiella pneumoniae data through SMC2}.
\newblock {\em Journal of the Royal Statistical Society Series C: Applied
  Statistics\/}, qlad055.

\bibitem[\protect\citeauthoryear{Rimella, Jewell, and Fearnhead}{Rimella
  et~al.}{023a}]{rimella2022approximating}
Rimella, L., C.~Jewell, and P.~Fearnhead (2023a, 05).
\newblock {Approximating Optimal SMC Proposal Distributions in Individual-Based
  Epidemic Models}.
\newblock {\em Statistica Sinica\/}, SS--2022--0198.

\bibitem[\protect\citeauthoryear{Rimella and Whiteley}{Rimella and
  Whiteley}{2022}]{rimella2022exploiting}
Rimella, L. and N.~Whiteley (2022, 08).
\newblock Exploiting locality in high-dimensional factorial hidden markov
  models.
\newblock {\em The Journal of Machine Learning Research\/}~{\em 23\/}(1),
  134--167.

\bibitem[\protect\citeauthoryear{Rimella, Whiteley, Jewell, Fearnhead, and
  Whitehouse}{Rimella et~al.}{2025}]{rimella2025CAL}
Rimella, L., N.~Whiteley, C.~Jewell, P.~Fearnhead, and M.~Whitehouse (2025).
\newblock Scalable calibration for partially observed individual-based epidemic
  models through categorical approximations.
\newblock {\em arXiv preprint arXiv:2501.03950\/}.

\bibitem[\protect\citeauthoryear{Samanidou, Zschischang, Stauffer, and
  Lux}{Samanidou et~al.}{2007}]{samanidou2007agent}
Samanidou, E., E.~Zschischang, D.~Stauffer, and T.~Lux (2007).
\newblock Agent-based models of financial markets.
\newblock {\em Reports on Progress in Physics\/}~{\em 70\/}(3), 409.

\bibitem[\protect\citeauthoryear{Scott}{Scott}{2002}]{scott2002bayesian}
Scott, S.~L. (2002).
\newblock Bayesian methods for hidden {M}arkov models: Recursive computing in
  the 21st century.
\newblock {\em Journal of the American statistical Association\/}~{\em
  97\/}(457), 337--351.

\bibitem[\protect\citeauthoryear{Silva, Kang, and Airoldi}{Silva
  et~al.}{2015}]{silva2015predicting}
Silva, R., S.~M. Kang, and E.~M. Airoldi (2015).
\newblock Predicting traffic volumes and estimating the effects of shocks in
  massive transportation systems.
\newblock {\em Proceedings of the National Academy of Sciences\/}~{\em
  112\/}(18), 5643--5648.

\bibitem[\protect\citeauthoryear{Touloupou, Finkenst{\"a}dt, and
  Spencer}{Touloupou et~al.}{2020}]{touloupou2020scalable}
Touloupou, P., B.~Finkenst{\"a}dt, and S.~E. Spencer (2020).
\newblock {Scalable Bayesian inference for coupled hidden Markov and
  semi-Markov models}.
\newblock {\em Journal of Computational and Graphical Statistics\/}~{\em
  29\/}(2), 238--249.

\bibitem[\protect\citeauthoryear{{UK National Audit Office}}{{UK National Audit
  Office}}{2022}]{FMDreport2002}
{UK National Audit Office} (2022).
\newblock {Report by the Comptroller and auditor general: "The 2001 outbreak of
  foot and mouth disease."}.

\bibitem[\protect\citeauthoryear{Varin}{Varin}{2008}]{varin2008composite}
Varin, C. (2008).
\newblock On composite marginal likelihoods.
\newblock {\em Asta advances in statistical analysis\/}~{\em 92\/}(1), 1--28.

\bibitem[\protect\citeauthoryear{Varin, Reid, and Firth}{Varin
  et~al.}{2011}]{varin2011overview}
Varin, C., N.~Reid, and D.~Firth (2011).
\newblock An overview of composite likelihood methods.
\newblock {\em Statistica Sinica\/}, 5--42.

\bibitem[\protect\citeauthoryear{Whitehouse, Whiteley, and Rimella}{Whitehouse
  et~al.}{2023}]{whitehouse2022consistent}
Whitehouse, M., N.~Whiteley, and L.~Rimella (2023, 07).
\newblock {Consistent and fast inference in compartmental models of epidemics
  using Poisson Approximate Likelihoods}.
\newblock {\em Journal of the Royal Statistical Society Series B: Statistical
  Methodology\/}, qkad065.

\bibitem[\protect\citeauthoryear{Whiteley and Lee}{Whiteley and
  Lee}{2014}]{whiteley2014}
Whiteley, N. and A.~Lee (2014).
\newblock {Twisted particle filters}.
\newblock {\em The Annals of Statistics\/}~{\em 42\/}(1), 115 -- 141.

\bibitem[\protect\citeauthoryear{Wilkinson}{Wilkinson}{2018}]{wilkinson2018stochastic}
Wilkinson, D.~J. (2018).
\newblock {\em Stochastic modelling for systems biology}.
\newblock New York: CRC press.

\bibitem[\protect\citeauthoryear{Zeiler}{Zeiler}{2012}]{zeiler2012adadelta}
Zeiler, M.~D. (2012).
\newblock Adadelta: an adaptive learning rate method.
\newblock {\em arXiv preprint arXiv:1212.5701\/}.

\bibitem[\protect\citeauthoryear{Štrumbelj and Vračar}{Štrumbelj and
  Vračar}{2012}]{Strumbelj2012}
Štrumbelj, E. and P.~Vračar (2012).
\newblock Simulating a basketball match with a homogeneous {M}arkov model and
  forecasting the outcome.
\newblock {\em International Journal of Forecasting\/}~{\em 28\/}(2), 532--542.

\end{thebibliography}

\appendix

\section{Simulation based composite likelihood: assumptions and proofs}

In this section we provide the details about the theoretical results from the main paper. Apart from giving full proofs of the results, we will also discuss the validity of the assumptions. 

\subsection{Discussion on the assumptions and the simulation feedback}

Here we show that our assumptions from the main paper are valid for a specific class of models. Specifically, we consider an SIS individual-based model as from the experimental section, and so:
\begin{align}
    &p\left (x_0^n\right ) = 
    \begin{bmatrix}
        1-\frac{1}{1+\exp{\left (-\beta_0^\top w_n\right )}} \\ 
            \frac{1}{1+\exp{\left (-\beta_0^\top w_n\right )}}	
    \end{bmatrix};\\
    & p\left (x_t^n|x_{t-1}\right ) = 
    \begin{bmatrix}
       e^{ - \lambda_n \left ( \frac{\sum_{\bar{n} \in [N]} \mathbb{I}\left (x_{t-1}^{\bar{n}}=2\right ) }{N} + \iota \right ) }&   
            1- e^{ - \lambda_n \left ( \frac{\sum_{\bar{n} \in [N]} \mathbb{I}\left (x_{t-1}^{\bar{n}}=2\right ) }{N} + \iota \right ) }\\
        1-e^{- \gamma_n } & 
            e^{- \gamma_n }
    \end{bmatrix};\\
    &p\left (y_t^n|x_{t}^n\right ) = q^{x_t^n} \mathbb{I}\left (y_t^n \neq 0\right ) +\left (1- q^{x_t^n}\right ) \mathbb{I}\left (y_t^n = 0\right ),
\end{align}
where $\lambda_n, \gamma_n$ are positive and $q \in [0,1]^2$.

The first assumption we consider is reported below. 

\begin{assumption}\label{ass:kernel_app}
For any $n, \bar{n} \in [N]$ and for any $x_t^{\bar{n}} \in \mathcal{X}$, if $x_{t-1}, \bar{x}_{t-1} \in \mathcal{X}^N$ are such that $x_{t-1}^{\setminus {n}}= \bar{x}_{t-1}^{\setminus {n}}$ then:
\begin{equation}
    \left | p\left (x_t^{\bar{n}}|x_{t-1}\right ) - p\left (x_t^{\bar{n}}|\bar{x}_{t-1}\right ) \right | \leq \frac{1}{N} \left |d_{n,\bar{n}}\left (x_{t-1}^{{n}}\right ) - d_{n,\bar{n}}\left (\bar{x}_{t-1}^{{n}}\right ) \right |,
\end{equation}
where $d_{n,\bar{n}}:\mathcal{X} \to \mathbb{R}_+$. 
\end{assumption}

This is one of the key assumptions to prove our first result on SimBa-CL and precisely provide bounds on the Kullback-Leibler divergence between the fully factorised SimBa-CL with and without feedback.

To prove the validity of the bound we simply consider the difference between $p\left (x_t^{\bar{n}}|x_{t-1}\right )$ and $p\left (x_t^{\bar{n}}|\bar{x}_{t-1}\right )$ when $x_{t-1}, \bar{x}_{t-1} \in \mathcal{X}^N$ are such that $x_{t-1}^{\setminus {n}}= \bar{x}_{t-1}^{\setminus {n}}$.
\begin{equation}
    \begin{split}
        &p\left (x_t^{\bar{n}}|x_{t-1}\right ) - p\left (x_t^{\bar{n}}|\bar{x}_{t-1}\right ) 
        \\&=
            \begin{cases}
            e^{ - \lambda_{\bar{n}} \left ( \frac{\sum_{{m} \in [N]} \mathbb{I}\left (x_{t-1}^m=2\right ) }{N} + \iota \right ) } - e^{ - \lambda_n \left ( \frac{\sum_{m \in [N]} \mathbb{I}\left (\bar{x}_{t-1}^m=2\right ) }{N} + \iota \right ) } & \text{if } x_{t-1}^{\bar{n}}=1, x_t^{\bar{n}}=1\\  
            -e^{ - \lambda_{\bar{n}} \left ( \frac{\sum_{{m} \in [N]} \mathbb{I}\left (x_{t-1}^m=2\right ) }{N} + \iota \right ) } + e^{ - \lambda_n \left ( \frac{\sum_{m \in [N]} \mathbb{I}\left (\bar{x}_{t-1}^m=2\right ) }{N} + \iota \right ) } & \text{if } x_{t-1}^{\bar{n}}=1, x_t^{\bar{n}}=2\\
            0 & \text{if } x_{t-1}^{\bar{n}}=2, x_t^{\bar{n}}=1 \\
            0 & \text{if } x_{t-1}^{\bar{n}}=2, x_t^{\bar{n}}=2
    \end{cases}
    \end{split}
\end{equation}
hence:
\begin{equation}
    \left | p\left (x_t^{\bar{n}}|x_{t-1}\right ) - p\left (x_t^{\bar{n}}|\bar{x}_{t-1}\right ) \right | \leq \left |e^{ - \lambda_{\bar{n}} \left ( \frac{\sum_{{m} \in [N]} \mathbb{I}\left (x_{t-1}^m=2\right ) }{N} + \iota \right ) } - e^{ - \lambda_n \left ( \frac{\sum_{m \in [N]} \mathbb{I}\left (\bar{x}_{t-1}^m=2\right ) }{N} + \iota \right ) }\right |.
\end{equation}
Consider the function $\exp(-x)$, from the mean value theorem we have that:
\begin{equation}
    e^{-a}-e^{-b} = (-e^{-c})(a-b),
\end{equation}
with $c \in [a,b]$ provide $a<b$. Then we can apply the mean value theorem on the domain $[0,+\infty]$ and get:
\begin{equation}
\begin{split}
   &\left | p\left (x_t^{\bar{n}}|x_{t-1}\right ) - p\left (x_t^{\bar{n}}|\bar{x}_{t-1}\right ) \right | \leq \left |e^{ - \lambda_{\bar{n}} \left ( \frac{\sum_{{m} \in [N]} \mathbb{I}\left (x_{t-1}^m=2\right ) }{N} + \iota \right ) } - e^{ - \lambda_n \left ( \frac{\sum_{m \in [N]} \mathbb{I}\left (\bar{x}_{t-1}^m=2\right ) }{N} + \iota \right ) }\right |\\
   &= \left |e^{ - c} \right | \left | \lambda_{\bar{n}} \left ( \frac{\sum_{{m} \in [N]} \mathbb{I}\left (x_{t-1}^m=2\right ) }{N} + \iota \right )  -  \lambda_n \left ( \frac{\sum_{m \in [N]} \mathbb{I}\left (\bar{x}_{t-1}^m=2\right ) }{N} + \iota \right ) \right |\\
   &\leq   \lambda_{\bar{n}} \left | \left ( \frac{\sum_{{m} \in [N]} \mathbb{I}\left (x_{t-1}^m=2\right ) }{N} + \iota \right )  -  \left ( \frac{\sum_{m \in [N]} \mathbb{I}\left (\bar{x}_{t-1}^m=2\right ) }{N} + \iota \right ) \right |,
\end{split}
\end{equation}
where the last step follows from $c$ and $\lambda_{\bar{n}}$ being positive. Now remark that $x_{t-1}^{\setminus {n}}= \bar{x}_{t-1}^{\setminus {n}}$, hence:
\begin{equation}
\begin{split}
   &\left | p\left (x_t^{\bar{n}}|x_{t-1}\right ) - p\left (x_t^{\bar{n}}|\bar{x}_{t-1}\right ) \right | 
   \leq   \lambda_{\bar{n}} \left | \left ( \frac{\sum_{{m} \in [N]} \mathbb{I}\left (x_{t-1}^m=2\right ) }{N} + \iota \right )  -  \left ( \frac{\sum_{m \in [N]} \mathbb{I}\left (\bar{x}_{t-1}^m=2\right ) }{N} + \iota \right ) \right |\\
   &\leq   \lambda_{\bar{n}} \left | \frac{\sum_{{m} \in [N]} \mathbb{I}\left (x_{t-1}^m=2\right ) }{N} - \frac{\sum_{m \in [N]} \mathbb{I}\left (\bar{x}_{t-1}^m=2\right ) }{N}  \right |\\
   &\leq   \frac{\lambda_{\bar{n}}}{N} \left | \mathbb{I}\left (x_{t-1}^n=2\right ) - \mathbb{I}\left (\bar{x}_{t-1}^n=2\right )  \right |,
\end{split}
\end{equation}
and so our assumption is satisfied for $d_{n,\bar{n}}(x) = \lambda_{\bar{n}} \mathbb{I}\left (x=2\right )$. Note that the proof is straightforward to generalise for transition kernels that include infectivity and spatial kernels. Indeed, instead of $\lambda_{\bar{n}} \sum_{{m} \in [N]} \mathbb{I}\left (x_{t-1}^m=2\right )$ we could have $\lambda_{\bar{n}} \sum_{{m} \in [N]} \xi_{m} \psi_{\bar{n},m} \mathbb{I}\left (x_{t-1}^m=2\right )$ where $\xi_{m}$ is the infectivity and $\psi_{\bar{n},m}$ is the spatial kernel telling how individuals $\bar{n}$ and $m$ are connected. This alternative transition kernel will similarly satisfy our assumption with for $d_{n,\bar{n}}(x) = \lambda_{\bar{n}} \xi_n \psi_{n, \bar{n}} \mathbb{I}\left (x=2\right )$.

We now conclude by proving the validity of the second assumption, reported for completeness below.

\begin{assumption}\label{ass:boundeness_app}
    For any $n, \bar{n} \in [N]$, if $x_{t-1}, \bar{x}_{t-1} \in \mathcal{X}^N$ are such that $x_{t-1}^{\setminus \bar{n}}= \bar{x}_{t-1}^{\setminus \bar{n}}$ then there exists $ 0 < \epsilon < 1$ such that:
    \begin{equation}
        \sum_{x^{{n}}_t} p\left (x_t^{{n}}|x_{t-1}\right ) \frac{1}{p\left (x_t^{n}|\bar{x}_{t-1}\right )^2} \leq \frac{1}{\epsilon^2}, \quad \text{and} \quad
        \sum_{x^{{n}}_t} p\left (x_t^{{n}}|x_{t-1}\right ) \frac{1}{p\left (x_t^{n}|\bar{x}_{t-1}\right )^3} \leq \frac{1}{\epsilon^3}.
    \end{equation}
\end{assumption}

Again if we consider our SIS individuals-based model we can notice that \ref{ass:boundeness_app} is simply requiring that the matrix obtained by doing the ratio element by element of the transition kernel computed in ${x}_{t-1}$ with the square\slash cube of itself computed in $\bar{x}_{t-1}$ has the sum of the rows bounded. This is straightforward to prove by noting that:
\begin{equation}
\begin{split}
     p\left (x_t^n|x_{t-1}\right ) &= 
    \begin{bmatrix}
    e^{ - \lambda_n \left ( \frac{\sum_{\bar{n} \in [N]} \mathbb{I}\left (x_{t-1}^{\bar{n}}=2\right ) }{N} + \iota \right ) }&   
        1- e^{ - \lambda_n \left ( \frac{\sum_{\bar{n} \in [N]} \mathbb{I}\left (x_{t-1}^{\bar{n}}=2\right ) }{N} + \iota \right ) }\\
    1-e^{- \gamma_n } & 
        e^{- \gamma_n }
    \end{bmatrix}   \\
    &\geq
    \begin{bmatrix}
    e^{ - \lambda_n \left ( 1 + \iota \right )  }&   
        1- e^{ - \lambda_n \iota }\\
    1-e^{- \gamma_n } & 
        e^{- \gamma_n }
    \end{bmatrix},
\end{split}
\end{equation}
hence if we choose $\epsilon = \max_{n \in [N]} \max \{ \exp( - \lambda_n \left ( 1 + \iota \right )  ), 1- \exp( - \lambda_n \iota ), 1-\exp(- \gamma_n ), \exp(- \gamma_n )\}$, the validity of the assumption follows trivially.

There is still an additional assumption that used for the proof of theorem, and precisely: $\left | d_{n,\bar{n}} \left ( x^{n} \right )  - d_{n,\bar{n}} \left ( \bar{x}^{n} \right ) \right | < N$. Again for the specific case of SIS we have $d_{n,\bar{n}}(x) = \lambda_{\bar{n}}\mathbb{I}\left (x=2\right )$ and so $\lambda_{\bar{n}} \left |\mathbb{I}\left (x=2\right )  - \mathbb{I}\left (\bar{x}=2\right ) \right |$ which is less or equal to 1 when $\lambda_{\bar{n}}<1$ and so smaller than $N$. Remark that in practice $\lambda_{\bar{n}}$ is a logistic regression of some covariates of individual $\bar{n}$, guaranteeing $\lambda_{\bar{n}}<1$.

Our assumptions are not limited to assumptions \ref{ass:kernel_app}-\ref{ass:boundeness_app}, and we need some adjusted versions of them when working with SimBa-CL for general partitions, which are again reported below for completeness.

\begin{assumption}\label{ass:general_kernel_ass}
For any $K \in \mathcal{K}$ and for any $x_t^{\bar{n}} \in \mathcal{X}$ with ${\bar{n}} \notin K$, if $x_{t-1}, \bar{x}_{t-1} \in \mathcal{X}^N$ are such that $x_{t-1}^{\setminus {K}}= \bar{x}_{t-1}^{\setminus {K}}$ then:
    \begin{equation}
        \left | p\left (x_t^{\bar{n}}|x_{t-1}\right ) - p\left (x_t^{\bar{n}}|\bar{x}_{t-1}\right ) \right | \leq \frac{1}{N} \left |d_{K,\bar{n}}\left (x_{t-1}^{n}\right ) - d_{K,\bar{n}}\left (\bar{x}_{t-1}^{n}\right ) \right |.
    \end{equation}
where $d_{K,\bar{n}}:\mathcal{X}^K \to \mathbb{R}_+$.
\end{assumption}

\begin{assumption}\label{ass:general_boundeness_ass}
    For any $K, \bar{K} \in \mathcal{K}$, if $x_{t-1}, \bar{x}_{t-1} \in \mathcal{X}^N$ are such that $x_{t-1}^{\setminus \bar{K}}= \bar{x}_{t-1}^{\setminus \bar{K}}$ then there exists $ 0 < \epsilon < 1$ such that:
    \begin{equation}
        \sum_{x^{{n}}_t} p\left (x_t^{{n}}|x_{t-1}\right ) \frac{1}{p\left (x_t^{n}|\bar{x}_{t-1}\right )^2} \leq \frac{1}{\epsilon^2}, \quad \text{and} \quad
        \sum_{x^{{n}}_t} p\left (x_t^{{n}}|x_{t-1}\right ) \frac{1}{p\left (x_t^{n}|\bar{x}_{t-1}\right )^3} \leq \frac{1}{\epsilon^3}.
    \end{equation}
\end{assumption}

Again both assumptions are valid for the SIS case. Indeed, Assumption \ref{ass:general_kernel_ass} can be proven to be valid for SIS by following the same steps and by observing that:
\begin{equation}
\begin{split}
   &\left | p\left (x_t^{\bar{n}}|x_{t-1}\right ) - p\left (x_t^{\bar{n}}|\bar{x}_{t-1}\right ) \right | 
   \leq   \frac{\lambda_{\bar{n}}}{N} \left | \sum_{n \in K} \mathbb{I}\left (x_{t-1}^n=2\right ) - \mathbb{I}\left (\bar{x}_{t-1}^n=2\right )  \right |,
\end{split}
\end{equation}
and so our $d_{K,\bar{n}}(x^K) = \lambda_{\bar{n}}\sum_{n \in K} \mathbb{I}\left (x^n=2\right )$, from which we can also notice $|d_{K,\bar{n}}(x^K) - d_{K,\bar{n}}(\bar{x}^K)|<N$. At the same time, the proof of Assumption \ref{ass:general_boundeness_ass} does not change.

We now discuss the simulation feedback. Note that we have a recursive formula for $p(x_{[0:T]}^{n}|x_{[0:T]}^{\setminus n})$:
\begin{equation}
    \begin{split}
        p(x_{[0:T]}^n|x_{[0:T]}^{\setminus n}) 
        &=
        p(x_{T}|x_{[0:T-1]}) \frac{p(x_{[0:T-1]}^n|x_{[0:T-1]}^{\setminus n})}{p(x_{[T]}^{\setminus n}|x_{[0:T-1]}^{\setminus n})} \\
        &=
        p(x_{T}^n|x_{T-1}) \frac{p(x_{T}^{\setminus n}|x_{T-1})}{p(x_{[T]}^{\setminus n}|x_{[0:T-1]}^{\setminus n})} p(x_{[0:T-1]}^n|x_{[0:T-1]}^{\setminus n})\\
        &=
        p(x_{T}^n|x_{T-1})\\
        &\quad \frac{ \prod_{\bar{n} \in [N], \bar{n} \neq n} p(x_{T}^{\bar{n}}|x_{T-1})}{\sum_{\bar{x}_{T-1}^n} \prod_{\bar{n} \in [N], \bar{n} \neq n} p(x_{T}^{\bar{n}}|\bar{x}_{T-1}^n, x_{[T-1]}^{\setminus n}) p(\bar{x}_{t-1}^n|x_{[0:T-1]}^{\setminus n})} \\
        &\quad p(x_{[0:T-1]}^n|x_{[0:T-1]}^{\setminus n}),
    \end{split}
\end{equation}
and so if we follow the same step down to $t=1$ we obtain: 
\begin{equation} \label{eq:recursive_conditional_joint}
    \begin{split}
        p(x_{[0:T]}^n|x_{[0:T]}^{\setminus n}) 
        &=
        \prod_{t \in [T]} p(x_{t}^n|x_{t-1})\\
        &\quad  
        \prod_{t \in [T]} \frac{ \prod_{\bar{n} \in [N], \bar{n} \neq n} p(x_{t}^{\bar{n}}|x_{t-1})}{\sum_{\bar{x}_{t-1}^n} \prod_{\bar{n} \in [N], \bar{n} \neq n} p(x_{t}^{\bar{n}}|\bar{x}_{t-1}^n, x_{[t-1]}^{\setminus n}) p(\bar{x}_{t-1}^n|x_{[0:T-1]}^{\setminus n})} \\
        &\quad 
        p(x_{0}^n|x_{0}^{\setminus n})\\
        &=
        p(x_{0}^n) \prod_{t \in [T]} p(x_{t}^n|x_{t-1})\\
        &\quad  
        \prod_{t \in [T]} \frac{ \prod_{\bar{n} \in [N], \bar{n} \neq n} p(x_{t}^{\bar{n}}|x_{t-1})}{\sum_{\bar{x}_{t-1}^n} \prod_{\bar{n} \in [N], \bar{n} \neq n} p(x_{t}^{\bar{n}}|\bar{x}_{t-1}^n, x_{[t-1]}^{\setminus n}) p(\bar{x}_{t-1}^n|x_{[0:T-1]}^{\setminus n})}.
    \end{split}
\end{equation}
Here we recognise our simulation feedback indeed in the main paper we have:
\begin{equation} 
    f \left (x_{t-1}^n, x_{[0:t]}^{\setminus n} \right ) =
    \frac{ \prod_{\bar{n} \in [N] \setminus n} p\left (x_{t}^{\bar{n}}|x_{t-1},\right )}{\sum_{\bar{x}_{t-1}^n} \prod_{\bar{n} \in [N] \setminus n} p\left (x_{t}^{\bar{n}}|\bar{x}_{t-1}^n, x_{[t-1]}^{\setminus n},\right ) p\left (\bar{x}_{t-1}^n| x_{[0:t-1]}^{\setminus n},\right )}.
\end{equation}

We can then notice that removing the simulation feedback is like assuming some form of independence on the future as it is like we are recursively removing terms of the form ${p(x_{t}^{\setminus n}|x_{t-1})} \slash {p(x_{[t]}^{\setminus n}|x_{[0:t-1]}^{\setminus n})}$.

\subsection{KL bounds for fully factorized SimBa-CL}

To prove Theorem 1 we require a series of intermediate results.

We start by showing the Data processing inequality, which allows us to bound the KL between the marginals with the KL between the joints.

\begin{lemma}{\bf (Data processing inequality)}\label{lemma:data_proc_ineq}
Consider two joint distributions $p(x,y)$ and $q(x,y)$ and let $p_y(y)$ and $q_y(y)$ be their marginals over $y$, then:
\begin{equation}\label{eq:data_proc_ineq}
    \mathbf{KL}(p_y(\mathbf{y})||q_y(\mathbf{y})) \leq \mathbf{KL}(p(\mathbf{x},\mathbf{y})||q(\mathbf{x},\mathbf{y}))
\end{equation}
provided that $q$ is absolutely continuous with respect to $p$.
\end{lemma}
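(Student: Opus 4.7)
The plan is to prove this via the chain rule for Kullback--Leibler divergence, which decomposes the joint divergence into a marginal part and a conditional part; since the conditional part is non-negative by Gibbs' inequality, the stated bound follows immediately. This avoids any need for the log-sum inequality directly, and cleanly isolates why the data processing inequality holds.

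First I would factorise both joints as $p(x,y) = p(x|y) p_y(y)$ and $q(x,y) = q(x|y) q_y(y)$, defining the conditionals on the support of the respective marginals. The absolute continuity assumption (that $q(x,y)=0$ implies $p(x,y)=0$) guarantees, after marginalising out $x$, that $q_y(y)=0$ implies $p_y(y)=0$, so the marginal ratio $p_y(y)/q_y(y)$ is well-defined on the support of $p_y$. Analogously, the conditional ratio $p(x|y)/q(x|y)$ is well-defined on the support of $p(\cdot,y)$ for each such $y$.

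Next I would substitute into the definition of KL and exploit the additive split of the log-ratio:
\begin{equation}
\log \frac{p(x,y)}{q(x,y)} = \log \frac{p_y(y)}{q_y(y)} + \log \frac{p(x|y)}{q(x|y)}.
\end{equation}
Multiplying by $p(x,y)$ and summing over $x,y$, the first term collapses to $\mathbf{KL}(p_y(y)\|q_y(y))$ (using $\sum_x p(x|y)=1$), while the second term becomes the expected conditional divergence $\sum_y p_y(y)\, \mathbf{KL}(p(x|y)\|q(x|y))$. Gibbs' inequality makes this expectation non-negative, and the bound \eqref{eq:data_proc_ineq} follows by dropping it.

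The main ``obstacle'', if one can call it that, is really only the measure-theoretic bookkeeping: carefully verifying that the factorisations and sum-swaps are valid on the support of $p$ under the absolute continuity assumption, so that no term of the form $0 \cdot \log(0/0)$ or $p/0$ arises. Since we are in the discrete setting throughout the paper, this reduces to restricting all sums to the support of $p$, and no further approximations or technical estimates are needed.
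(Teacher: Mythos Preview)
Your proposal is correct and follows essentially the same approach as the paper: the paper's proof also factorises $p(x,y)=p_{x|y}(x|y)p_y(y)$ and $q(x,y)=q_{x|y}(x|y)q_y(y)$, splits the log-ratio additively, and obtains $\mathbf{KL}(p(x,y)\|q(x,y)) = \mathbb{E}_{p_y}\bigl[\mathbf{KL}(p_{x|y}\|q_{x|y})\bigr] + \mathbf{KL}(p_y\|q_y)$, then drops the non-negative expected conditional KL. Your treatment of the absolute-continuity bookkeeping is slightly more explicit than the paper's, but the argument is otherwise identical.
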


\begin{proof}
It can be easily proved that if $q$ is absolutely continuous with respect to $p$ then $q_y$ is absolutely continuous with respect to $p_y$ and so also the left-hand side of the statement is well-defined.
	
To prove \eqref{eq:data_proc_ineq} we can simply use the chain rule:
\begin{equation}
    \begin{split}
    \mathbf{KL}(p(\mathbf{x},\mathbf{y})||q(\mathbf{x},\mathbf{y})) 
    &= 
    \sum_{x,y} p(x,y) \log \left ( \frac{p(x,y)}{q(x,y)} \right )
    = 
    \sum_{x,y} p_{x|y}(x|y)p_y(y) \log \left ( \frac{p_{x|y}(x|y)p_y(y)}{q_{x|y}(x|y)q_y(y)} \right )\\
    &=
    \sum_{x,y} p_{x|y}(x|y)p_y(y) \left [ \log \left ( \frac{p_{x|y}(x|y)}{q_{x|y}(x|y)} \right ) + \log \left ( \frac{p_y(y)}{q_y(y)} \right ) \right ]\\
    &=
    \sum_{x,y} p_{x|y}(x|y)p_y(y) \log \left ( \frac{p_{x|y}(x|y)}{q_{x|y}(x|y)} \right ) + \sum_{x,y} p_{x|y}(x|y)p_y(y) \log \left ( \frac{p_y(y)}{q_y(y)} \right ) \\
    &=
    \mathbb{E}_{p_y(\mathbf{y})} \left [ \sum_{x} p_{x|y}(x|\mathbf{y}) \log \left ( \frac{p_{x|y}(x|\mathbf{y})}{q_{x|y}(x|\mathbf{y})} \right ) \right ] + \sum_{y} p_y(y) \log \left ( \frac{p_y(y)}{q_y(y)} \right ) \\
    &=
    \mathbb{E}_{p_y(\mathbf{y})} \left [ \mathbf{KL}(p_{x|y}(\mathbf{x}|\mathbf{y})||q_{x|y}(\mathbf{x}|\mathbf{y})) \right ] + \mathbf{KL}(p_y(\mathbf{y})||q_y(\mathbf{y}))\\
        &\geq \mathbf{KL}(p_y(\mathbf{y})||q_y(\mathbf{y})),
    \end{split}
\end{equation}
where the last step follows from the positivity of the KL divergence.
\end{proof}

The next proposition gives us a bound for the log of a ratio, which will be used in the proof of the main theorem to bound the KL.

\begin{proposition}\label{prop:log_bound}
	Consider the function $f(x,y) = \log ( x \slash y)$ with $x,y \in (0,1]$ then:
	\begin{equation}
		f(x,y) \leq 
		\begin{cases}
		-\frac{y-x}{x}+\frac{(y-x)^2}{2 x^2} & y \geq x\\
		-\frac{y-x}{x}+\frac{(y-x)^2}{2 x^2} + \frac{\left({x-y} \right )^3}{3 x^2 y}  & y <x
		\end{cases}.
	\end{equation}
\end{proposition}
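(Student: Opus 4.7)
The plan is to substitute $u = (y-x)/x$, which is valid because $x > 0$. This lives in $(-1, \infty)$ since $y > 0$, and $\log(x/y) = -\log(1+u)$. A short algebraic manipulation (using $1+u = y/x$) translates the two claimed bounds into the two scalar inequalities
\[
\log(1+u) \geq u - \tfrac{u^2}{2} \text{ for } u \geq 0, \quad \log(1+u) \geq u - \tfrac{u^2}{2} + \tfrac{u^3}{3(1+u)} \text{ for } u \in (-1,0).
\]
After reverting the substitution the desired bounds follow, so the task reduces to these two pointwise estimates for $\log(1+u)$.

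To produce both uniformly, I would start from the elementary identity $\frac{1}{1+s} = 1 - s + \frac{s^2}{1+s}$ and integrate from $0$ to $u$ to obtain
\[
\log(1+u) = u - \frac{u^2}{2} + \int_0^u \frac{s^2}{1+s}\,ds.
\]
For $u \geq 0$ the remainder integral is non-negative and the first inequality is immediate. For $u \in (-1,0)$ the remainder equals $-\int_u^0 \frac{s^2}{1+s}\,ds$; on $(u,0)$ we have $1+s \geq 1+u > 0$, so $\frac{1}{1+s} \leq \frac{1}{1+u}$, whence
\[
\int_u^0 \frac{s^2}{1+s}\,ds \;\leq\; \frac{1}{1+u}\int_u^0 s^2\,ds \;=\; \frac{-u^3}{3(1+u)},
\]
and rearranging yields the second inequality.

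The main, if minor, obstacle is selecting the right expansion. Applying the standard integral/Lagrange remainder at order three to $\log(1+u)$ directly leaves a remainder proportional to $(1+u)^{-3}$, which for $u \in (-1,0)$ is strictly weaker than the $(1+u)^{-1}$ that the proposition claims (since $u^3/(1+u)^3 < u^3/(1+u)$ when $u < 0$ and $0 < 1+u < 1$). Stopping the expansion of $\frac{1}{1+s}$ at second order, so that only one factor of $(1+s)^{-1}$ survives in the remainder, is precisely what allows the trivial monotonicity estimate $1/(1+s)\leq 1/(1+u)$ to deliver the sharper denominator $(1+u)$ in the final bound.
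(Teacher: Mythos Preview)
Your proof is correct and uses the same substitution $u=(y-x)/x$ as the paper, reducing to the same pair of scalar inequalities for $\log(1+u)$. The difference lies in how those inequalities are established. The paper treats the two ranges separately: for $u\geq 0$ it differentiates the difference $-\log(1+u)+u-u^2/2$ and checks the sign of the derivative $-u^2/(1+u)$; for $u\in(-1,0)$ it writes out the full Taylor series $-\log(1+u)=\sum_{i\geq 1}(-u)^i/i$, bounds each tail coefficient by $1/3$, and sums the resulting geometric series $\sum_{i\geq 3}(-u)^i=(-u)^3/(1+u)$. Your integral-remainder argument, based on $\tfrac{1}{1+s}=1-s+\tfrac{s^2}{1+s}$, handles both regimes with a single identity and replaces the series manipulation by the one-line monotonicity estimate $1/(1+s)\leq 1/(1+u)$ on $(u,0)$. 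The payoff is a more unified and slightly shorter argument; the paper's version, by contrast, makes the role of the Taylor tail explicit, which may be pedagogically clearer for readers tracking where the extra $(1+u)^{-1}$ factor originates.
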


\begin{proof}
Note that:
$$
f(x,y) = log \left ( \frac{x}{y} \right ) = -log \left ( \frac{y}{x} \right ) = -log \left ( 1+\frac{y-x}{x} \right )
$$
and given that ${y} \slash {x} > 0$ also $1+\frac{y-x}{x} > 0$, meaning that $\frac{y-x}{x} > -1$. We can then provide bounds for $h(z)=-\log(1+z)$ with $z \in (-1, +\infty)$. To do so we analyse $z \in (-1,0)$ and $z \in [0, +\infty)$. 

Let start with the case $z \in [0, +\infty)$, notice that $\tilde{h}(z)=h(z)+z-\frac{z^2}{2}$ is such that:
\begin{itemize}
    \item $\tilde{h}(0)=h(0)=0$;
    \item $\tilde{h}^\prime(z) = {h}^\prime(z)+1-z= \frac{-1+1-z^2}{1+z}=\frac{-z^2}{1+z}<0$ for all $z \in [0, +\infty)$;
\end{itemize}	
from which we conclude $\tilde{h}(z) \leq 0$ for all $z \in [0, +\infty)$, given that $\tilde{h}(0)=0$ and the function is strictly decreasing on the domain. This implies $h(z) \leq -z+\frac{z^2}{2}$ on $z \in [0, +\infty)$.

Let us now analyse the case $z \in (-1, 0)$, notice that $\tilde{h}(z)$ is also decreasing on $z \in (-1, 0)$ and given that $\tilde{h}(0)=0$, we can only prove $\tilde{h}(0)>0$. To find an upper bound for $h(z)$ we use its Taylor expansion, and the fact that $-z \in (0,1)$:
\begin{equation}
    \begin{split}
        {h}(z) 
        &= \sum_{i=0}^\infty \frac{{h}^{(i)}(0) z^i}{i !} = \sum_{i=1}^\infty \frac{(-1)^i z^i}{i } \leq -z + \frac{z^2}{2} + \frac{1}{3} \sum_{i=3}^\infty (-z)^i \\
        &= -z + \frac{z^2}{2} + \frac{1}{3} \left ( \frac{1}{1+z} -1+z-z^2\right )\\
        &= -z + \frac{z^2}{2} + \frac{1}{3} \left ( \frac{1-1+z^2-z^2-z^3}{1+z}\right )\\
        &= -z + \frac{z^2}{2} - \frac{1}{3} \left ( \frac{z^3}{1+z}\right ),
    \end{split}
\end{equation}
which prove the upper bound for $z \in (-1, 0)$.
	
We can put everything together and complete the proof, consider $z = \frac{y-x}{x}$ we have then:
\begin{equation}
    \begin{split}
        f(x,y) &= -log \left ( 1+\frac{y-x}{x} \right ) \\
        &\leq
        \begin{cases}
        -\frac{y-x}{x}+\frac{\left(\frac{y-x}{x} \right )^2}{2} & \frac{y-x}{x} \in [0, +\infty)\\
        -\frac{y-x}{x}+ \frac{\left(\frac{y-x}{x} \right )^2}{2} - \frac{1}{3} \left ( \frac{\left(\frac{y-x}{x} \right )^3}{1+\left(\frac{y-x}{x} \right )}\right ) & \frac{y-x}{x} \in (-1, 0)
        \end{cases}\\
        &=
        \begin{cases}
        -\frac{y-x}{x}+\frac{\left(\frac{y-x}{x} \right )^2}{2} & \frac{y-x}{x} \geq 0\\
        -\frac{y-x}{x}+ \frac{\left(\frac{y-x}{x} \right )^2}{2} - \frac{1}{3} \left ( \frac{\left({y-x} \right )^3}{x^3\left ( 1+\left(\frac{y-x}{x} \right ) \right )}\right ) & \frac{y-x}{x} <0
        \end{cases}\\
        &=
        \begin{cases}
        -\frac{y-x}{x}+\frac{(y-x)^2}{2 x^2} & y \geq x\\
        -\frac{y-x}{x}+\frac{(y-x)^2}{2 x^2} + \frac{\left({x-y} \right )^3}{3 x^2 y} & y <x
        \end{cases}.
    \end{split}
\end{equation}
	
\end{proof}

The following corollary applies the above proposition to bound the KL between two distributions.

\begin{corollary}\label{corol:general_KL_bound}
Consider two probability distribution $p, \tilde{p}$ on a finite state space, with $p$ absolutely continuous with respect to $\tilde{p}$, then:
\begin{equation}
\begin{split}
\mathbf{KL}(p(\mathbf{x})||\tilde{p}(\mathbf{x})) &\leq \frac{1}{2}\sum_x p(x) \frac{1}{p(x)^2} \left ( \tilde{p}(x) - p(x) \right )^2 \\
& \quad + \frac{1}{3}\sum_x p(x) \frac{1}{\tilde{p}(x)^3} \left | p(x)-\tilde{p}(x) \right |^3\\.
\end{split}
\end{equation}
\end{corollary}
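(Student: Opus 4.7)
The plan is to apply Proposition \ref{prop:log_bound} termwise to the defining sum of the Kullback--Leibler divergence. Starting from
\begin{equation}
\mathbf{KL}(p(x)||\tilde{p}(x)) = \sum_x p(x) \log\!\left(\frac{p(x)}{\tilde{p}(x)}\right),
\end{equation}
where the sum is over the support of $p$, I would substitute $x \mapsto p(x)$ and $y \mapsto \tilde{p}(x)$ in the pointwise bound from Proposition \ref{prop:log_bound}. This is legitimate because $p(x)\in(0,1]$ on its support and, by absolute continuity of $p$ with respect to $\tilde{p}$, also $\tilde{p}(x)>0$ there, so both arguments lie in $(0,1]$.

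Multiplying the resulting pointwise inequality by $p(x)$ and summing, I expect three contributions. The linear piece $-\sum_x (\tilde{p}(x)-p(x))$ vanishes because $p$ and $\tilde{p}$ are both probability distributions. The quadratic piece, which the proposition delivers unconditionally, collapses to $\tfrac{1}{2}\sum_x p(x)(\tilde{p}(x)-p(x))^2/p(x)^2$, matching the first term of the claimed bound exactly. The cubic piece is present only on $\{x:\tilde{p}(x)<p(x)\}$; on that set the pointwise inequality contributes
\begin{equation}
\frac{p(x)(p(x)-\tilde{p}(x))^3}{3\, p(x)^2 \tilde{p}(x)} = \frac{(p(x)-\tilde{p}(x))^3}{3\, p(x)\tilde{p}(x)}.
\end{equation}

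The only non-routine step is converting this expression to the form $p(x)|p(x)-\tilde{p}(x)|^3/(3\tilde{p}(x)^3)$ that appears in the statement. The key observation is that on $\{\tilde{p}(x)<p(x)\}$ we have $\tilde{p}(x)^2<p(x)^2$, which is equivalent to $1/(p(x)\tilde{p}(x)) \le p(x)/\tilde{p}(x)^3$. Using this to weaken the bound and then noting that the resulting summand is non-negative everywhere, I can extend the sum from $\{\tilde{p}(x)<p(x)\}$ to the whole support of $p$ without violating the inequality, replacing $(p(x)-\tilde{p}(x))^3$ by $|p(x)-\tilde{p}(x)|^3$ to cover both signs uniformly. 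Combining the quadratic and cubic contributions yields the corollary. The main obstacle is simply to justify this case-based manipulation of the cubic remainder cleanly; everything else is bookkeeping that follows from Proposition \ref{prop:log_bound} and the probability-normalisation identity.
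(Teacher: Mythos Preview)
Your proposal is correct and follows essentially the same route as the paper's proof: apply Proposition~\ref{prop:log_bound} pointwise, observe that the linear contribution cancels by normalisation, keep the quadratic term as is, and weaken the cubic remainder via $\tilde{p}(x)^2<p(x)^2$ on $\{\tilde{p}(x)<p(x)\}$ before dropping the indicator and introducing the absolute value. The only cosmetic difference is that the paper writes the intermediate cubic inequality as $1/(p(x)^2\tilde{p}(x))\le 1/\tilde{p}(x)^3$, which is the same estimate you phrase as $1/(p(x)\tilde{p}(x))\le p(x)/\tilde{p}(x)^3$.
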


\begin{proof}
The proof follows from Proposition \ref{prop:log_bound} with the convention $0\log(0)=0$ and $0 \log(0 \slash 0)=0$. Notice that whenever $p(x)=0$ we have $0\log(0)$ and whenever $\tilde{p}(x)=0$ we have $0\log(0 \slash 0)$ from absolute continuity, hence following the convention we can safely remove those $x$'s from the sum and apply our proposition (apply Proposition \ref{prop:log_bound} with $x=p(x)$ and $y=\tilde{p}(x) \in (0,1]$). The bound follows from:
\begin{equation}
    \begin{split}
        \mathbf{KL}(p(\mathbf{x})||\tilde{p}(\mathbf{x})) &= \sum_x p(x) \log \left ( \frac{p(x)}{\tilde{p}(x)}\right ) 
        \leq
        \sum_x p(x) \left [ -\frac{\tilde{p}(x) - p(x)}{p(x)} + \frac{(\tilde{p}(x) - p(x))^2}{2p(x)^2} \right ]\\
        & \quad + \sum_x \mathbb{I}(\tilde{p}(x) < p(x)) p(x) \frac{\left({p(x)-\tilde{p}(x)} \right )^3}{3 p(x)^2 \tilde{p}(x)}\\
        & =
        -\sum_x p(x) \left [ \frac{\tilde{p}(x) - p(x)}{p(x)} \right ]+ \sum_x p(x) \left [\frac{(\tilde{p}(x) - p(x))^2}{2p(x)^2} \right ]\\
        & \quad + \sum_x \mathbb{I}(\tilde{p}(x) < p(x)) p(x) \frac{\left({p(x)-\tilde{p}(x)} \right )^3}{3 p(x)^2 \tilde{p}(x)}\\
        & =
        \sum_x p(x) \left [ \frac{(\tilde{p}(x) - p(x))^2}{2p(x)^2} \right ] + \sum_x \mathbb{I}(\tilde{p}(x) < p(x)) p(x) \frac{\left({p(x)-\tilde{p}(x)} \right )^3}{3 p(x)^2 \tilde{p}(x)}\\
        & =
        \frac{1}{2}\sum_x p(x) \frac{1}{p(x)^2} \left ( \tilde{p}(x) - p(x) \right )^2 \\
        & \quad + \frac{1}{3}\sum_x \mathbb{I}(\tilde{p}(x) < p(x)) p(x) \frac{1}{p(x)^2 \tilde{p}(x)} \left ( p(x)-\tilde{p}(x) \right )^3\\
        & \leq
        \frac{1}{2}\sum_x p(x) \frac{1}{p(x)^2} \left ( \tilde{p}(x) - p(x) \right )^2 \\
        & \quad + \frac{1}{3}\sum_x \mathbb{I}(\tilde{p}(x) < p(x)) p(x) \frac{1}{\tilde{p}(x)^3} \left ( p(x)-\tilde{p}(x) \right )^3\\
        & \leq
        \frac{1}{2}\sum_x p(x) \frac{1}{p(x)^2} \left ( \tilde{p}(x) - p(x) \right )^2 \\
        & \quad + \frac{1}{3}\sum_x p(x) \frac{1}{\tilde{p}(x)^3} \left | p(x)-\tilde{p}(x) \right |^3\\
    \end{split}
\end{equation}
\end{proof}

We now provide the full proof of Theorem 1. We start by proving absolute continuity and so the validity of the KL computation. We then move to the Data processing inequality which we use to switch from a KL between $ p (y_{[T]}^n )$ and $\tilde{p} (y_{[T]}^n )$ to a KL between $ p (x_{[0:T]}, y_{[T]}^n )$ and $\tilde{p} (x_{[0:T]}, y_{[T]}^n )$. After some reformulation of the quantities, we can apply Jensen inequality and Corollary \ref{corol:general_KL_bound} which gives us a bound in terms of transition kernel difference and ratios. The final step is then the application of the assumptions along with recognizing the definition of the variance.

\begin{proof}[proof of Theorem 1]
Remark that we want to compute the KL-divergence between:
\begin{itemize}
    \item $p(y_{[T]}^n ) =  \sum_{x_{[0:T]}} p(x_{[0:T]}^{\setminus n} ) p(x_{[0:T]}^{n}|x_{[0:T]}^{\setminus n} ) \prod_{t \in [T]} p(y_t^n|x_t^n )$;
    \item $\tilde{p}(y_{[T]}^n ) = \sum_{x_{[0:T]}} p(x_{[0:T]}^{\setminus n} ) p(x_0^{n}| \theta) \prod_{t \in [T]} p(x_t^n|x_{t-1} ) p(y_t^n|x_t^n )$;
\end{itemize}
hence the first step is to prove that $p(y^n_{[T]} )$ is absolutely continuous with respect to $\tilde{p}(y^n_{[T]} )$, and so for a fixed $y_{[T]}^n$ we have $\tilde{p}(y^n_{[T]} )=0$ implies $p(y^n_{[T]} )=0$, this is necessary to ensure that the KL divergence is well-defined. Note that we have $\tilde{p}(y^n_{[T]} )=0$ if and only if for any $x_{[0:T]}$:
\begin{enumerate}
    \item $p(x_{[0:T]}^{\setminus n} )=0$ or \slash and, 
    \item $p(x_0^{n}| \theta) \prod_{t \in [T]} p(x_t^n|x_{t-1} )=0$ or \slash and, 
    \item $\prod_{t \in [T]} p(y_t^n|x_t^n )=0$.
\end{enumerate}
We can observe that conditions 1. and 3. implies also $p(y^n_{[T]} )=0$, so it is enough to prove that 2. implies $p(y^n_{[T]} )=0$ to ensure absolute continuity. Consider $p(x_{[0:T]}^n|x_{[0:T]}^{\setminus n},\theta)$, given that from \eqref{eq:recursive_conditional_joint} we have:
\begin{equation}
    p(x_{[0:T]}^n|x_{[0:T]}^{\setminus n},\theta) = p(x_0^n ) \prod_{t \in [T]} p(x_t^n|x_{t-1},\theta) f \left (x_{t-1}^n, x_{[0:t]}^{\setminus n} \right ),
\end{equation}
we can conclude that 2. implies $p(x_{[0:T]}^n|x_{[0:T]}^{\setminus n},\theta)=0$. Note that the proof of absolute continuity relies on the joint $p(x_{[0:T]}, y^n_{[T]} )$ being absolutely continuous with respect to $\tilde{p}(x_{[0:T]}, y^n_{[T]} )$. Given that the KL is well-defined we can now proceed with the proof.
	
Start by applying Lemma \ref{lemma:data_proc_ineq} and get:
\begin{equation}
    \begin{split}
        \mathbf{KL}(p(\mathbf{y}_{[T]}^n)||\tilde{p}(\mathbf{y}_{[T]}^n))
        &\leq
        \mathbf{KL}({p}(\mathbf{x}_{[0:T]}, \mathbf{y}^n_{[T]} )||\tilde{p}(\mathbf{x}_{[0:T]}, \mathbf{y}^n_{[T]} ))
        \\&=
        \sum_{y^n_{[T]}} \sum_{x_{[0:T]}} {p}(x_{[0:T]}, y^n_{[T]} ) \log \left ( \frac{{p}(x_{[0:T]}, y^n_{[T]} )}{\tilde{p}(x_{[0:T]}, y^n_{[T]} )} \right ).
    \end{split}
\end{equation}
Use now the definition of $\tilde{p}(x_{[0:T]}, y^n_{[T]} )$:
\begin{equation}
    \begin{split}
        &\mathbf{KL}(p(\mathbf{y}_{[T]}^n)||\tilde{p}(\mathbf{y}_{[T]}^n))\leq
        \sum_{y^n_{[T]}} \sum_{x_{[0:T]}} {p}(x_{[0:T]}, y^n_{[T]} ) \log \left ( \frac{{p}(x_{[0:T]}, y^n_{[T]} )}{\tilde{p}(x_{[0:T]}, y^n_{[T]} )} \right )\\
        &=
        \sum_{y^n_{[T]}} \sum_{x_{[0:T]}} p(x_{[0:T]}^{\setminus n} ) p(x_{[0:T]}^{n}|x_{[0:T]}^{\setminus n} ) \prod_{t \in [T]} p(y_t^n|x_t^n )\\
        &\qquad \qquad \log \left ( \frac{p(x_{[0:T]}^{\setminus n} ) p(x_{[0:T]}^{n}|x_{[0:T]}^{\setminus n} ) \prod_{t \in [T]} p(y_t^n|x_t^n )}{p(x_{[0:T]}^{\setminus n} ) p(x_0^{n} ) \prod_{t \in [T]} p(x_t^n|x_{t-1} ) p(y_t^n|x_t^n )} \right )\\
        &=
         \sum_{x_{[0:T]}} p(x_{[0:T]}^{\setminus n} ) p(x_{[0:T]}^{n}|x_{[0:T]}^{\setminus n} ) \sum_{y^n_{[T]}} \prod_{t \in [T]} p(y_t^n|x_t^n ) \log \left ( \frac{p(x_{[0:T]}^{\setminus n} ) p(x_{[0:T]}^{n}|x_{[0:T]}^{\setminus n} )}{p(x_{[0:T]}^{\setminus n} ) p(x_0^{n} ) \prod_{t \in [T]} p(x_t^n|x_{t-1} )} \right )
    \end{split}
\end{equation}
from which we can also simplify $p(x_{[0:T]}^{\setminus n}$ and given that $\sum_{y^n_{[T]}} \prod_{t \in [T]} p(y_t^n|x_t^n )=1$ we conclude:
\begin{equation}
    \begin{split}
        &\mathbf{KL}(p(\mathbf{y}_{[T]}^n)||\tilde{p}(\mathbf{y}_{[T]}^n))\\
        &\leq
         \sum_{x_{[0:T]}} p(x_{[0:T]}^{\setminus n} ) p(x_{[0:T]}^{n}|x_{[0:T]}^{\setminus n} ) \sum_{y^n_{[T]}} \prod_{t \in [T]} p(y_t^n|x_t^n ) \log \left ( \frac{p(x_{[0:T]}^{\setminus n} ) p(x_{[0:T]}^{n}|x_{[0:T]}^{\setminus n} )}{p(x_{[0:T]}^{\setminus n} ) p(x_0^{n} ) \prod_{t \in [T]} p(x_t^n|x_{t-1} )} \right )\\
        &=
        \sum_{x_{[0:T]}} p(x_{[0:T]}^{\setminus n} ) p(x_{[0:T]}^{n}|x_{[0:T]}^{\setminus n} ) \log \left ( \frac{p(x_{[0:T]}^{n}|x_{[0:T]}^{\setminus n} )}{ p(x_0^{n} ) \prod_{t \in [T]} p(x_t^n|x_{t-1} )} \right ).
    \end{split}
\end{equation}
Now we can use the recursive definition of $p(x_{[0:T]}^{n}|x_{[0:T]}^{\setminus n} )$ in \eqref{eq:recursive_conditional_joint}:
\begin{equation}
    \begin{split}
        &\mathbf{KL}(p(\mathbf{y}_{[T]}^n)||\tilde{p}(\mathbf{y}_{[T]}^n))\leq
        \sum_{x_{[0:T]}} p(x_{[0:T]}^{\setminus n} ) p(x_{[0:T]}^{n}|x_{[0:T]}^{\setminus n} ) \log \left ( \frac{p(x_{[0:T]}^{n}|x_{[0:T]}^{\setminus n} ) }{p(x_0^{n} ) \prod_{t \in [T]} p(x_t^n|x_{t-1} )} \right )\\
        &=
        \sum_{x_{[0:T]}} p(x_{[0:T]} ) \log \left ( \frac{p(x_{[0:T]}^{n}|x_{[0:T]}^{\setminus n} ) }{p(x_0^{n} ) \prod_{t \in [T]} p(x_t^n|x_{t-1} )} \right )\\
        &=
        \sum_{x_{[0:T]}} p(x_{[0:T]} ) \log \left ( \frac{ p(x_0^{n} ) \prod_{t \in [T]} p(x_t^n|x_{t-1} ) f \left (x_{t-1}^n, x_{[0:t]}^{\setminus n} \right )}{p(x_0^{n} ) \prod_{t \in [T]} p(x_t^n|x_{t-1} )} \right )\\
        &=
        \sum_{x_{[0:T]}} p(x_{[0:T]} ) \log \left ( \prod_{t \in [T]} f \left (x_{t-1}^n, x_{[0:t]}^{\setminus n} \right ) \right )=
        \sum_{x_{[0:T]}} p(x_{[0:T]} ) \sum_{t \in [T]}\log \left ( f \left (x_{t-1}^n, x_{[0:t]}^{\setminus n} \right ) \right ).
    \end{split}
\end{equation}
Note now that we can move the sum over time steps in front and, given that the argument of the logarithm depends only on $x_{[0:t-1]},x_t^{\setminus n}$ we can also simplify $p(x_{[0:T]} )$:
\begin{equation}\label{eq:before_jensen}
    \begin{split}
        &\mathbf{KL}(p(\mathbf{y}_{[T]}^n)||\tilde{p}(\mathbf{y}_{[T]}^n))\leq
        \sum_{x_{[0:T]}} p(x_{[0:T]} ) \sum_{t \in [T]}\log \left ( f \left (x_{t-1}^n, x_{[0:t]}^{\setminus n} \right ) \right )\\
        &=
        \sum_{t \in [T]} \sum_{x_{[0:t-1]}, x_t^{\setminus n}} p(x_{[0:t-1]} )\prod_{\bar{n} \in [N], \bar{n} \neq n} p(x_{t}^{\bar{n}}|x_{t-1} )\\
        & \qquad \qquad \log \left ( \frac{ \prod_{\bar{n} \in [N], \bar{n} \neq n} p(x_{t}^{\bar{n}}|x_{t-1} )}{\sum_{\bar{x}_{t-1}^n} \prod_{\bar{n} \in [N], \bar{n} \neq n} p(x_{t}^{\bar{n}}|\bar{x}_{t-1} ) p(\bar{x}_{t-1}^n|x_{[0:t-1]}^{\setminus n} )} \right ),\\
    \end{split}
\end{equation}
where $\bar{x}_{t-1}$ is such that $\bar{x}_{t-1}^{\setminus n}= x_{t-1}^{\setminus n}$ and $\bar{x}_{t-1}^n \neq x_{t-1}^n$.

Given that $-log(x)$ is a convex function and the denominator of our logarithm is an expectation we can apply Jensen inequality:
\begin{equation}
    \begin{split}
        &\log \left ( \frac{ \prod_{\bar{n} \in [N], \bar{n} \neq n} p(x_{t}^{\bar{n}}|x_{t-1} )}{\sum_{\bar{x}_{t-1}^n} \prod_{\bar{n} \in [N], \bar{n} \neq n} p(x_{t}^{\bar{n}}|\bar{x}_{t-1} ) p(\bar{x}_{t-1}^n|x_{[0:t-1]}^{\setminus n} )} \right )\\
        &=
        -\log \left ( \sum_{\bar{x}_{t-1}^n} p(\bar{x}_{t-1}^n|x_{[0:t-1]}^{\setminus n} )	\frac{ \prod_{\bar{n} \in [N], \bar{n} \neq n} p(x_{t}^{\bar{n}}|\bar{x}_{t-1} ) }{ \prod_{\bar{n} \in [N], \bar{n} \neq n} p(x_{t}^{\bar{n}}|x_{t-1} )} \right )\\
        &\leq  -\sum_{\bar{x}_{t-1}^n} p(\bar{x}_{t-1}^n|x_{[0:t-1]}^{\setminus n} ) 
        \log \left ( \prod_{\bar{n} \in [N], \bar{n} \neq n} \frac{  p(x_{t}^{\bar{n}}|\bar{x}_{t-1} ) }{ p(x_{t}^{\bar{n}}|x_{t-1} )} \right ) \\
        &=
        \sum_{\bar{x}_{t-1}^n} p(\bar{x}_{t-1}^n|x_{[0:t-1]}^{\setminus n} )  
        \log \left ( \prod_{\bar{n} \in [N], \bar{n} \neq n} \frac{ p(x_{t}^{\bar{n}}|x_{t-1} )}{  p(x_{t}^{\bar{n}}|\bar{x}_{t-1} ) } \right ).
    \end{split}
\end{equation}
Joining \eqref{eq:before_jensen} with the above yields to:
\begin{equation}
    \begin{split}
        &\mathbf{KL}(p(\mathbf{y}_{[T]}^n)||\tilde{p}(\mathbf{y}_{[T]}^n)) \leq
        \sum_{t \in [T]} \sum_{x_{[0:t-1]},x_t^{\setminus n}} p(x_{[0:t-1]},x_t^{\setminus n} ) \left [  \sum_{\bar{x}_{t-1}^n} p(\bar{x}_{t-1}^n|x_{[0:t-1]}^{\setminus n} )  \right.\\
        & \left. \qquad \qquad \qquad \qquad \qquad \qquad \qquad \qquad
        \log \left ( \prod_{\bar{n} \in [N], \bar{n} \neq n} \frac{ p(x_{t}^{\bar{n}}|x_{t-1} )}{  p(x_{t}^{\bar{n}}|\bar{x}_{t-1} ) } \right ) \right ]\\
        &=
        \sum_{t \in [T]} \sum_{x_{[0:t-1]},x_t^{\setminus n}} p(x_{[0:t-1]},x_t^{\setminus n} ) \left [  \sum_{\bar{x}_{t-1}^n} p(\bar{x}_{t-1}^n|x_{[0:t-1]}^{\setminus n} ) 
        \sum_{\bar{n} \in [N], \bar{n} \neq n} \log \left (  \frac{ p(x_{t}^{\bar{n}}|x_{t-1} )}{  p(x_{t}^{\bar{n}}|\bar{x}_{t-1} ) } \right ) \right ].\\
    \end{split}
\end{equation}
Similarly to what we did with the time steps sum, we can move the sum over the dimensions $\bar{n}$ and exchange the order with $\sum_{x_t^{\setminus n}}$, this will allow us to have a dependence on $\bar{n}$ and so simplify $p(x_t^{\setminus n}|x_{t-1})$ to $p(x_t^{\bar{n}}|x_{t-1})$:
\begin{equation}
    \begin{split}
    &\mathbf{KL}(p(\mathbf{y}_{[T]}^n)||\tilde{p}(\mathbf{y}_{[T]}^n))\\
    &\leq
    \sum_{t \in [T]} \sum_{x_{[0:t-1]},x_t^{\setminus n}} p(x_{[0:t-1]},x_t^{\setminus n} ) \left [  \sum_{\bar{x}_{t-1}^n} p(\bar{x}_{t-1}^n|x_{[0:t-1]}^{\setminus n} ) \sum_{\bar{n} \in [N], \bar{n} \neq n} \log \left (  \frac{ p(x_{t}^{\bar{n}}|x_{t-1} )}{  p(x_{t}^{\bar{n}}|\bar{x}_{t-1} ) } \right ) \right ]\\
    &=
    \sum_{t \in [T]} \sum_{x_{[0:t-1]},x_t^{\setminus n}} p(x_{[0:t-1]} ) p(x_t^{\setminus n}|x_{t-1} ) \left [  \sum_{\bar{x}_{t-1}^n} p(\bar{x}_{t-1}^n|x_{[0:t-1]}^{\setminus n} ) \sum_{\bar{n} \in [N], \bar{n} \neq n} \log \left (  \frac{ p(x_{t}^{\bar{n}}|x_{t-1} )}{  p(x_{t}^{\bar{n}}|\bar{x}_{t-1} ) } \right ) \right ]\\
    &=
    \sum_{t \in [T]} \sum_{x_{[0:t-1]}} \sum_{\bar{x}_{t-1}^n} p(\bar{x}_{t-1}^n|x_{[0:t-1]}^{\setminus n} ) p(x_{[0:t-1]} )  \left [ \sum_{x_t^{\setminus n}} p(x_t^{\setminus n}|x_{t-1} ) \sum_{\bar{n} \in [N], \bar{n} \neq n} \log \left (  \frac{ p(x_{t}^{\bar{n}}|x_{t-1} )}{  p(x_{t}^{\bar{n}}|\bar{x}_{t-1} ) } \right ) \right ]\\
    &=
    \sum_{t \in [T]} \sum_{x_{[0:t-1]}} \sum_{\bar{x}_{t-1}^n} p(\bar{x}_{t-1}^n|x_{[0:t-1]}^{\setminus n} ) p(x_{[0:t-1]} ) \sum_{\bar{n} \in [N], \bar{n} \neq n} \left [ \sum_{x_t^{\bar{n}}} p(x_t^{\bar{n}}|x_{t-1} )
    \log \left (  \frac{ p(x_{t}^{\bar{n}}|x_{t-1} )}{  p(x_{t}^{\bar{n}}|\bar{x}_{t-1} ) } \right ) \right ]\\
    &=
    \sum_{t \in [T]} \sum_{x_{[0:t-1]}} \sum_{\bar{x}_{t-1}^n} p(\bar{x}_{t-1}^n|x_{[0:t-1]}^{\setminus n} ) p(x_{[0:t-1]} ) \sum_{\bar{n} \in [N], \bar{n} \neq n} \mathbf{KL} \left ( p(\mathbf{x}_t^{\bar{n}}|\mathbf{x}_{t-1} ) ||
    p(\mathbf{x}_t^{\bar{n}}|\bar{\mathbf{x}}_{t-1} ) \right ).
    \end{split}
\end{equation}

We can now use Corollary \ref{corol:general_KL_bound} and get:
\begin{equation}
    \begin{split}
        &\mathbf{KL}(p(\mathbf{y}_{[T]}^n)||\tilde{p}(\mathbf{y}_{[T]}^n))\\
        &\leq
        \sum_{t \in [T]} \sum_{x_{[0:t-1]}} \sum_{\bar{x}_{t-1}^n} p(\bar{x}_{t-1}^n|x_{[0:t-1]}^{\setminus n} ) p(x_{[0:t-1]} ) \sum_{\bar{n} \in [N], \bar{n} \neq n} \mathbf{KL} \left ( p(\mathbf{x}_t^{\bar{n}}|\mathbf{x}_{t-1} ) ||
        p(\mathbf{x}_t^{\bar{n}}|\bar{\mathbf{x}}_{t-1} ) \right )\\
        &\leq
        \sum_{t \in [T]} \sum_{x_{[0:t-1]}} \sum_{\bar{x}_{t-1}^n} p(\bar{x}_{t-1}^n|x_{[0:t-1]}^{\setminus n} ) p(x_{[0:t-1]} ) \\
        & \qquad
        \sum_{\bar{n} \in [N], \bar{n} \neq n} \frac{1}{2}\left [ \sum_{x_t^{\bar{n}}} p(x_t^{\bar{n}}|x_{t-1} ) \frac{1}{p(x_t^{\bar{n}}|x_{t-1} )^2} \left ( p(x_t^{\bar{n}}|x_{t-1} )  - p(x_t^{\bar{n}}|\bar{x}_{t-1} ) \right )^2 \right ]\\
        & \qquad\qquad\qquad
        + \frac{1}{3}\left [ \sum_{x_t^{\bar{n}}} p(x_t^{\bar{n}}|x_{t-1} ) \frac{1}{p(x_t^{\bar{n}}|\bar{x}_{t-1} )^3} \left | p(x_t^{\bar{n}}|x_{t-1} )  - p(x_t^{\bar{n}}|\bar{x}_{t-1} ) \right |^3 \right ].
    \end{split}
\end{equation}

We can then bound the differences between the kernels with our Assumption \ref{ass:kernel_app}:
\begin{equation}
    \begin{split}
    &\mathbf{KL}(p(\mathbf{y}_{[T]}^n)||\tilde{p}(\mathbf{y}_{[T]}^n))\\
    &\leq
    \sum_{t \in [T]} \sum_{x_{[0:t-1]}} \sum_{\bar{x}_{t-1}^n} p(\bar{x}_{t-1}^n|x_{[0:t-1]}^{\setminus n} ) p(x_{[0:t-1]} ) \\
    & \qquad
    \sum_{\bar{n} \in [N], \bar{n} \neq n} \frac{1}{2}\left [ \sum_{x_t^{\bar{n}}} p(x_t^{\bar{n}}|x_{t-1} ) \frac{1}{p(x_t^{\bar{n}}|x_{t-1} )^2} \left ( p(x_t^{\bar{n}}|x_{t-1} )  - p(x_t^{\bar{n}}|\bar{x}_{t-1} ) \right )^2 \right ]\\
    & \qquad\qquad\qquad
    + \frac{1}{3}\left [ \sum_{x_t^{\bar{n}}} p(x_t^{\bar{n}}|x_{t-1} ) \frac{1}{p(x_t^{\bar{n}}|\bar{x}_{t-1} )^3} \left | p(x_t^{\bar{n}}|x_{t-1} )  - p(x_t^{\bar{n}}|\bar{x}_{t-1} ) \right |^3 \right ]\\
    &\leq
    \sum_{t \in [T]} \sum_{x_{[0:t-1]}} \sum_{\bar{x}_{t-1}^n} p(\bar{x}_{t-1}^n|x_{[0:t-1]}^{\setminus n} ) p(x_{[0:t-1]} ) \\
    & \qquad
    \sum_{\bar{n} \in [N], \bar{n} \neq n} \frac{1}{2}\left [ \sum_{x_t^{\bar{n}}} p(x_t^{\bar{n}}|x_{t-1} ) \frac{1}{p(x_t^{\bar{n}}|x_{t-1} )^2} \frac{1}{N^2} \left | d_{n,\bar{n}} \left ( x_{t-1}^{n} \right )  - d_{n,\bar{n}} \left ( \bar{x}_{t-1}^{n} \right ) \right |^2\right ]\\
    & \qquad
    + \frac{1}{3}\left [ \sum_{x_t^{\bar{n}}} p(x_t^{\bar{n}}|x_{t-1} ) \frac{1}{p(x_t^{\bar{n}}|\bar{x}_{t-1} )^3} \frac{1}{N^3} \left | d_{n,\bar{n}} \left ( x_{t-1}^{n} \right )  - d_{n,\bar{n}} \left ( \bar{x}_{t-1}^{n} \right ) \right |^3 \right ],
    \end{split}
\end{equation} 
and Assumption \ref{ass:boundeness_app}:
\begin{equation}
    \begin{split}
        &\mathbf{KL}(p(\mathbf{y}_{[T]}^n)||\tilde{p}(\mathbf{y}_{[T]}^n))\\
        &\leq
        \sum_{t \in [T]} \sum_{x_{[0:t-1]}} \sum_{\bar{x}_{t-1}^n} p(\bar{x}_{t-1}^n|x_{[0:t-1]}^{\setminus n} ) p(x_{[0:t-1]} ) \\
        & \qquad
        \sum_{\bar{n} \in [N], \bar{n} \neq n} \frac{1}{2} \frac{1}{N^2} \left | d_{n,\bar{n}} \left ( x_{t-1}^{n} \right )  - d_{n,\bar{n}} \left ( \bar{x}_{t-1}^{n} \right ) \right |^2 \left [ \sum_{x_t^{\bar{n}}} p(x_t^{\bar{n}}|x_{t-1} ) \frac{1}{p(x_t^{\bar{n}}|x_{t-1} )^2} \right ]\\
        & \qquad
        + \frac{1}{3}\frac{1}{N^3} \left | d_{n,\bar{n}} \left ( x_{t-1}^{n} \right )  - d_{n,\bar{n}} \left ( \bar{x}_{t-1}^{n} \right ) \right |^3 \left [ \sum_{x_t^{\bar{n}}} p(x_t^{\bar{n}}|x_{t-1} ) \frac{1}{p(x_t^{\bar{n}}|\bar{x}_{t-1} )^3}  \right ]\\
        &\leq
        \sum_{t \in [T]} \sum_{x_{[0:t-1]}} \sum_{\bar{x}_{t-1}^n} p(\bar{x}_{t-1}^n|x_{[0:t-1]}^{\setminus n} ) p(x_{[0:t-1]} ) \\
        & \qquad
        \sum_{\bar{n} \in [N], \bar{n} \neq n} \frac{1}{2} \frac{1}{\epsilon^2 N^2} \left | d_{n,\bar{n}} \left ( x_{t-1}^{n} \right )  - d_{n,\bar{n}} \left ( \bar{x}_{t-1}^{n} \right ) \right |^2
        + \frac{1}{3}\frac{1}{\epsilon^3 N^3} \left | d_{n,\bar{n}} \left ( x_{t-1}^{n} \right )  - d_{n,\bar{n}} \left ( \bar{x}_{t-1}^{n} \right ) \right |^3\\
        &\leq
        \sum_{t \in [T]} \sum_{x_{[0:t-1]}} \sum_{\bar{x}_{t-1}^n} p(\bar{x}_{t-1}^n|x_{[0:t-1]}^{\setminus n} ) p(x_{[0:t-1]} ) \\
        & \qquad
        \sum_{\bar{n} \in [N], \bar{n} \neq n} \left [ \frac{1}{2} \frac{1}{\epsilon^2 N} 
        + \frac{1}{3}\frac{1}{\epsilon^3 N} \right ] \frac{1}{N}\left | d_{n,\bar{n}} \left ( x_{t-1}^{n} \right )  - d_{n,\bar{n}} \left ( \bar{x}_{t-1}^{n} \right ) \right |^2\\
    \end{split}
\end{equation} 
where in the last step we used $\left | d_{n,\bar{n}} \left ( x_{t-1}^{n} \right )  - d_{n,\bar{n}} \left ( \bar{x}_{t-1}^{n} \right ) \right | < N$. We can then reformulate:
\begin{equation}
    \begin{split}
        &\mathbf{KL}(p(\mathbf{y}_{[T]}^n)||\tilde{p}(\mathbf{y}_{[T]}^n))\\
        &\leq
        \sum_{t \in [T]} \sum_{x_{[0:t-1]}} \sum_{\bar{x}_{t-1}^n} p(\bar{x}_{t-1}^n|x_{[0:t-1]}^{\setminus n} ) p(x_{[0:t-1]} ) \\
        & \qquad
        \sum_{\bar{n} \in [N], \bar{n} \neq n} \left [ \frac{1}{2} \frac{1}{\epsilon^2 N} 
        + \frac{1}{3}\frac{1}{\epsilon^3 N} \right ] \frac{1}{N}\left | d_{n,\bar{n}} \left ( x_{t-1}^{n} \right )  - d_{n,\bar{n}} \left ( \bar{x}_{t-1}^{n} \right ) \right |^2\\
        &\leq
        \left [ \frac{1}{2} \frac{1}{\epsilon^2 N} 
        + \frac{1}{3}\frac{1}{\epsilon^3 N} \right ] \sum_{t \in [T]} \sum_{\bar{n} \in [N], \bar{n} \neq n} \frac{1}{N} \\
        & \qquad
        \sum_{x_{[0:t-1]}} \sum_{\bar{x}_{t-1}^n} p(\bar{x}_{t-1}^n|x_{[0:t-1]}^{\setminus n} ) p(x_{[0:t-1]} )\left | d_{n,\bar{n}} \left ( x_{t-1}^{n} \right )  - d_{n,\bar{n}} \left ( \bar{x}_{t-1}^{n} \right ) \right |^2.\\
    \end{split}
\end{equation}  
Now we need to work on:
\begin{equation}
    \sum_{x_{[0:t-1]}} \sum_{\bar{x}_{t-1}^n} p(\bar{x}_{t-1}^n|x_{[0:t-1]}^{\setminus n} ) p(x_{[0:t-1]} ) \left | d_{n,\bar{n}} \left ( x_{t-1}^{n} \right )  - d_{n,\bar{n}} \left ( \bar{x}_{t-1}^{n} \right ) \right |^2,
\end{equation}
which can be reformulated by expanding the square:
\begin{equation}
    \sum_{x_{[0:t-1]}} \sum_{\bar{x}_{t-1}^n} p(\bar{x}_{t-1}^n|x_{[0:t-1]}^{\setminus n} ) p(x_{[0:t-1]} ) \left ( d_{n,\bar{n}} \left ( x_{t-1}^{n} \right )^2 - 2 d_{n,\bar{n}} \left ( x_{t-1}^{n} \right ) d_{n,\bar{n}} \left ( \bar{x}_{t-1}^{n} \right ) + d_{n,\bar{n}} \left ( \bar{x}_{t-1}^{n} \right )^2 \right ),
\end{equation}
and we can work on three terms inside the parenthesis separately. Firstly, we have:
\begin{equation}
    \begin{split}
        \sum_{x_{[0:t-1]}} \sum_{\bar{x}_{t-1}^n} p(\bar{x}_{t-1}^n|x_{[0:t-1]}^{\setminus n} ) p(x_{[0:t-1]} ) d_{n,\bar{n}} \left ( x_{t-1}^{n} \right )^2&= \mathbb{E}\left [ d_{n,\bar{n}} \left ( \mathbf{x}_{t-1}^{n} \right )^2 \right ]
        \\
        &= \mathbb{E} \left \{ \mathbb{E}\left [ d_{n,\bar{n}} \left ( \mathbf{x}_{t-1}^{n} \right )^2 \Big{|} \mathbf{x}_{[0:t-1]}^{\setminus n} \right ] \right \},
    \end{split}
\end{equation}
as we have no dependence on $\bar{x}_{t-1}^n$ and where the last step follows from the law of total expectations. Now we look at the other squared term:
\begin{equation}
    \begin{split}
        &\sum_{x_{[0:t-1]}} \sum_{\bar{x}_{t-1}^n} p(\bar{x}_{t-1}^n|x_{[0:t-1]}^{\setminus n} ) p(x_{[0:t-1]} ) d_{n,\bar{n}} \left ( \bar{x}_{t-1}^{n} \right )^2\\
        &=
        \sum_{x_{[0:t-1]}^{\setminus n}} \sum_{\bar{x}_{t-1}^n} p(\bar{x}_{t-1}^n|x_{[0:t-1]}^{\setminus n} ) p(x_{[0:t-1]}^{\setminus n} ) d_{n,\bar{n}} \left ( \bar{x}_{t-1}^{n} \right )^2 = \mathbb{E} \left \{ \mathbb{E} \left [ d_{n,\bar{n}} \left ( \mathbf{x}_{t-1}^{n} \right )^2 \Big{|} \mathbf{x}_{[0:t-1]}^{\setminus n} \right ] \right \},
    \end{split}    
\end{equation}
as again we have no dependence on $x_{[0:t-1]}^{n}$ and so we can marginilise them out from $p(x_{[0:t-1]} )$. There is only the cross term left, and try to reformulate it as an expectation:
\begin{equation}
    \begin{split}
        &\sum_{x_{[0:t-1]}} \sum_{\bar{x}_{t-1}^n} p(\bar{x}_{t-1}^n|x_{[0:t-1]}^{\setminus n} ) p(x_{[0:t-1]} ) d_{n,\bar{n}} \left ( x_{t-1}^{n} \right ) d_{n,\bar{n}} \left ( \bar{x}_{t-1}^{n} \right )\\
        &=
        \sum_{x_{[0:t-2]}} \sum_{x_{t-1}^{\setminus n}} \sum_{x_{t-1}^n} \sum_{\bar{x}_{t-1}^n} p(x_{[0:t-1]} ) p(\bar{x}_{t-1}^n|x_{[0:t-1]}^{\setminus n} ) d_{n,\bar{n}} \left ( x_{t-1}^{n} \right ) d_{n,\bar{n}} \left ( \bar{x}_{t-1}^{n} \right )\\
        &=
        \sum_{x_{[0:t-2]}^{\setminus n}} \sum_{x_{t-1}^{\setminus n}} \sum_{x_{t-1}^n} \sum_{\bar{x}_{t-1}^n} p(x_{[0:t-2]}^{\setminus n}, x_{t-1} ) p(\bar{x}_{t-1}^n|x_{[0:t-1]}^{\setminus n} ) d_{n,\bar{n}} \left ( x_{t-1}^{n} \right ) d_{n,\bar{n}} \left ( \bar{x}_{t-1}^{n} \right )\\
        &=
        \sum_{x_{[0:t-1]}^{\setminus n}} p(x_{[0:t-1]}^{\setminus n} ) \sum_{x_{t-1}^n} \sum_{\bar{x}_{t-1}^n} p(x_{t-1}^n | x_{[0:t-1]}^{\setminus n} ) p(\bar{x}_{t-1}^n|x_{[0:t-1]}^{\setminus n} ) d_{n,\bar{n}} \left ( x_{t-1}^{n} \right ) d_{n,\bar{n}} \left ( \bar{x}_{t-1}^{n} \right )\\
        &=
        \sum_{x_{[0:t-1]}^{\setminus n}} p(x_{[0:t-1]}^{\setminus n} ) \left [ \sum_{x_{t-1}^n} p(x_{t-1}^n | x_{[0:t-1]}^{\setminus n} ) d_{n,\bar{n}} \left ( x_{t-1}^{n} \right ) \right ] \left [ \sum_{\bar{x}_{t-1}^n} p(\bar{x}_{t-1}^n|x_{[0:t-1]}^{\setminus n} )  d_{n,\bar{n}} \left ( \bar{x}_{t-1}^{n} \right ) \right ]\\
        &=
        \sum_{x_{[0:t-1]}^{\setminus n}} p(x_{[0:t-1]}^{\setminus n} ) \mathbb{E} \left [ d_{n,\bar{n}} \left ( \mathbf{x}_{t-1}^{n} \right ) \Big{|} \mathbf{x}_{[0:t-1]}^{\setminus n} \right ]^2
        =
        \mathbb{E} \left \{
        \mathbb{E} \left [d_{n,\bar{n}} \left ( \mathbf{x}_{t-1}^{n} \right ) \Big{|} \mathbf{x}_{[0:t-1]}^{\setminus n} \right ]^2 \right \}.
    \end{split}
\end{equation}
From the three reformulations above it follows that:
\begin{equation}
    \begin{split}
        &\mathbf{KL}(p(\mathbf{y}_{[T]}^n)||\tilde{p}(\mathbf{y}_{[T]}^n))\\
        &\leq
        \left [ \frac{1}{2 \epsilon^2 N} 
        + \frac{1}{3 \epsilon^3 N} \right ]\\
        & \qquad\sum_{t \in [T]} 
        \sum_{\bar{n} \in [N], \bar{n} \neq n} \frac{1}{N} \sum_{x_{[0:t-1]}} \sum_{\bar{x}_{t-1}^n} p(\bar{x}_{t-1}^n|x_{[0:t-1]}^{\setminus n} ) p(x_{[0:t-1]} ) \left | d_{n,\bar{n}} \left ( x_{t-1}^{n} \right )  - d_{n,\bar{n}} \left ( \bar{x}_{t-1}^{n} \right ) \right |^2\\
        &\leq
        \left [ \frac{1}{2 \epsilon^2 N} 
        + \frac{1}{3 \epsilon^3 N} \right ]\\
        & \qquad\sum_{t \in [T]} 
        \sum_{\bar{n} \in [N], \bar{n} \neq n} \frac{1}{N} \left [ 
        2 \mathbb{E} \left \{ \mathbb{E} \left [ d_{n,\bar{n}} \left ( \mathbf{x}_{t-1}^{n} \right )^2 \Big{|} \mathbf{x}_{[0:t-1]}^{\setminus n} \right ] \right \}
        - 2 \mathbb{E} \left \{
        \mathbb{E} \left [d_{n,\bar{n}} \left ( \mathbf{x}_{t-1}^{n} \right ) \Big{|} \mathbf{x}_{[0:t-1]}^{\setminus n} \right ]^2 \right \} \right ]\\
        &\leq
        2\left [ \frac{1}{2 \epsilon^2 N} 
        + \frac{1}{3 \epsilon^3 N} \right ] \sum_{\bar{n} \in [N], \bar{n} \neq n} \frac{1}{N} \left [ \sum_{t \in [T]} 
        \mathbb{E} \left \{ \mathbb{V}ar \left [ d_{n,\bar{n}} \left ( \mathbf{x}_{t-1}^{n} \right ) \Big{|} \mathbf{x}_{[0:t-1]}^{\setminus n} \right ] \right \} \right ]\\
        &\leq
        2\left [ \frac{1}{2 \epsilon^2 N} 
        + \frac{1}{3 \epsilon^3 N} \right ] \sum_{t \in [T]} 
        \mathbb{E} \left \{ \frac{1}{N} \sum_{\bar{n} \in [N], \bar{n} \neq n} \mathbb{V}ar \left [ d_{n,\bar{n}} \left ( \mathbf{x}_{t-1}^{n} \right ) \Big{|} \mathbf{x}_{[0:t-1]}^{\setminus n} \right ] \right \},\\
    \end{split}
\end{equation} 
which completes the proof.

\end{proof}

\subsection{KL bounds for SimBa-CL for general partitions}

As explained in the main paper we can generalize SimBa-CL with and without feedback to work on different factorizations of the dimensions. Precisely, if we have a general partition $\mathcal{K}$ of $[N]$, then we can define for each $K \in \mathcal{K}$:
\begin{align}
    & p\left (y_{[T]}^K\right ) \coloneqq  \sum_{x_{[0:T]}} p\left (x_{[0:T]}^{\setminus K}\right ) p\left (x_{[0:T]}^{K}|x_{[0:T]}^{\setminus K}\right ) \prod_{t \in [T]} \prod_{n \in K} p\left (y_t^n|x_t^n\right );  \label{eq:general_part_with_feed}\\
    & \tilde{p}_\mathcal{K}\left (y_{[T]}^K\right ) \coloneqq \sum_{x_{[0:T]}} p\left (x_{[0:T]}^{\setminus K}\right ) \prod_{n \in K} p\left (x_0^{n}\right ) \prod_{t \in [T]} p\left (x_t^n|x_{t-1}\right ) p\left (y_t^n|x_t^n\right ),\label{eq:general_part_without_feed_app}
\end{align}
and then combine them as $p_\mathcal{K}(y_{[T]} ) \coloneqq  \prod_{K \in \mathcal{K}} p\left (y_{[T]}^K\right )$ and as $\tilde{p}_\mathcal{K}(y_{[T]} ) \coloneqq \prod_{K \in \mathcal{K}} \tilde{p}_\mathcal{K}\left (y_{[T]}^K\right )$ to provide an approximation of our likelihood $p(y_{[T]})$. In this case, we are computing the marginals on subsets of the state-space, which can be particularly suited when we expect strong dependence inside the elements of the partition, e.g. an agent-based model with individuals partitioned in households. 

Again $p_\mathcal{K}(y_{[T]} )$ can be seen as a SimBa-CL with feedback, where the likelihood is approximated with the product of the true marginals, and $\tilde{p}_\mathcal{K}(y_{[T]} )$ can be seen as a SimBa-CL without feedback. As for the fully factorize case, we have a recursive formula for $p(x_{[0:T]}^K|x_{[0:T]}^{\setminus K} )$ given by:
\begin{equation}\label{eq:recursive_conditional_joint_partition}
    \begin{split}
        p(x_{[0:T]}^K|x_{[0:T]}^{\setminus K} ) 
        &=
        \prod_{n \in K} p(x_{0}^n ) \prod_{t \in [T]} p(x_{t}^n|x_{t-1} ) f_K (x_{t-1}^K, x_{[0:t]}^{\setminus K}),
    \end{split}
\end{equation}
where the simulation feedback is now:
\begin{equation}\label{eq:sim_feed_gen_part}
    \begin{split}
        f_K (x_{t-1}^K, x_{[0:t]}^{\setminus K})
        \coloneqq 
        \frac{ \prod_{\bar{n} \in [N] \setminus K} p(x_{t}^{\bar{n}}|x_{t-1} )}{\sum_{\bar{x}_{t-1}^K} \prod_{\bar{n} \in [N] \setminus K} p(x_{t}^{\bar{n}}|\bar{x}_{t-1}^K, x_{[t-1]}^{\setminus K} ) p(\bar{x}_{t-1}^K|x_{0}^{\setminus K}, x_{[t-1]}^{\setminus K} )},
    \end{split}
\end{equation}

As for SimBa-CL with and without feedback, with these general terms referring to the fully factorized case, we can assess the degradation of approximation quality when excluding the feedback from the procedure. Naturally, we need a more general version of the previous assumptions, given that we are now dealing with changes across multiple dimensions. The assumptions have already been reported before and can be found in Assumption \ref{ass:general_kernel_ass} and Assumption \ref{ass:general_boundeness_ass}, however, we report them again here and comment on the interpretation.

\begin{assumption}\label{ass:general_kernel}
For any $K \in \mathcal{K}$ and for any $x_t^{\bar{n}} \in \mathcal{X}$ with ${\bar{n}} \notin K$, if $x_{t-1}, \bar{x}_{t-1} \in \mathcal{X}^N$ are such that $x_{t-1}^{\setminus {K}}= \bar{x}_{t-1}^{\setminus {K}}$ then:
    \begin{equation}
        \left | p\left (x_t^{\bar{n}}|x_{t-1}\right ) - p\left (x_t^{\bar{n}}|\bar{x}_{t-1}\right ) \right | \leq \frac{1}{N} \left |d_{K,\bar{n}}\left (x_{t-1}^{n}\right ) - d_{K,\bar{n}}\left (\bar{x}_{t-1}^{n}\right ) \right |.
    \end{equation}
where $d_{K,\bar{n}}:\mathcal{X}^K \to \mathbb{R}_+$.
\end{assumption}

\begin{assumption}\label{ass:general_boundeness}
    For any $K, \bar{K} \in \mathcal{K}$, if $x_{t-1}, \bar{x}_{t-1} \in \mathcal{X}^N$ are such that $x_{t-1}^{\setminus \bar{K}}= \bar{x}_{t-1}^{\setminus \bar{K}}$ then there exists $ 0 < \epsilon < 1$ such that:
    \begin{equation}
        \sum_{x^{{n}}_t} p\left (x_t^{{n}}|x_{t-1}\right ) \frac{1}{p\left (x_t^{n}|\bar{x}_{t-1}\right )^2} \leq \frac{1}{\epsilon^2}, \quad \text{and} \quad
        \sum_{x^{{n}}_t} p\left (x_t^{{n}}|x_{t-1}\right ) \frac{1}{p\left (x_t^{n}|\bar{x}_{t-1}\right )^3} \leq \frac{1}{\epsilon^3}.
    \end{equation}
\end{assumption}

Assumption \ref{ass:general_kernel} and Assumption \ref{ass:general_boundeness} provide the required generalizations. Assumption \ref{ass:general_kernel}  is measuring the interactions on $K \in \mathcal{K}$ and ensuring that changes on $K$ have a limited impact on the dynamic outside $K$. Meanwhile, Assumption \ref{ass:general_boundeness} formulates a stronger version of Assumption \ref{ass:boundeness_app}. 

As for the fully factorized case, given the assumptions, we can bound the $\mathbf{KL}$-divergence between our SimBa-CL with and without feedback on general partitions. As it can be noticed from the statement of Theorem \ref{thm:general_KL_bound}, there are a couple of deviations from the fully factorize case.

\begin{theorem}\label{thm:general_KL_bound}
If $\left | d_{K,\bar{n}} \left ( x^{K} \right )  - d_{K,\bar{n}} \left ( \bar{x}^{K} \right ) \right | < N$ for any $x^K,\bar{x}^K \in \mathcal{X}^K$ and for any $K \in \mathcal{K}$, and assumptions \ref{ass:general_boundeness}-\ref{ass:general_kernel} hold, then for any $K \in \mathcal{K}$:
    \begin{equation}
        \mathbf{KL} \left [ p\left (\mathbf{y}_{[T]}^K\right )||\tilde{p}_{\mathcal{K}}\left (\mathbf{y}_{[T]}^K\right ) \right ] \leq 
        \frac{a(\epsilon)}{N} \sum_{t \in [T]} 
        \mathbb{E} \left \{ \frac{1}{N} \sum_{\bar{n} \in [N] \setminus K } \mathbb{V}ar \left [ d_{K,\bar{n}} \left ( \mathbf{x}_{t-1}^{K} \right ) \Big{|} \mathbf{x}_{[0:t-1]}^{\setminus K} \right ] \right \},
    \end{equation}
 where $a(\epsilon) \coloneqq 2\left [ \frac{1}{2 \epsilon^2} + \frac{1}{3 \epsilon^3} \right ]$.
\end{theorem}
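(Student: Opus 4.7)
The plan is to replicate the proof of Theorem 1 almost verbatim, with singletons $\{n\}$ replaced by blocks $K \in \mathcal{K}$ and the fully factorised simulation feedback $f$ replaced by its block analogue $f_K$ from \eqref{eq:sim_feed_gen_part}. The recursive factorisation \eqref{eq:recursive_conditional_joint_partition} of $p(x_{[0:T]}^K|x_{[0:T]}^{\setminus K})$ plays the same role that \eqref{eq:recursive_conditional_joint} plays in the fully factorised case, so the overall skeleton transfers with only cosmetic changes.

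First I would verify absolute continuity of $p(y_{[T]}^K)$ with respect to $\tilde{p}_{\mathcal{K}}(y_{[T]}^K)$: the vanishing of the latter at a configuration forces either $p(x_{[0:T]}^{\setminus K})=0$, $\prod_{n \in K} p(x_0^n)\prod_{t \in [T]} p(x_t^n|x_{t-1})=0$, or a zero emission, and in each case the recursive formula for $p(x_{[0:T]}^K|x_{[0:T]}^{\setminus K})$ forces $p(y_{[T]}^K)$ to vanish too. Then I would apply Lemma \ref{lemma:data_proc_ineq} to pass from the KL on $y_{[T]}^K$ to the KL on the joint $(x_{[0:T]},y_{[T]}^K)$. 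After cancelling the shared factor $p(x_{[0:T]}^{\setminus K})$ and summing out $y_{[T]}^K$ using $\sum_{y_{[T]}^K} \prod_{t}\prod_{n \in K} p(y_t^n|x_t^n) = 1$, the log-ratio collapses via \eqref{eq:recursive_conditional_joint_partition} to $\sum_{t \in [T]} \log f_K(x_{t-1}^K, x_{[0:t]}^{\setminus K})$.

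Next I would expand $f_K$ so that the denominator is an expectation over an independent copy $\bar{\mathbf{x}}_{t-1}^K \sim p(\,\cdot\,|x_{[0:t-1]}^{\setminus K})$, apply Jensen to the convex function $-\log$, and move the sum over $\bar{n} \in [N]\setminus K$ outside. Marginalising out all components of $x_t^{\setminus n}$ other than $x_t^{\bar{n}}$ leaves a pointwise divergence $\mathbf{KL}[p(\mathbf{x}_t^{\bar{n}}|x_{t-1})\| p(\mathbf{x}_t^{\bar{n}}|\bar{x}_{t-1})]$, to which I apply Corollary \ref{corol:general_KL_bound}. Assumption \ref{ass:general_kernel} then replaces each numerator by $s_{\bar{n}}^2|d_{K,\bar{n}}(x_{t-1}^K) - d_{K,\bar{n}}(\bar{x}_{t-1}^K)|^2/N^2$ (respectively the cubic analogue), while Assumption \ref{ass:general_boundeness} bounds the $\epsilon^{-2}$ and $\epsilon^{-3}$ moments of the reciprocal kernels. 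Finally I use the hypothesis $|d_{K,\bar{n}}(x^K) - d_{K,\bar{n}}(\bar{x}^K)|<N$ to absorb the cubic term into the quadratic one, extracting an overall $a(\epsilon)/N$ prefactor.

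To close, I would expand the squared difference $(d_{K,\bar{n}}(x_{t-1}^K)-d_{K,\bar{n}}(\bar{x}_{t-1}^K))^2$ and take the outer expectation. Because $\mathbf{x}_{t-1}^K$ and $\bar{\mathbf{x}}_{t-1}^K$ are conditionally i.i.d.\ given $\mathbf{x}_{[0:t-1]}^{\setminus K}$ (the latter being a resample from $p(\,\cdot\,|\mathbf{x}_{[0:t-1]}^{\setminus K})$), the two squared terms produce $\mathbb{E}[d_{K,\bar{n}}(\mathbf{x}_{t-1}^K)^2|\mathbf{x}_{[0:t-1]}^{\setminus K}]$ and the cross term produces $\mathbb{E}[d_{K,\bar{n}}(\mathbf{x}_{t-1}^K)|\mathbf{x}_{[0:t-1]}^{\setminus K}]^2$; their combination is exactly twice the conditional variance, giving the bound. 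The only genuine subtlety, and the step most prone to bookkeeping errors, is the conditional-independence argument that turns the cross term into a squared conditional expectation when $K$ has cardinality larger than one, since $\mathbf{x}_{t-1}^K$ now ranges over the product space $\mathcal{X}^K$ and the sums must be rearranged carefully; this is exactly the block-analogue of the cross-term computation in the proof of Theorem 1.
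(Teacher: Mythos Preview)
Your proposal is correct and follows the paper's proof essentially verbatim: the paper explicitly states that the argument ``follows the same steps as the proof of Theorem~1, with $K$ instead of $n$,'' and then carries out exactly the sequence you describe---data processing inequality, cancellation and marginalisation of $y_{[T]}^K$, the recursive formula for $p(x_{[0:T]}^K\mid x_{[0:T]}^{\setminus K})$ to reduce to $\sum_t \log f_K$, Jensen on the denominator of $f_K$, Corollary~\ref{corol:general_KL_bound}, Assumptions~\ref{ass:general_kernel}--\ref{ass:general_boundeness}, the $|d_{K,\bar n}|<N$ hypothesis to absorb the cubic term, and finally the conditional-i.i.d.\ computation turning the expected squared difference into $2\,\mathbb{V}\mathrm{ar}[d_{K,\bar n}(\mathbf{x}_{t-1}^K)\mid \mathbf{x}_{[0:t-1]}^{\setminus K}]$. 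The only cosmetic slip is that you wrote ``components of $x_t^{\setminus n}$'' where you meant $x_t^{\setminus K}$.
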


Once more, we can interpret the bound as the one from Theorem 1, suggesting to exclude the feedback whenever the dimension $N$ is large and when the process noise is contained.

Given assumptions \ref{ass:general_kernel_ass}-\ref{ass:general_boundeness_ass} we can prove bounds on the KL-divergence between the with feedback case and the without feedback case.

\begin{proof}[proof of Theorem 5]
The proof of Theorem 2 follows the same steps as the proof of Theorem 1, with $K$ instead of $n$. However, the last steps are slightly different due to a different assumption to apply. We report them here for completeness.

Precisely after proving that our KL is well-defined, applying the data processing inequality, exploiting the factorizations in the model, and applying Jensen, we reach the point of using Corollary \ref{corol:general_KL_bound} and get:
\begin{equation}
    \begin{split}
        &\mathbf{KL}(p(\mathbf{y}_{[T]}^K)||\tilde{p}(\mathbf{y}_{[T]}^K))\\
        &\leq
        \sum_{t \in [T]} \sum_{x_{[0:t-1]}} \sum_{\bar{x}_{t-1}^K} p(\bar{x}_{t-1}^K|x_{[0:t-1]}^{\setminus K} ) p(x_{[0:t-1]} ) \sum_{\bar{n} \in [N] \setminus K} \mathbf{KL} \left ( p(\mathbf{x}_t^{\bar{n}}|\mathbf{x}_{t-1} ) ||
        p(\mathbf{x}_t^{\bar{n}}|\bar{\mathbf{x}}_{t-1} ) \right )\\
        &\leq
        \sum_{t \in [T]} \sum_{x_{[0:t-1]}} \sum_{\bar{x}_{t-1}^K} p(\bar{x}_{t-1}^K|x_{[0:t-1]}^{\setminus K} ) p(x_{[0:t-1]} ) \\
        & \qquad
        \sum_{\bar{n} \in [N] \setminus K} \frac{1}{2}\left [ \sum_{x_t^{\bar{n}}} p(x_t^{\bar{n}}|x_{t-1} ) \frac{1}{p(x_t^{\bar{n}}|x_{t-1} )^2} \left ( p(x_t^{\bar{n}}|x_{t-1} )  - p(x_t^{\bar{n}}|\bar{x}_{t-1} ) \right )^2 \right ]\\
        & \qquad\qquad\qquad
        + \frac{1}{3}\left [ \sum_{x_t^{\bar{n}}} p(x_t^{\bar{n}}|x_{t-1} ) \frac{1}{p(x_t^{\bar{n}}|\bar{x}_{t-1} )^3} \left | p(x_t^{\bar{n}}|x_{t-1} )  - p(x_t^{\bar{n}}|\bar{x}_{t-1} ) \right |^3 \right ].
    \end{split}
\end{equation}

We can then bound the differences between the kernels with our Assumption \ref{ass:general_kernel_ass}:
\begin{equation}
    \begin{split}
    &\mathbf{KL}(p(\mathbf{y}_{[T]}^K)||\tilde{p}(\mathbf{y}_{[T]}^K))\\
    &\leq
    \sum_{t \in [T]} \sum_{x_{[0:t-1]}} \sum_{\bar{x}_{t-1}^K} p(\bar{x}_{t-1}^K|x_{[0:t-1]}^{\setminus K} ) p(x_{[0:t-1]} ) \\
    & \qquad
    \sum_{\bar{n} \in [N] \setminus K} \frac{1}{2}\left [ \sum_{x_t^{\bar{n}}} p(x_t^{\bar{n}}|x_{t-1} ) \frac{1}{p(x_t^{\bar{n}}|x_{t-1} )^2} \frac{1}{N^2} \left | d_{K,\bar{n}} \left ( x_{t-1}^{K} \right )  - d_{K,\bar{n}} \left ( \bar{x}_{t-1}^{K} \right ) \right |^2\right ]\\
    & \qquad
    + \frac{1}{3}\left [ \sum_{x_t^{\bar{n}}} p(x_t^{\bar{n}}|x_{t-1} ) \frac{1}{p(x_t^{\bar{n}}|\bar{x}_{t-1} )^3} \frac{1}{N^3} \left | d_{K,\bar{n}} \left ( x_{t-1}^{K} \right )  - d_{K,\bar{n}} \left ( \bar{x}_{t-1}^{K} \right ) \right |^3 \right ],
    \end{split}
\end{equation} 
and Assumption \ref{ass:general_boundeness_ass}:
\begin{equation}
    \begin{split}
        &\mathbf{KL}(p(\mathbf{y}_{[T]}^K)||\tilde{p}(\mathbf{y}_{[T]}^K))\\
        &\leq
        \sum_{t \in [T]} \sum_{x_{[0:t-1]}} \sum_{\bar{x}_{t-1}^K} p(\bar{x}_{t-1}^K|x_{[0:t-1]}^{\setminus K} ) p(x_{[0:t-1]} ) \\
        & \qquad
        \sum_{\bar{n} \in [N] \setminus K} \frac{1}{2} \frac{1}{\epsilon^2 N^2} \left | d_{K,\bar{n}} \left ( x_{t-1}^{K} \right )  - d_{K,\bar{n}} \left ( \bar{x}_{t-1}^{K} \right ) \right |^2
        + \frac{1}{3}\frac{1}{\epsilon^3 N^3} \left | d_{K,\bar{n}} \left ( x_{t-1}^{K} \right )  - d_{K,\bar{n}} \left ( \bar{x}_{t-1}^{K} \right ) \right |^3\\
        &\leq
        \sum_{t \in [T]} \sum_{x_{[0:t-1]}} \sum_{\bar{x}_{t-1}^K} p(\bar{x}_{t-1}^K|x_{[0:t-1]}^{\setminus K} ) p(x_{[0:t-1]} ) \\
        & \qquad
        \sum_{\bar{n} \in [N] \setminus K} \left [ \frac{1}{2} \frac{1}{\epsilon^2 N} 
        + \frac{1}{3}\frac{1}{\epsilon^3 N} \right ] \frac{1}{N}\left | d_{K,\bar{n}} \left ( x_{t-1}^{K} \right )  - d_{K,\bar{n}} \left ( \bar{x}_{t-1}^{K} \right ) \right |^2\\
    \end{split}
\end{equation} 
where in the last step we used $\left | d_{K,\bar{n}} \left ( x_{t-1}^{K} \right )  - d_{K,\bar{n}} \left ( \bar{x}_{t-1}^{K} \right ) \right | < N$. Similarly to the proof of Theorem 1, we can reformulate in terms of expected variance but this time we will work on $d_{K, \bar{n}}(\mathbf{x}_{t-1}^K)$ and so:
\begin{equation}
    \begin{split}
        &\mathbf{KL}(p(\mathbf{y}_{[T]}^K)||\tilde{p}(\mathbf{y}_{[T]}^K))\\
        &\leq
        \left [ \frac{1}{2 \epsilon^2 N} 
        + \frac{1}{3 \epsilon^3 N} \right ]\\
        & \qquad\sum_{t \in [T]} 
        \sum_{\bar{n} \in [N] \setminus K} \frac{1}{N} \sum_{x_{[0:t-1]}} \sum_{\bar{x}_{t-1}^K} p(\bar{x}_{t-1}^K|x_{[0:t-1]}^{\setminus K} ) p(x_{[0:t-1]} ) \left | d_{K,\bar{n}} \left ( x_{t-1}^{K} \right )  - d_{K,\bar{n}} \left ( \bar{x}_{t-1}^{K} \right ) \right |^2\\
        &\leq
        \left [ \frac{1}{2 \epsilon^2 N} 
        + \frac{1}{3 \epsilon^3 N} \right ]\\
        & \qquad\sum_{t \in [T]} 
        \sum_{\bar{n} \in [N] \setminus K} \frac{1}{N} \left [ 
        2 \mathbb{E} \left \{ \mathbb{E} \left [ d_{K,\bar{n}} \left ( \mathbf{x}_{t-1}^{K} \right )^2 \Big{|} \mathbf{x}_{[0:t-1]}^{\setminus K} \right ] \right \}
        - 2 \mathbb{E} \left \{
        \mathbb{E} \left [d_{K,\bar{n}} \left ( \mathbf{x}_{t-1}^{K} \right ) \Big{|} \mathbf{x}_{[0:t-1]}^{\setminus K} \right ]^2 \right \} \right ]\\
        &\leq
        2\left [ \frac{1}{2 \epsilon^2 N} 
        + \frac{1}{3 \epsilon^3 N} \right ] \sum_{\bar{n} \in [N] \setminus K} \frac{1}{N} \left [ \sum_{t \in [T]} 
        \mathbb{E} \left \{ \mathbb{V}ar \left [ d_{K,\bar{n}} \left ( \mathbf{x}_{t-1}^{K} \right ) \Big{|} \mathbf{x}_{[0:t-1]}^{\setminus K} \right ] \right \} \right ]\\
        &\leq
        2\left [ \frac{1}{2 \epsilon^2 N} 
        + \frac{1}{3 \epsilon^3 N} \right ] \sum_{t \in [T]} 
        \mathbb{E} \left \{ \frac{1}{N} \sum_{\bar{n} \in [N] \setminus K} \mathbb{V}ar \left [ d_{K,\bar{n}} \left ( \mathbf{x}_{t-1}^{K} \right ) \Big{|} \mathbf{x}_{[0:t-1]}^{\setminus K} \right ] \right \},\\
    \end{split}
\end{equation} 
which completes the proof.

\end{proof}

\section{SimBa-CL as composite likelihood}

Here we discuss further SimBa-CL as a composite likelihood method, with a particular focus on the estimation of the sensitivity matrix and variability matrix and the observed information.

\subsection{Observed information}

In the main paper, we reported $S(\theta)$ and $V\left (\theta\right )$ as our sensitivity matrix and variability matrix, which we defined as:
\begin{equation}
    S\left (\theta\right ) = \mathbb{E}_{\theta} \left \{ -\Hessian_{\theta} \left [ \log \tilde{p}_{\mathcal{K}}\left (\mathbf{y}_{[T]}^K| {\theta} \right ) \right ] \right \} \text{ and }
    V\left (\theta\right ) = \mathbb{V}\text{ar}_{\theta} \left \{ \nabla_{\theta} \left [ \log \tilde{p}_{\mathcal{K}}\left (\mathbf{y}_{[T]}^K| {\theta} \right ) \right ] \right \}.
\end{equation}
This formulation suggests a bootstrap approach where the expectations are estimated by simulating from the model and computing the Hessian and gradient of the composite likelihood in the simulated data accordingly. However, it is often common to use an observed information approach, where the observed data are plugged-in our instead of simulated data. It is worth noting that for the above formulation of variability and specificity matrix, this is not ideal as the observed data are not independent and identically distributed, meaning that we do not have a way to estimate the mean and the variance in the equations above. 

In the main paper we also reported another formulation of the sensitivity matrix and variability matrix, where approximate Bartlett identities are used:
\begin{equation}
\begin{split}
    &S\left (\theta\right ) \approx 
    \sum_{K \in \mathcal{K}} \mathbb{E}_{\theta} \left \{ \nabla_{\theta} \left [ \log \tilde{p}_{\mathcal{K}}\left (\mathbf{y}_{[T]}^K| \theta \right ) \right ] \nabla_{\theta} \left [ \log \tilde{p}_{\mathcal{K}}\left (\mathbf{y}_{[T]}^K| \theta \right ) \right ]^\top \right \},\\
    &V\left (\theta\right ) \approx \sum_{K, \tilde{K} \in \mathcal{K}} \mathbb{E}_{\theta} \left \{ \nabla_{\theta} \left [ 
    \log \tilde{p}_{\mathcal{K}}\left (\mathbf{y}_{[T]}^K| \theta \right ) \right ] \nabla_{\theta} \left [ 
     \log \tilde{p}_{\mathcal{K}}\left (\mathbf{y}_{[T]}^{\tilde{K}}| \theta \right ) \right ]^\top \right \}.
\end{split}
\end{equation}
These alternative formulations are more suited to the observed information approach as we could assume independence across blocks and identical distribution of the blocks and get:
\begin{equation}
\begin{split}
    &S\left (\theta\right ) \approx 
    \sum_{K \in \mathcal{K}} \left \{ \nabla_{\theta} \left [ \log \tilde{p}_{\mathcal{K}}\left (y_{[T]}^K| \theta \right ) \right ] \nabla_{\theta} \left [ \log \tilde{p}_{\mathcal{K}}\left (y_{[T]}^K| \theta \right ) \right ]^\top \right \},\\
    &V\left (\theta\right ) \approx \sum_{K, \tilde{K} \in \mathcal{K}} \left \{ \nabla_{\theta} \left [ 
    \log \tilde{p}_{\mathcal{K}}\left (y_{[T]}^K| \theta \right ) \right ] \nabla_{\theta} \left [ 
     \log \tilde{p}_{\mathcal{K}}\left (y_{[T]}^{\tilde{K}}| \theta \right ) \right ]^\top \right \}.
\end{split}
\end{equation}

It is surely not surprising that this approach will not perform well when the blocks are highly correlated and different in distribution. 

For completeness, we report the experiment table from the main paper where empirical coverage when learning the full model is measured. Here we add an extra row showing the resulting coverage across parameters when using the observed information. As clear from the bad coverage, the observed information might lead to overconfident sets, this is probably due to the challenge of identifying the parameters and so the optimizer gets stuck in local maxima of the likelihood.

\begin{table}[http!]
    \centering
    \begin{tabular}{|l|c|c|c|c|c|}
    \hline
     Parameter & $\beta_0$ & $\beta_\lambda$ & $\beta_\gamma$ & $q$ & $\iota$  \\
    \hline
    Without Bartlett & $0.17 \text{ and }0.05$ & $0.61 \text{ and }0.87$ & $0.8 \text{ and }1.$ & $0.87 \text{ and } 0.5$ & $0.02$\\ 
    With Bartlett & $0.98 \text{ and } 0.89$ & $0.99 \text{ and } 0.75$ & $0.97 \text{ and } 0.97$ & $1. \text{ and } 0.98$ & $0.92$\\
    With Bartlett OI & $0.48 \text{ and } 0.30$ & $0.51 \text{ and } 0.31$ & $0.01 \text{ and } 0.06$ & $0.62 \text{ and } 0.02$ & $0.03$\\
    \hline
    \end{tabular}
    \caption{Empirical coverage per each parameter when computing the Godambe information matrix with and without the approximate Bartlett identities and when using the observed information. Whenever the parameter is bi-dimensional the coverage per each component is reported in the same cell separated by ``$\text{and}$''. ``OI'' refers to observed information.}
    \label{tab:9dim_coverage_app}
\end{table}

\subsection{Estimating the sensitivity and variability matrix}

Bootstrap sampling seems like a good strategy to compute an estimate of the variability and sensitivity matrix, precisely given $P$ samples from the model:
\begin{equation}
    y^{i,}_{[T]} \sim p(y_{[T]}|\hat{\theta}_{CL}) \text{ for } i \in [P],
\end{equation}
where $\hat{\theta}_{CL}$ is our maximum SimBa-CL estimator, can be used to get:
\begin{equation}
\begin{split}
    & S\left (\hat{\theta}_{CL}\right ) \approx \frac{1}{P} \sum_{i \in [P]} \left \{ -\Hessian_{\theta} \left [ \log \tilde{p}_{\mathcal{K}}\left (y_{[T]}^K| \hat{\theta}_{CL} \right ) \right ] \right \} \text{ and }\\
    & V\left (\hat{\theta}_{CL}\right ) \approx \frac{1}{P-1} \sum_{i \in [P]} \left [ \nabla_{\theta} \left [ \log \tilde{p}_{\mathcal{K}}\left (y_{[T]}^K| \hat{\theta}_{CL} \right ) \right ] -  \frac{1}{P} \sum_{i \in [P]} \left \{ \nabla_{\theta} \left [ \log \tilde{p}_{\mathcal{K}}\left (y_{[T]}^K| \hat{\theta}_{CL} \right ) \right ] \right \} \right ]^2.
\end{split}
\end{equation}
Similarly when invoking the approximate first and second Bartlett identities we have:
\begin{equation}
\begin{split}
    &S\left (\hat{\theta}_{CL} \right ) \approx 
    \sum_{K \in \mathcal{K}} \frac{1}{P} \sum_{i \in [P]} \left \{ \nabla_{\theta} \left [ \log \tilde{p}_{\mathcal{K}}\left (y_{[T]}^K| \hat{\theta}_{CL} \right ) \right ] \nabla_{\theta} \left [ \log \tilde{p}_{\mathcal{K}}\left (y_{[T]}^K| \hat{\theta}_{CL} \right ) \right ]^\top \right \},\\
    &V\left (\hat{\theta}_{CL} \right ) \approx \sum_{K, \tilde{K} \in \mathcal{K}} \frac{1}{P} \sum_{i \in [P]} \left \{ \nabla_{\theta} \left [ 
    \log \tilde{p}_{\mathcal{K}}\left (y_{[T]}^K| \hat{\theta}_{CL} \right ) \right ] \nabla_{\theta} \left [ 
     \log \tilde{p}_{\mathcal{K}}\left (y_{[T]}^{\tilde{K}}| \hat{\theta}_{CL} \right ) \right ]^\top \right \}.
\end{split}
\end{equation}
Note that all the differentiation operations can be run in parallel on $P$, making the bootstrap approach relatively cheap when consistent computational resources are available.

\subsection{Approximate first and second Bartlett identities}

In this section, we motivate the use of the Bartlett identities in our computations. The first Bartlett identity states:
\begin{equation}
    \mathbb{E}_{\theta} \left [ \nabla_{\theta} \log \left ( p\left (\mathbf{y}_{[T]}| \theta \right ) \right ) \right ] = 0,
\end{equation}
which we want to discuss for the SimBa-CL case.

For SimBa-CL we have:
\begin{equation}
\begin{split}
    \mathbb{E}_{\theta} \left [ \nabla_{\theta} \log \left ( \tilde{p}_{\mathcal{K}}\left (\mathbf{y}_{[T]}| \theta \right ) \right ) \right ] &= \sum_{K \in \mathcal{K}} \mathbb{E}_{\theta} \left [ \nabla_{\theta} \log \left ( \tilde{p}_{\mathcal{K}}\left (\mathbf{y}^K_{[T]}| \theta \right ) \right ) \right ] = \sum_{K \in \mathcal{K}} \mathbb{E}_{\theta} \left [ \frac{\nabla_{\theta} \left ( \tilde{p}_{\mathcal{K}}\left (\mathbf{y}^K_{[T]}| \theta \right ) \right )}{\tilde{p}_{\mathcal{K}}\left (\mathbf{y}^K_{[T]}| \theta \right )} \right ]\\ 
    &= \sum_{K \in \mathcal{K}} \sum_{y^K_{[T]}} \left [ \nabla_{\theta} \left ( \tilde{p}_{\mathcal{K}}\left ({y}^K_{[T]}| \theta \right ) \right ) \frac{{p}\left ({y}^K_{[T]}| \theta \right )}{\tilde{p}_{\mathcal{K}}\left ({y}^K_{[T]}| \theta \right )} \right ]\\ 
    &\approx \sum_{K \in \mathcal{K}} \sum_{y^K_{[T]}} \left [ \nabla_{\theta} \left ( \tilde{p}_{\mathcal{K}}\left ({y}^K_{[T]}| \theta \right ) \right ) \right ] = 0,\\ 
\end{split}
\end{equation}
where we use the $\tilde{p}_{\mathcal{K}}\left ({y}^K_{[T]}| \theta \right ) \approx {p}\left ({y}^K_{[T]}| \theta \right )$, i.e. a simulation feedback close to $1$, and the fact that $\tilde{p}_{\mathcal{K}}$ is still a proper probability distribution and so sum up to $1$. Obviously, if we consider SimBa-CL with feedback the first Bartlett identity holds exactly cause we are targeting the true marginals.

We now discuss the second Bartlett identity:
\begin{equation}
    \mathbb{E}_{\theta} \left [ \Hessian_{\theta} \log \left ( p\left (\mathbf{y}_{[T]}| \theta \right ) \right ) \right ] = -\mathbb{E}_{\theta}  \left \{ \nabla_{\theta} \left [\log p\left (\mathbf{y}_{[T]}| \theta \right ) \right ] \nabla_{\theta} \left [\log p\left (\mathbf{y}_{[T]}| \theta \right ) \right ]^\top \right \}.
\end{equation}
Note that for SimBa-CL we have:
\begin{equation}
\begin{split}
    &\mathbb{E}_{\theta} \left [ \Hessian_{\theta} \log \left ( \tilde{p}_{\mathcal{K}}\left (\mathbf{y}_{[T]}| \theta \right ) \right ) \right ] = \sum_{K \in \mathcal{K}} \mathbb{E}_{\theta} \left [ \Hessian_{\theta} \log \left ( \tilde{p}_{\mathcal{K}}\left (\mathbf{y}^K_{[T]}| \theta \right ) \right ) \right ] \\
    &= \sum_{K \in \mathcal{K}} \mathbb{E}_{\theta} \left [ \nabla_{\theta} \nabla_{\theta} \log \left ( \tilde{p}_{\mathcal{K}}\left (\mathbf{y}^K_{[T]}| \theta \right ) \right ) \right ] = \sum_{K \in \mathcal{K}} \mathbb{E}_{\theta} \left [  \nabla_{\theta} \left ( \frac{\nabla_{\theta} \tilde{p}_{\mathcal{K}}\left (\mathbf{y}^K_{[T]}| \theta \right )}{\tilde{p}_{\mathcal{K}}\left (\mathbf{y}^K_{[T]}| \theta \right )} \right ) \right ] = \\
    &= \sum_{K \in \mathcal{K}} \mathbb{E}_{\theta} \left [   \left ( \frac{\Hessian_{\theta} \tilde{p}_{\mathcal{K}}\left (\mathbf{y}^K_{[T]}| \theta \right )}{\tilde{p}_{\mathcal{K}}\left (\mathbf{y}^K_{[T]}| \theta \right )} \right ) - \left ( \frac{\nabla_{\theta} \tilde{p}_{\mathcal{K}}\left (\mathbf{y}^K_{[T]}| \theta \right ) \nabla_{\theta} \tilde{p}_{\mathcal{K}}\left (\mathbf{y}^K_{[T]}| \theta \right )^{\top}}{\tilde{p}_{\mathcal{K}}\left (\mathbf{y}^K_{[T]}| \theta \right )^2} \right ) \right ]\\
    &= \sum_{K \in \mathcal{K}} \sum_{y_{[T]}^K} \left [   \left ( \frac{\Hessian_{\theta} \tilde{p}_{\mathcal{K}}\left ({y}^K_{[T]}| \theta \right )}{\tilde{p}_{\mathcal{K}}\left ({y}^K_{[T]}| \theta \right )} \right ) - \left ( \nabla_{\theta} \log \tilde{p}_{\mathcal{K}}\left ({y}^K_{[T]}| \theta \right ) \nabla_{\theta} \log \tilde{p}_{\mathcal{K}}\left ({y}^K_{[T]}| \theta \right )^{\top} \right ) \right ] p(y^K_{[T]}|\theta) \\
    &\approx - \sum_{K \in \mathcal{K}} \sum_{y_{[T]}^K} \nabla_{\theta} \left [ \log \tilde{p}_{\mathcal{K}}\left ({y}^K_{[T]}| \theta \right ) \right ] \nabla_{\theta} \left [ \log \tilde{p}_{\mathcal{K}}\left ({y}^K_{[T]}| \theta \right )\right ]^{\top} p(y^K_{[T]}|\theta)\\
    &= - \sum_{K \in \mathcal{K}} \mathbb{E}_{\theta} \left \{ \nabla_{\theta} \left [ \log \tilde{p}_{\mathcal{K}}\left ({y}^K_{[T]}| \theta \right ) \right ] \nabla_{\theta} \left [ \log \tilde{p}_{\mathcal{K}}\left ({y}^K_{[T]}| \theta \right )\right ]^{\top} \right \},
\end{split}
\end{equation}
where again we used our simulation feedback being approximatively $1$.

Combining the definitions of mean and variance with these approximate Bartlett identities leads to our proposed estimates of the sensitivity and variability matrix.

\section{Experiments}

Here, we provide additional details on the experiments reported in the main paper, as well as some extra experiments that were excluded to keep the paper concise. For experiments from the main paper, the model, model parameters, SimBa-CL versions, and $P$ remain the same as described in the main text.

\subsection{Empirical evaluation of the KL-divergence}

In this section, we expand more on the empirical KL-divergence, which is used as an evaluation metric to measure the similarities between different SimBa-CL.

It is essential to recognise that any distribution $q\left (x\right )$ can be approximated empirically as $q\left (x\right ) \approx \sum_{z \in [E]} q(z) \delta_{z} (x)$, where $E$ corresponds to the number of evaluations of $x$. While this approximation provides a sparse representation of $p\left (x\right )$, it offers nonetheless a means to estimate the KL-divergence as follows:
\begin{equation}\label{eq:empirical_KL}
    \mathbf{KL}\left [ p\left (\mathbf{x}\right ) || q\left (\mathbf{x}\right )\right ] \approx \sum_{e \in [E]} p\left (x^e\right ) \log \left ( \frac{p\left (x^e\right )}{q\left (x^e\right )} \right ).
\end{equation}
We can then simulate multiple times from the model to have some approximate evaluation of the chosen SimBa-CL, and we can then compare different SimBa-CL using the above evaluation metric. 

Note that after simulating multiple times from the model we have a sample per each SimBa-CL and we can create multiple samples by leaving one simulation out at a time. This will allow us to produce a sample of KL-divergences from which we can compute mean and variance. The table with mean and variance for our experiments is reported in the main paper. However, as we are comparing distributions another valid comparison is to graphically look at the boxplots of the log SimBa-CL, which we report here.

Figure \ref{fig:log_like_feeback} depicts a comparison between the logarithm of the fully factorized SimBa-CL with and without feedback. The distribution of the two log-likelihoods appears similar, indicating that excluding the feedback does not result in a significant loss of precision. Figure \ref{fig:log_like_partition} compares the logarithm of the fully factorized SimBa-CL without feedback with the logarithm of the coupled SimBa-CL without feedback. Again we can observe that the two distributions become more and more similar when we increase $N$ and reduce the variance in the system. 

\begin{figure}[http!]
    \centering
    \includegraphics[width=0.9\textwidth]{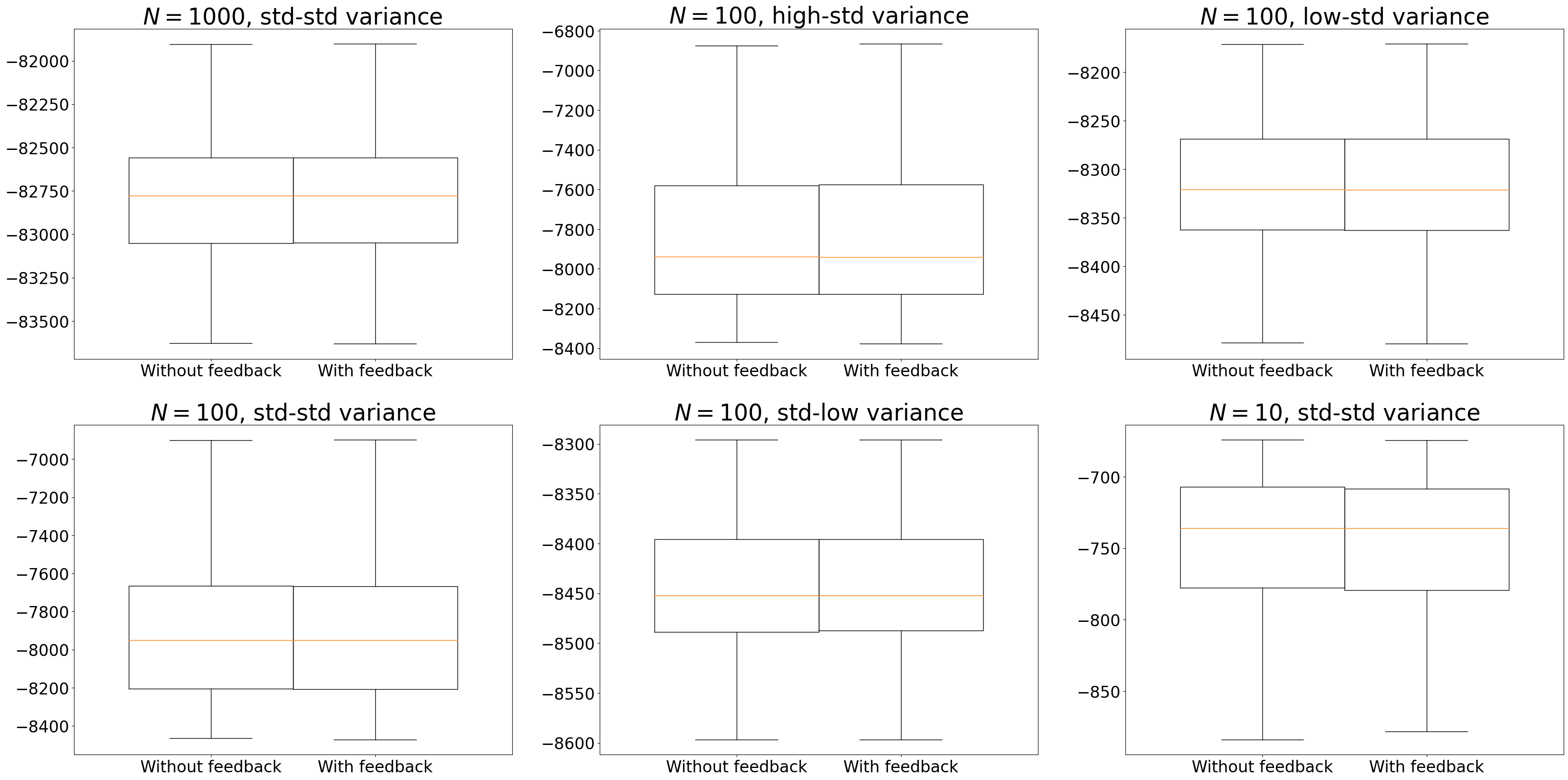}
    \caption{Comparing empirical KL between fully factorized SimBa-CL with feedback and fully factorized SimBa-CL without feedback under different scenarios.}
    \label{fig:log_like_feeback}
\end{figure}

\begin{figure}[http!]
    \centering
    \includegraphics[width=0.9\textwidth]{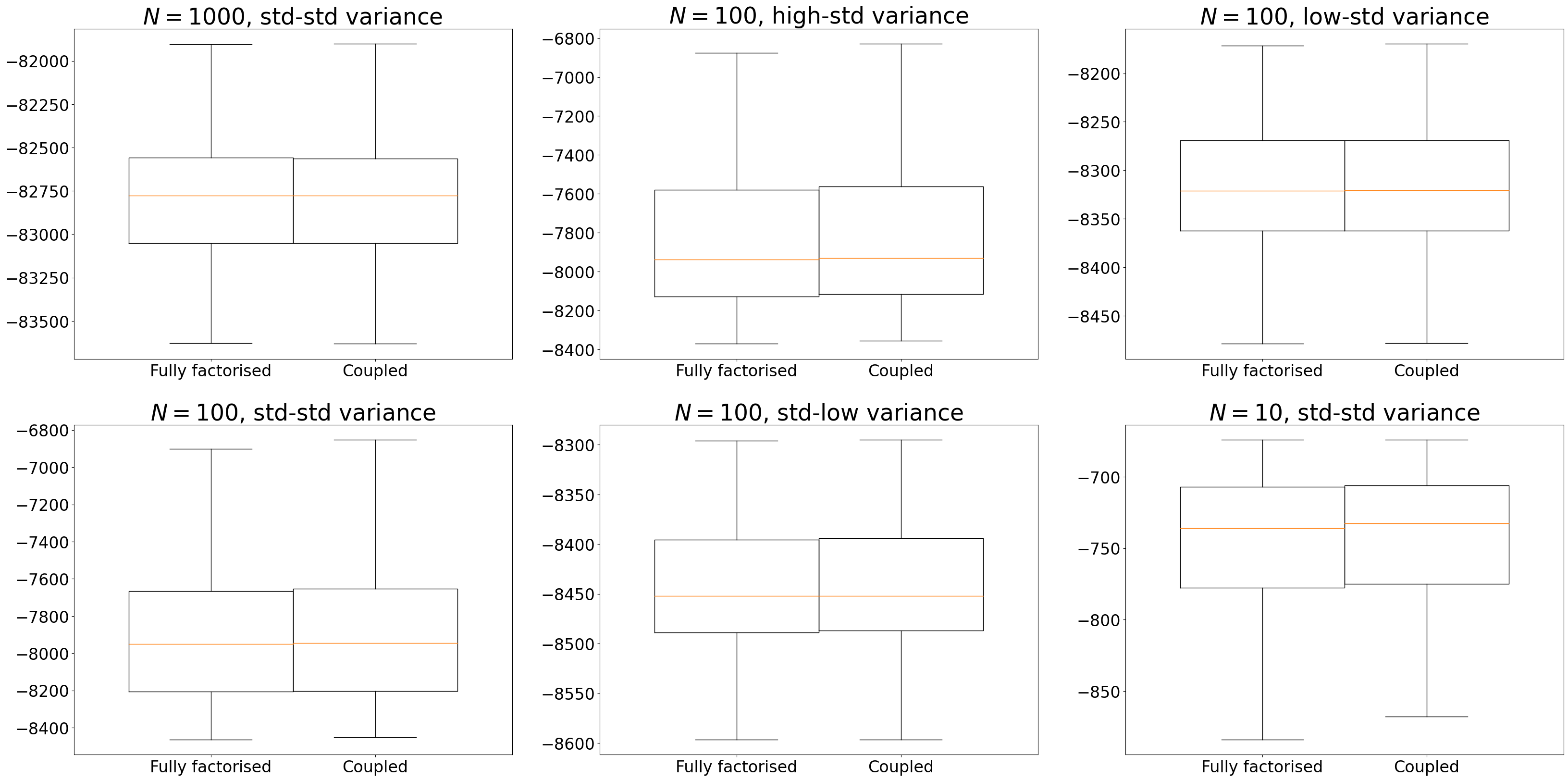}
    \caption{Comparing empirical KL between fully factorized SimBa-CL without feedback and coupled SimBa-CL without feedback under different scenarios.}
    \label{fig:log_like_partition}
\end{figure}

\subsection{Optimization of the parameters}

To optimize the parameters of our models and compute the maximum SimBa-CL estimator we generally used Adam optimizer \citep{kingma2014adam}. Precisely, we provide gradient via automatic differentiation in TensorFlow and use the built-in optimizer. As a loss function we used $-\sum_{n \in [N]} \log(\tilde{p}(y_{[T]}^n|\theta)) \slash (N T)$ with initial learning rate for Adam given by $0.1$. We run $500$ optimization steps for the 9-dimensional synthetic data scenario and $3000$ for the FM data. We check convergence via the stability of the loss function. We also evaluate SimBa-CL using $500$ simulations, but just $100$ for the FM data to meet the memory constraints. For the 2-dimensional synthetic data scenario, we use vanilla gradient descent with learning rate $100$ and $200$ optimization steps, again we check convergence via the stability of the loss function.

Convergence plots are reported below for the different scenarios. We can always notice that the optimization process can find a local minimum of the loss function and that we stop our optimization after convergence is reached.

It is important to comment on the optimization for the FM experiment. In the figure we can spot multiple drops while performing the optimization, these drops refer to the failure of the optimization procedure where the gradient has pushed the parameters in regions with zero likelihood. To solve this issue we reset the optimization by sampling a new random initial condition and restart the optimization from there, resulting in the vertical drops in Figure \ref{fig:FMD_optim}. Even though this might lead to some optimizations not converging, remark that we run 100 optimization procedures in parallel and choose the best one in terms of SimBa-CL score.

\begin{figure}[httb!]
    \centering
    \includegraphics[width=\textwidth]{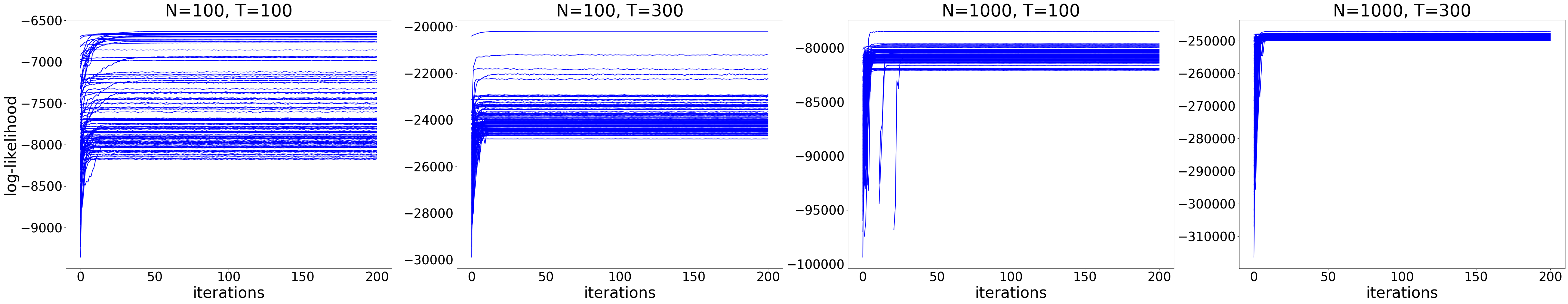}
    \caption{Log-likelihood convergence during the optimization of $\beta_\lambda$, when studying the asymptotic properties of SimBa-CL.}\label{fig:FMD_optim}
\end{figure}

\begin{figure}[httb!]
    \centering
    \includegraphics[width=\textwidth]{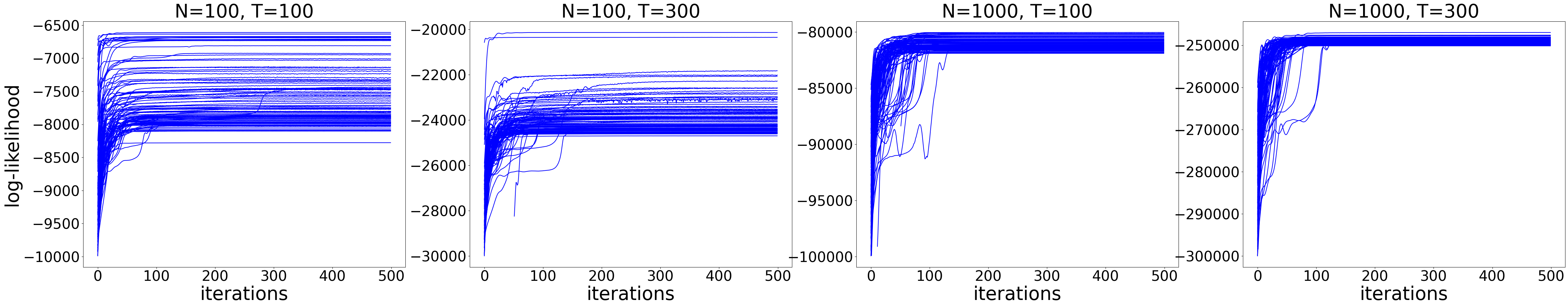}
    \caption{Log-likelihood convergence during the optimization of all the nine parameters, when studying the asymptotic properties of SimBa-CL.}
\end{figure}

\begin{figure}[httb!]
    \centering
    \includegraphics[width=0.75\textwidth]{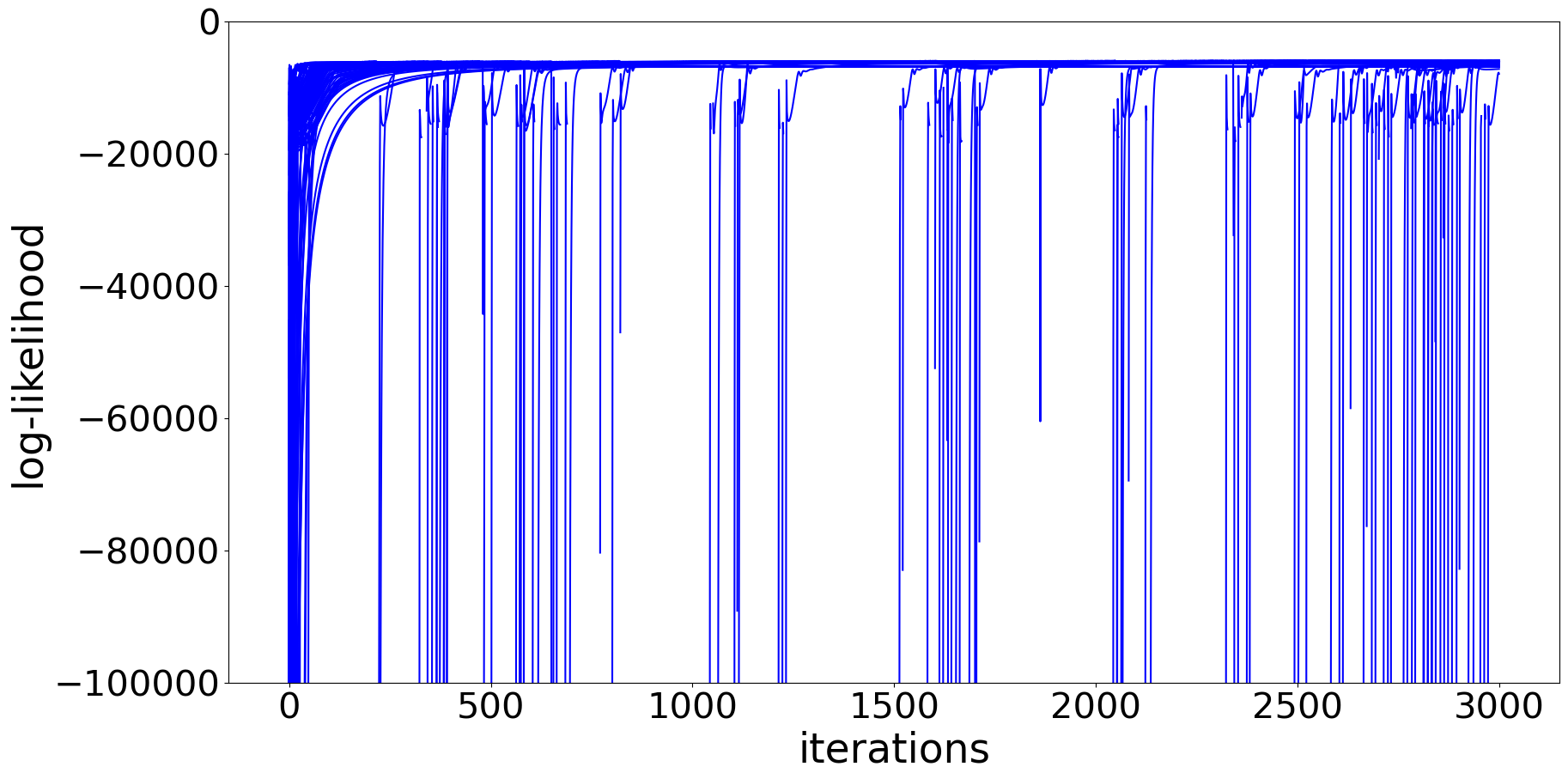}
    \caption{Log-likelihood convergence during the optimization for the FM experiment.}
\end{figure}

\newpage
\subsection{Asymptotic properties of SimBa-CL}

In this section, we provide some additional details on the asymptotic properties of SimBa-CL. Specifically, we provide a graphical representation of the empirical coverage for both the 2-dimensional and 9-dimensional case. For the 2 dimensional case, we also provide a graphical representation of the confidence sets, which are ellipsoids.  

It is obvious from Figure \ref{fig:9dim_coverage} that the noise is dominating our computations, meaning that getting good estimates of the expectation and variance in $S(\theta)$ and $V(\theta)$ is not easy, as they refer to the full space $\mathcal{Y}^{T N}$. However, using the Bartlett identities lead to less noisy estimates as now $S(\theta)$ and $V(\theta)$ are expressed as sums of expectation on the space $\mathcal{Y}^{T}$. This computational trick along with SimBa-CL without feedback being close in the large population limit to SimBa-CL with feedback, where the Bartlett identities hold exactly, can explain the better coverage experienced in Figure \ref{fig:9dim_coverage_Bartlett}. Further theoretical studies are needed to provide a more formal justification.

\begin{figure}[httb!]
    \centering
    \includegraphics[width=0.75\textwidth]{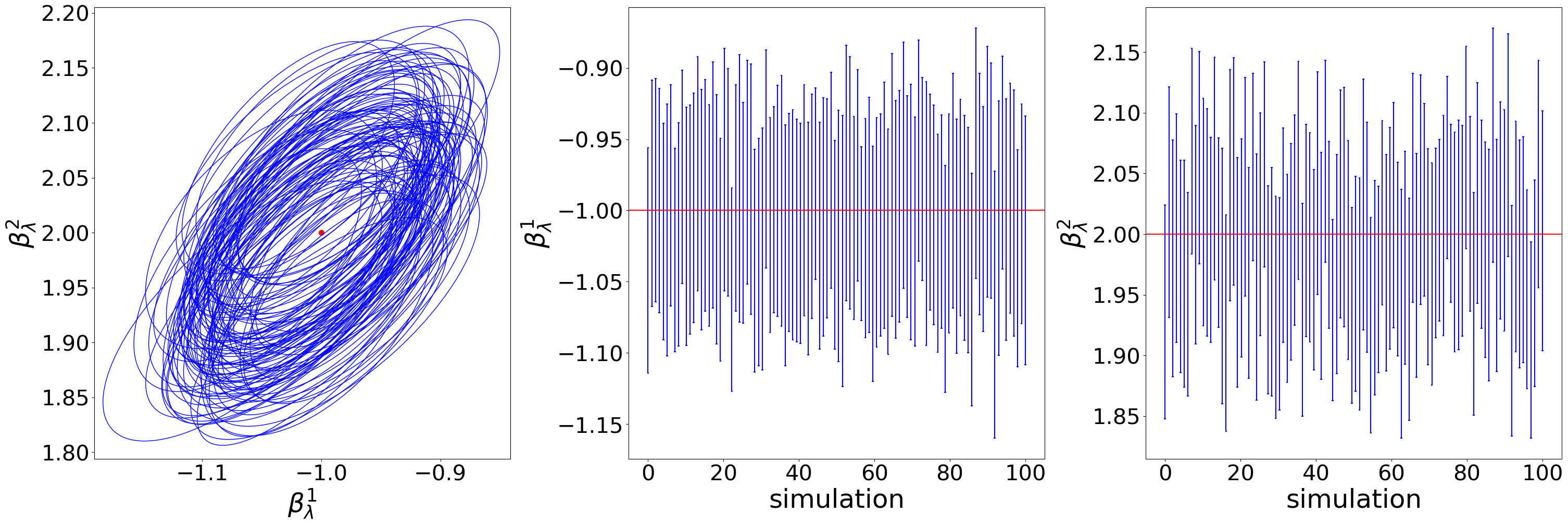}
    \includegraphics[width=0.75\textwidth]{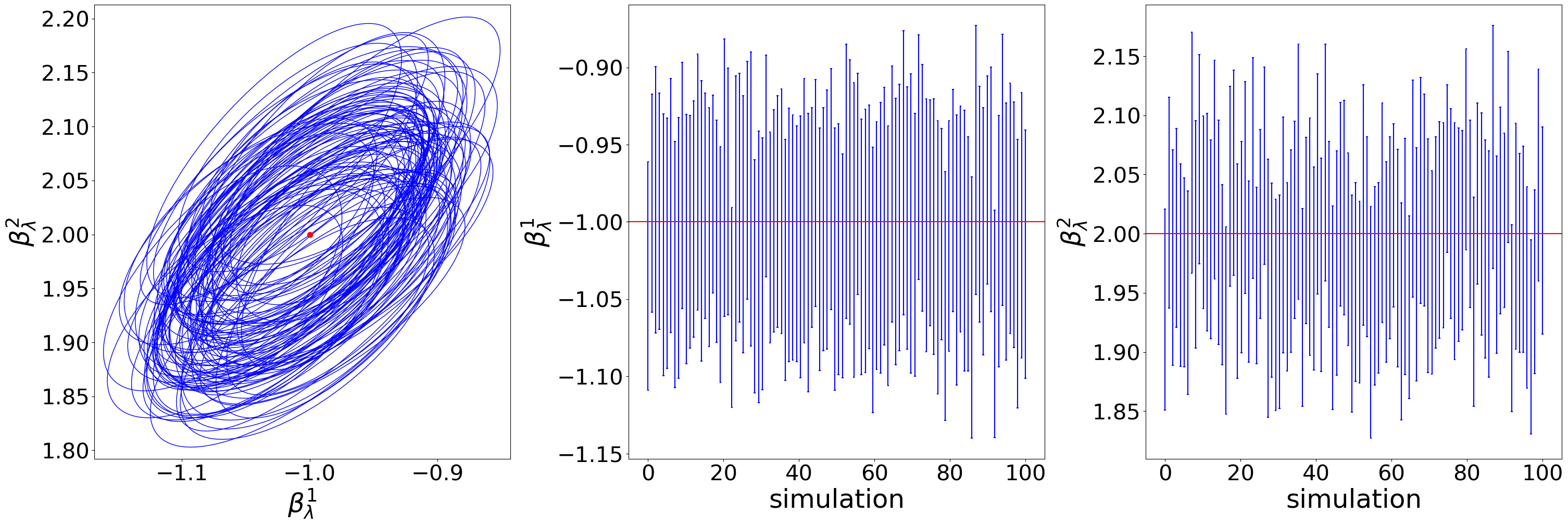}
    \caption{Coverage of the confidence sets with and without Bartlett identities. The first row is for the latter and the second row is for the former. From left to right, graphical coverage of $\beta_\lambda$, $\beta_\lambda^1$ (marginally) and $\beta_\lambda^2$ (marginally). Dots or solid lines are used for the true parameters.}
    \label{fig:coverage_2dim}
\end{figure}
\begin{figure}[httb!]
    \centering
    \includegraphics[width=0.75\textwidth]{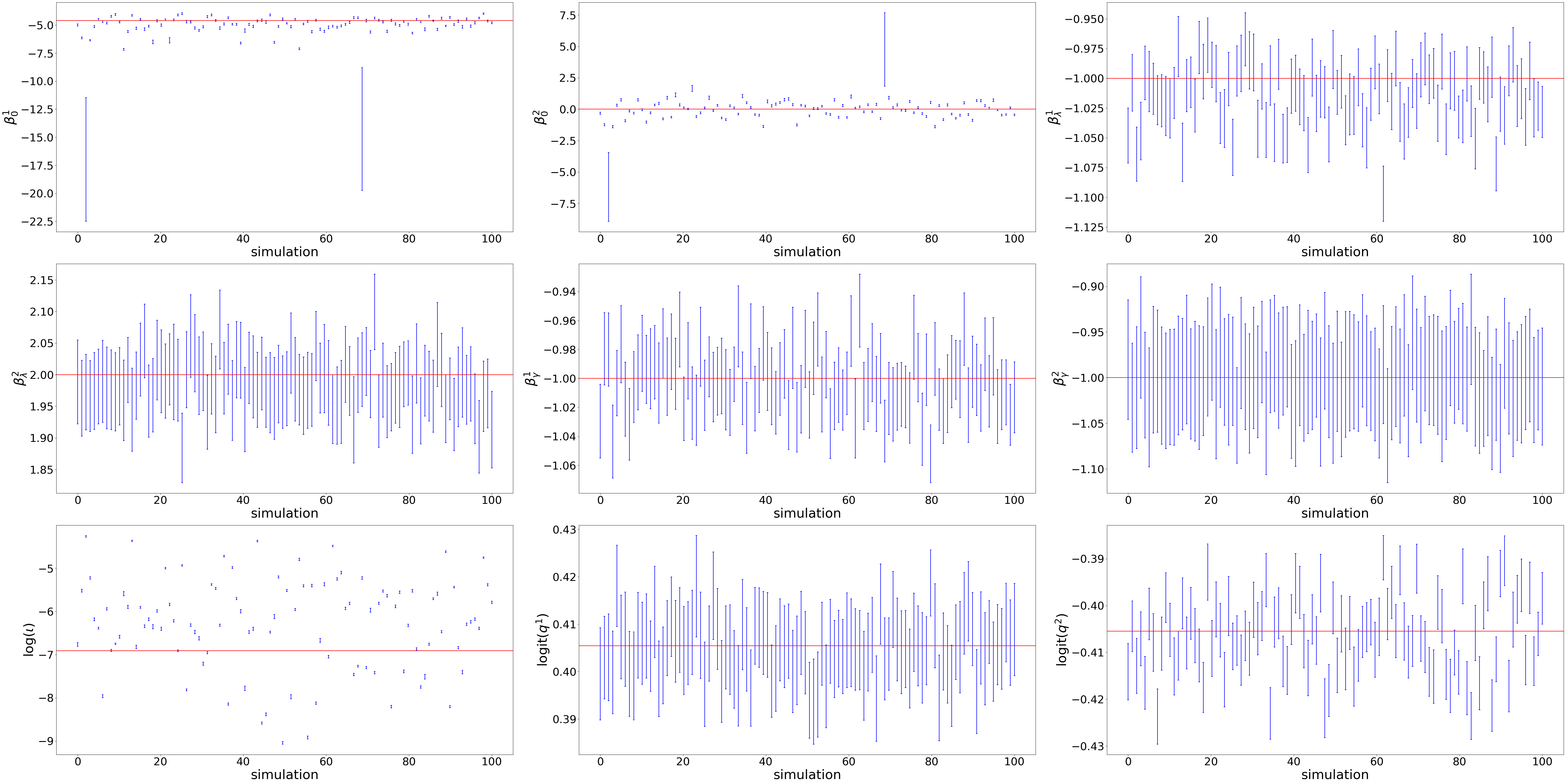}
    \caption{Coverage of the confidence intervals from the diagonal of the Godambe information matrix estimated without the approximate Bartlett identities. Parameter labels are displayed on the x-axis. Solid lines are used for the true parameters.}
    \label{fig:9dim_coverage}
\end{figure}

\begin{figure}[httb!]
    \centering
    \includegraphics[width=0.75\textwidth]{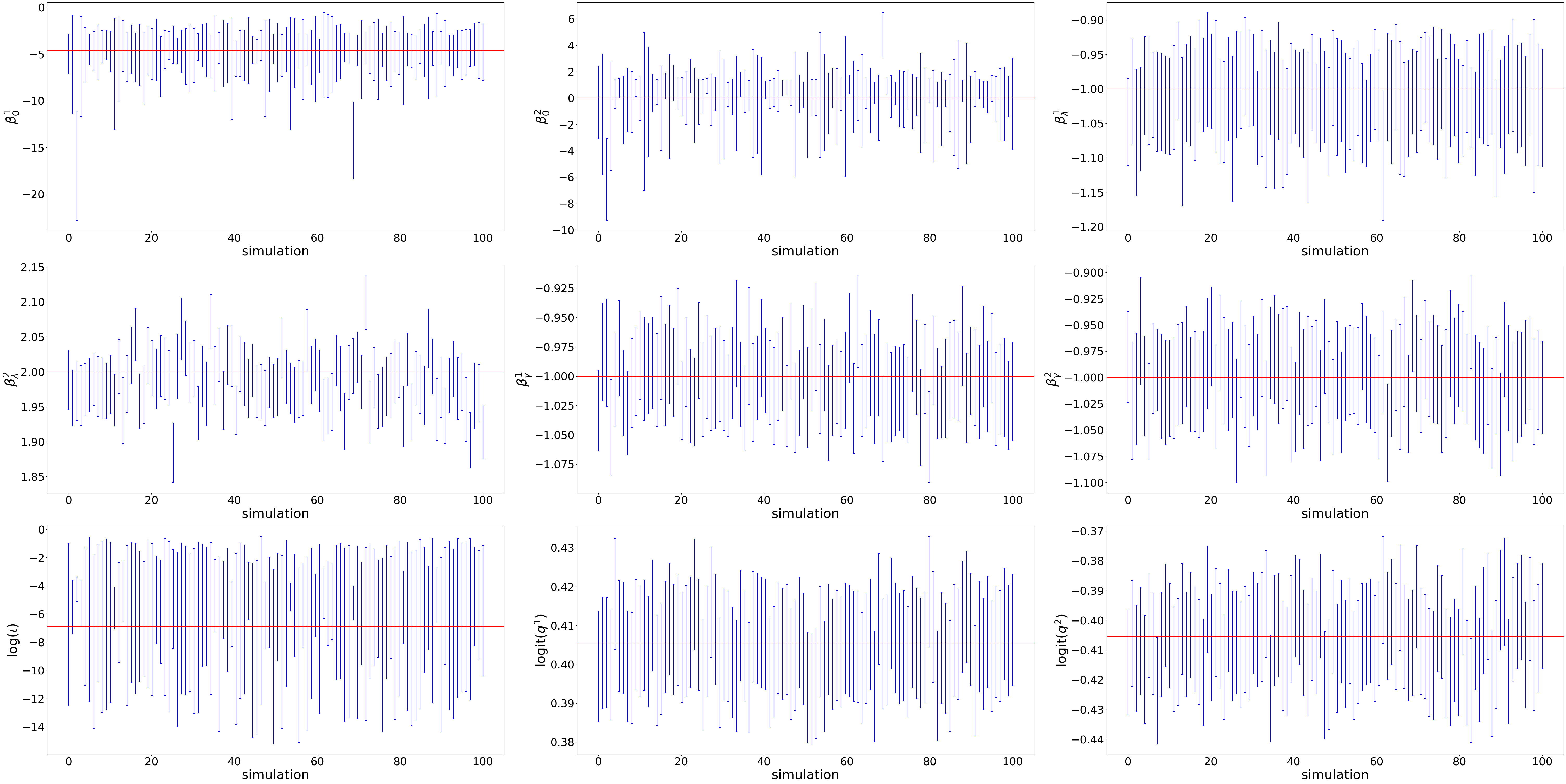}
    \caption{Coverage of the confidence intervals from the diagonal of the Godambe information matrix estimated with the approximate Bartlett identities. Parameter labels are displayed on the x-axis. Solid lines are used for the true parameters.}
    \label{fig:9dim_coverage_Bartlett}
\end{figure}

\newpage
\subsection{Spatial SIS}

Consider an individual-based susceptible-infected-susceptible (SIS) model as the one from previous experiments. Suppose however that the spatial interaction is not homogeneous, and that a spatial kernel is measuring the infection pressure from one individual to the other. We then have an initial probability of infection $\left ( {1} \slash {1+\exp{\left (-\beta_0^\top w_n\right )}} \right )$ and a transition kernel with a probability of transitioning from $S$ to $I$ of:
$$
1- e^{ - \lambda_n \left ( \frac{\sum_{\bar{n} \in [N]}  s(n,\bar{n},\psi) \mathbb{I}\left (x_{t-1}^{\bar{n}}=2\right ) }{N} + \iota \right ) },
$$
and from $I$ to $S$ of $1-e^{- \gamma_n }$, where $\lambda_n = \left ({1}\slash{1+\exp{\left (-\beta_{\lambda}^\top w_n\right )}} \right )$ and $\gamma_n = \left ({1} \slash {1+\exp{\left (-\beta_{\gamma }^\top w_n\right )}} \right )$ and $s(n,\bar{n},\psi)\coloneqq \exp \left ( - E_{n,\bar{n}}^2 \slash (2 \psi^2) \right )$ with $E_{n,\bar{n}}$ euclidean distance between $n$ and $\bar{n}$ and $\psi$ positive parameters. We then consider the same emission distribution of our baseline SIS. For this model we set our baseline to $N=1000$, $T= 100$, $w_n$ to be such that $w_n^1 = 1$ and $w_n^2 \sim \mathbf{Normal}\left (0,1\right )$, and the data generating parameters $\beta_0 = [0.1, 0]^\top$, $\beta_{\lambda} = [-1, 2]^\top$, $\beta_{\gamma} = [-1, -1]^\top$, $q = [0.6, 0.4]^\top$, $\iota=0.01$ and $\psi=1$. 

Given the above model, we can repeat a similar experiment to the one from the main paper and learn the parameters on a grid to study the shape of the SimBa-CL surfaces. In this study, we consider only fully factorized SimBa-CL with and without feedback and we have excluded the general partition case. As for the main paper we set $P=1024$.

The comments on $\beta_0,\beta_\lambda,\beta_\gamma,q$ are the same as for the homogeneous scenario. $\iota$ and $\psi$ needs some additional attention. Firstly notice that the more we increase $\psi$ the more the spatial effect is strong and we highly penalise infected that are far away. This automatically tells us that over a certain threshold, it will be useless to increase $\psi$ as the penalization is already very strong. Secondly, there is an obvious identifiability issue when looking at $\iota$ and $\psi$ together. Indeed, increasing $\psi$ and decreasing $\iota$ will lead to similar epidemics, where the difference is that most of the infections are either coming from the spatial interaction or the environment. This ``banana'' shape makes these two parameters hard to learn and we have to be careful when reporting uncertainty around them to avoid being overconfident on a local maxima.

\begin{figure}[httb!]
    \centering
    \includegraphics[width=0.65\textwidth]{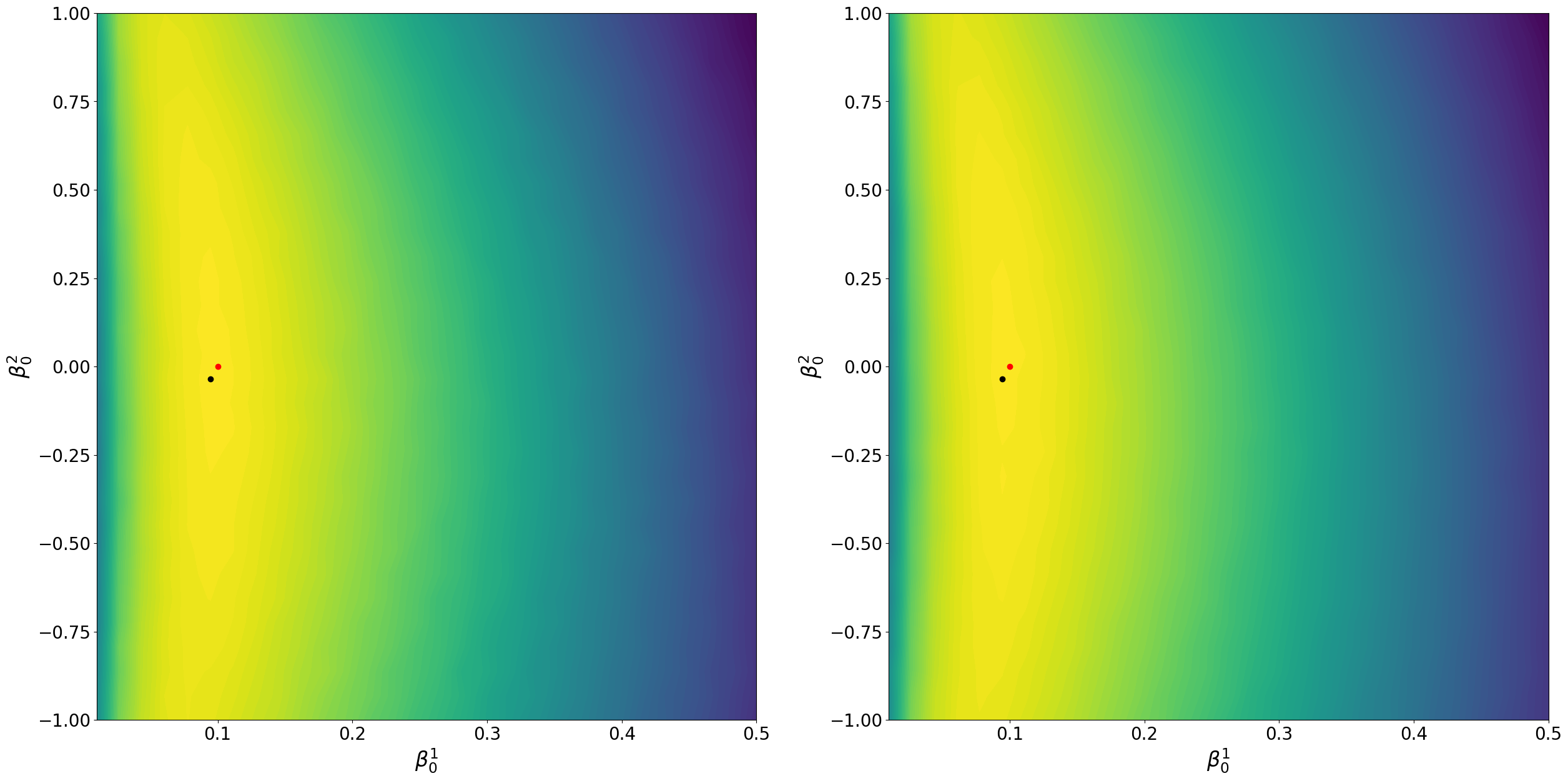}
    \caption{Profile likelihood for $\beta_0$ in spatial SIS.}
\end{figure}

\begin{figure}[httb!]
    \centering
    \includegraphics[width=0.65\textwidth]{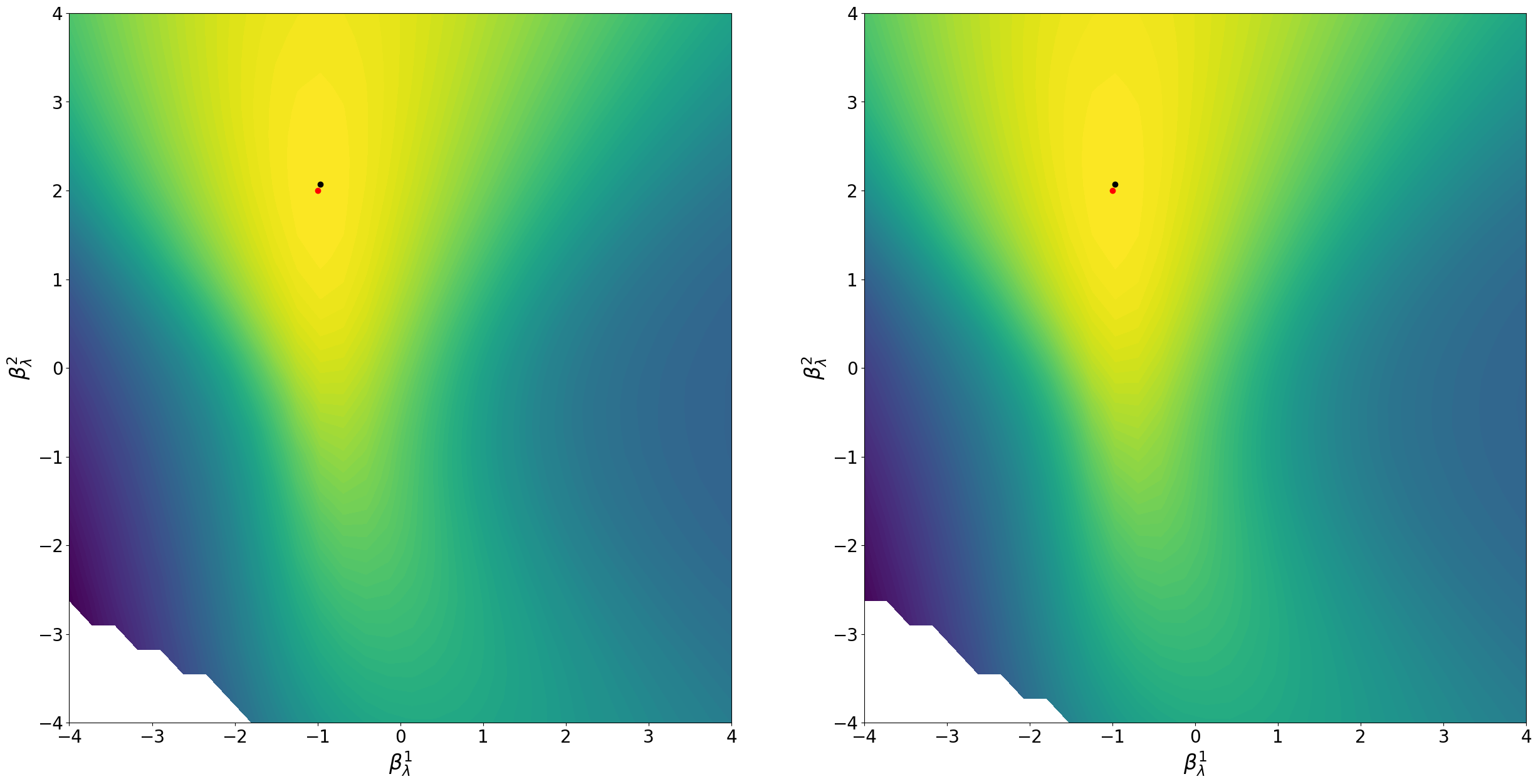}
    \caption{Profile likelihood for $\beta_\lambda$ in spatial SIS.}
\end{figure}

\begin{figure}[httb!]
    \centering
    \includegraphics[width=0.65\textwidth]{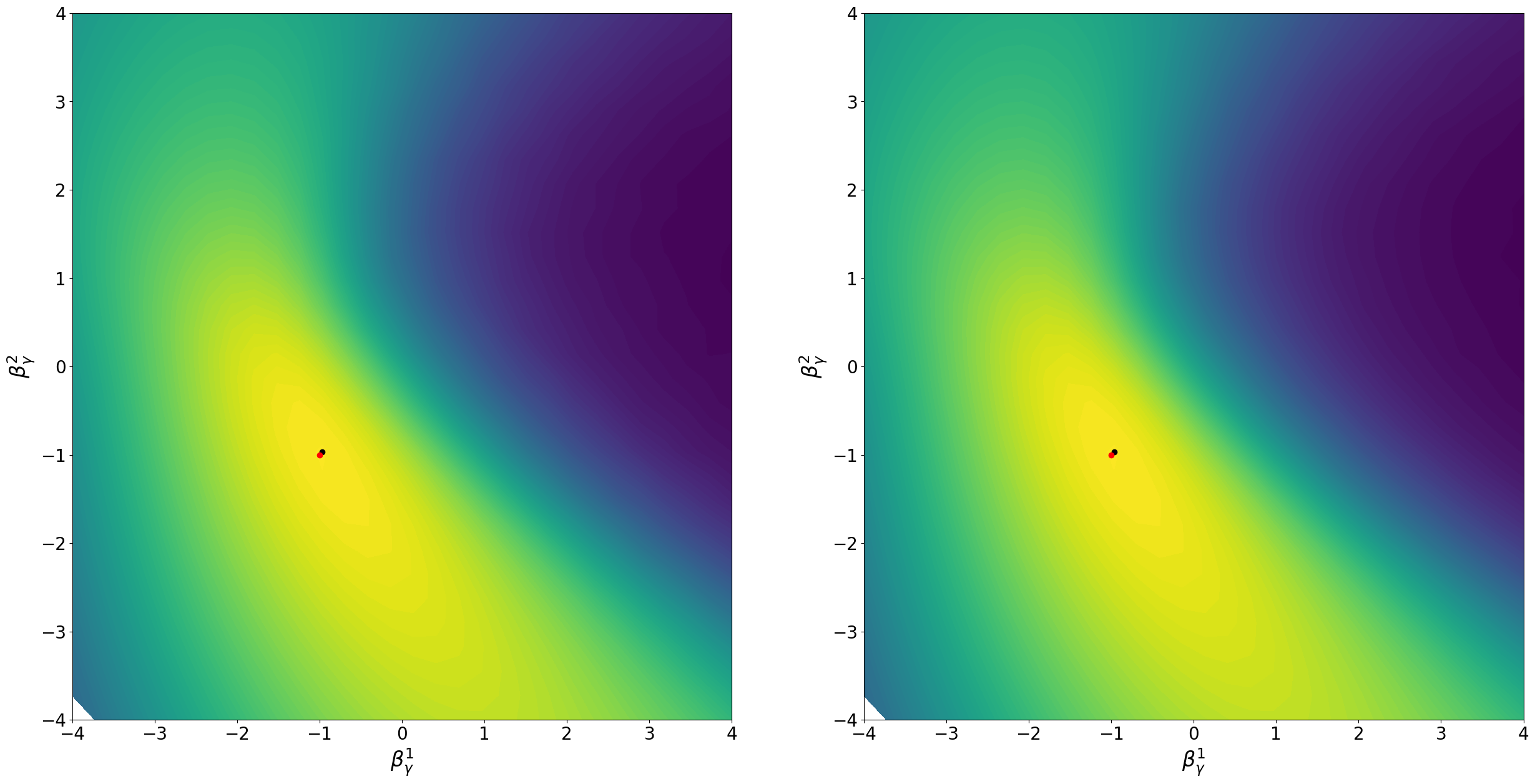}
    \caption{Profile likelihood for $\beta_\gamma$ in spatial SIS.}
\end{figure}

\begin{figure}[httb!]
    \centering
    \includegraphics[width=0.65\textwidth]{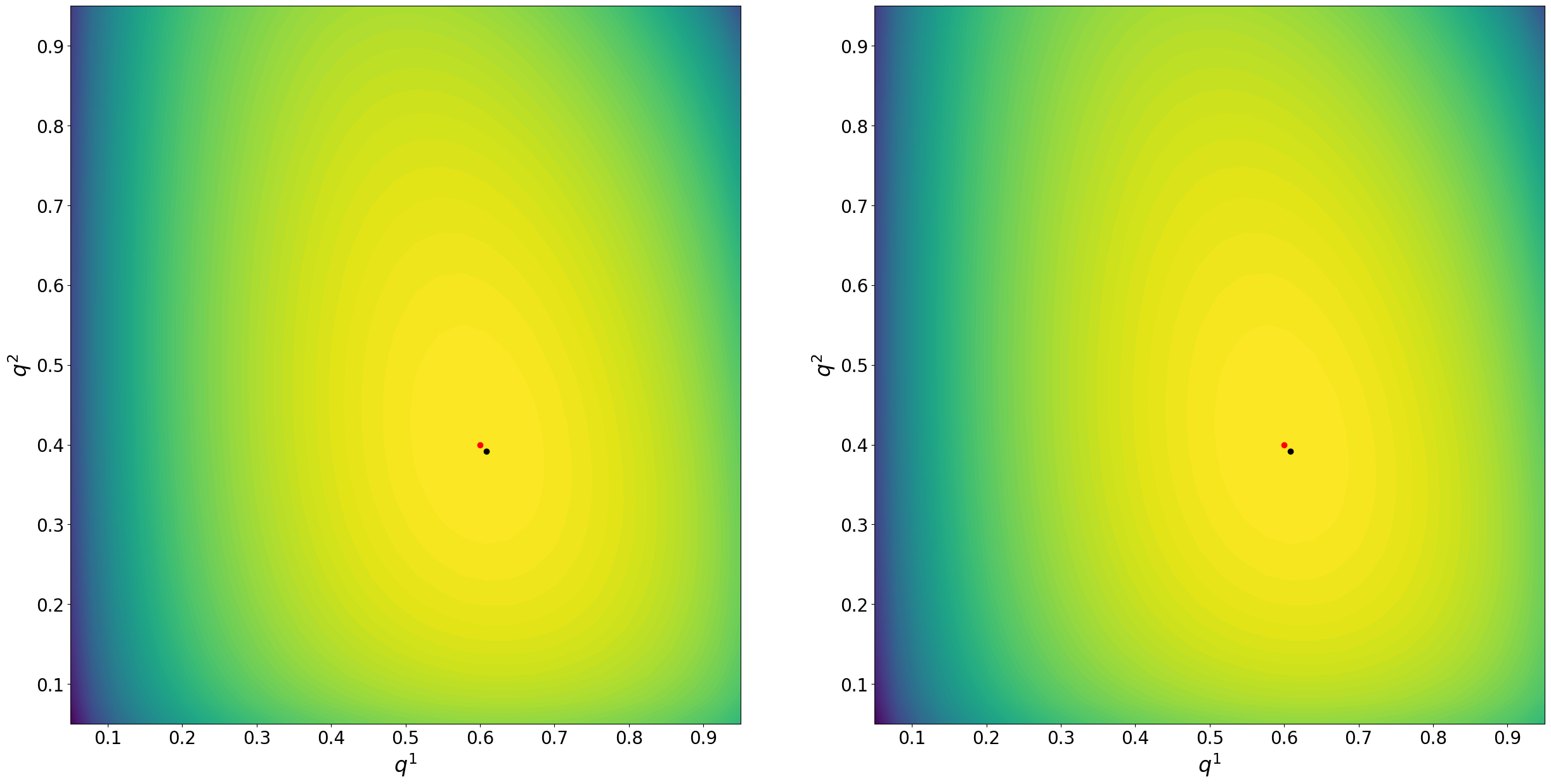}
    \caption{Profile likelihood for $q$ in spatial SIS.}
\end{figure}

\begin{figure}[httb!]
    \centering
    \includegraphics[width=0.65\textwidth]{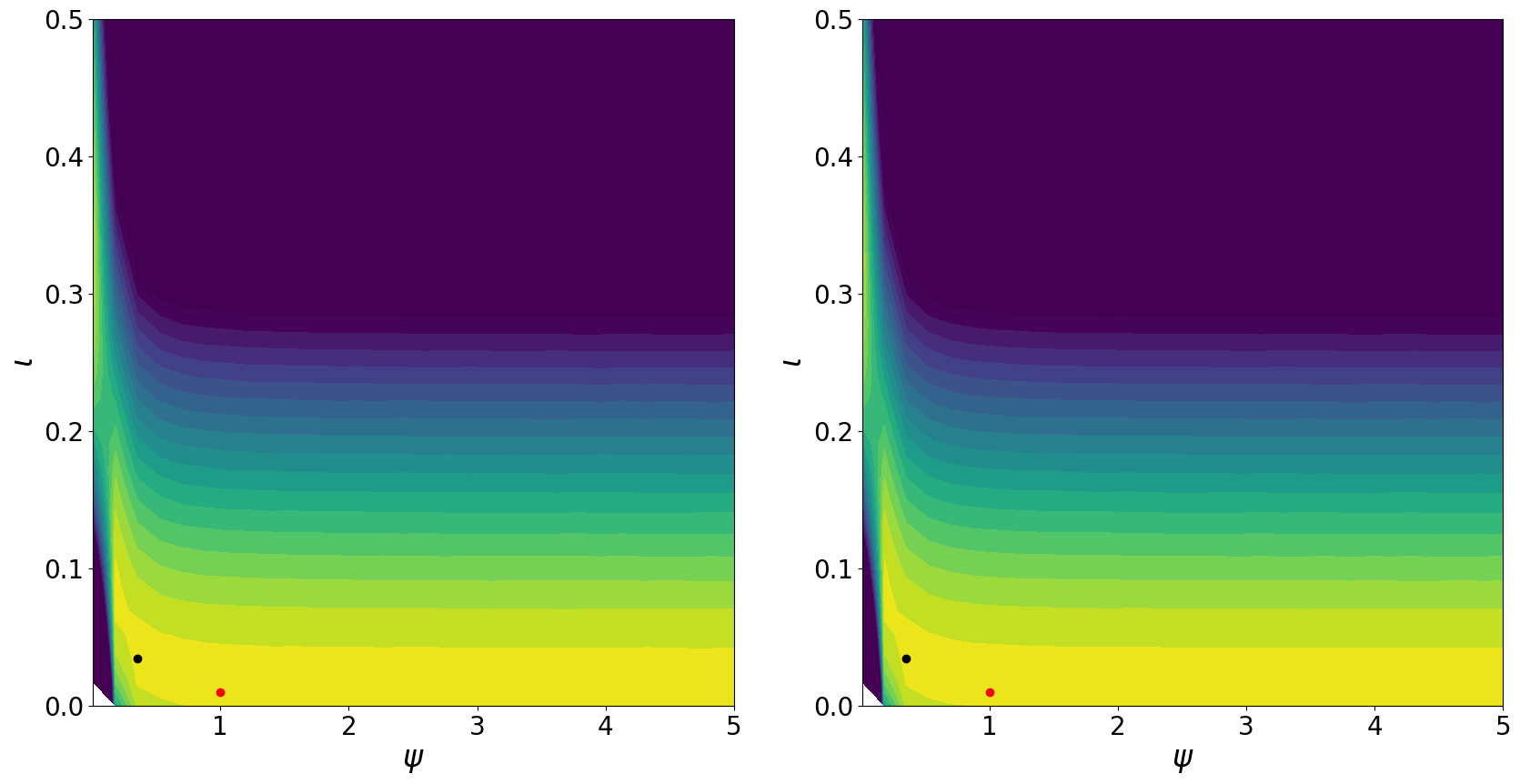}
    \caption{Profile likelihood for $\iota$ and $\psi$ in spatial SIS.}
    \label{fig:log_like_surfaces_spatial_app}
\end{figure}

\newpage
\subsection{FM modelling choices}

When working with the foot and mouth data we can have different modelling choices. One option is to learn the initial probability of infection as a common parameter across infected farms: 
\begin{align}
    &p(x_0^n|\theta) = 
    \begin{bmatrix}
        1-p_0 \mathbb{I}(n \in F_I)\\ 
        p_0 \mathbb{I}(n \in F_I)	\\
        0\\
        0
    \end{bmatrix},\\
\end{align}
where $F_I$ is the set of infected farms over time and so $n$ belongs to $F_I$ if it will be observed infected at some time $t$. This leads to satisfying likelihood scores, but to unreal epidemics, as it sets $p_0\approx 1$ meaning that all the epidemics observed in the future are already infected at time $0$ and no spatial interaction is learned. 

Another option is to reverse engineer the infection process and use a geometric distribution for the infection time:
\begin{align}
    &p(x_0^n|\theta) = 
    \begin{bmatrix}
        1-\left (1-e^{-\gamma} \right ) e^{-\gamma \tau_n}\\ 
        \left (1-e^{-\gamma} \right ) e^{-\gamma \tau_n}	\\
        0\\
        0
    \end{bmatrix},\\
\end{align}
where $\tau_n$ is the infection time of individual $n$. This again leads to satisfying likelihood scores, but it is susceptible to criticism given that we are informing the initial distribution with future observation.

More discussion can follow regarding the choices of the spatial kernel and covariates. We decided to not normalize the covariates and follow the approach of \cite{jewell2009bayesian}, even though normalization of the covariates might lead to better results as explained in \cite{jewell2013bayesian}. In terms of the spatial kernel, an obvious choice is the use of a Gaussian spatial kernel, where the decay is modeled through a Gaussian function.

Even though optimal modeling choices for the FM data is an interesting problem, this is beyond the scope of this experiment, which is included as a pure real data application.

\subsection{Limitations and extensions}

In this section, we provide some experiments on the limitations and extensions of SimBa-CL.

\subsubsection{Conditional SimBa-CL}

As unconditional simulations from the model could lead to inconsistencies with the data, we try to test a conditional version of SimBa-CL without feedback. Specifically, at each time step, we simulate from $\tilde{p}\left (x^n_{t}|y^n_{[t]}, x_{[0:t-1]}^{\setminus n}\right )$ instead from the model and use these samples for the Monte Carlo estimate. 

We compare the unconditional and the conditional SimBa-CL in terms of log-likelihood mean and variance, and average filtering performance. The former is self-explanatory, the latter is considering the ground truth for the latent process and computing the average prediction probability of the correct state over time and population. We consider our baseline individual-based SIS model with a varying population size $N=100, 1000, 2000$, a varying environmental effect $\iota=0.00001,0.001,0.1$ corresponding to the low, medium, high scenarios, and a varying $q=[0.2, 0.1], [0.6,0.4], [0.9,0.9]$ corresponding to the low, medium, high scenarios. We also set $P=500$. Table \ref{tab:condsimba} shows that the conditional SimBa-CL is generally associated with a lower variance in both the log-likelihood and the filtering performance. This effect is less evident for large population sizes and large environmental effects.

\begin{table}[]
    \centering
    \begin{tabular}{lrllll}
    \hline
     Simba-CL      &   $N$ & $\epsilon$   & $q$    & log-likelihood     & filtering performance   \\
    \hline
     Unconditional &   100 & low          & low    & -5131.96(2.6794)   & 0.84(0.0005827)         \\
     Conditional   &   100 & low          & low    & -5100.07(0.0432)   & 0.83(6.01e-05)          \\
     Unconditional &   100 & low          & medium & -6785.45(3.4968)   & 0.97(0.0018434)         \\
     Conditional   &   100 & low          & medium & -6738.07(0.0425)   & 1.0(6.9e-06)            \\
     Unconditional &   100 & low          & high   & -3340.12(3.4506)   & 1.0(0.00021)            \\
     Conditional   &   100 & low          & high   & -3295.73(0.0156)   & 1.0(3e-07)              \\
     Unconditional &   100 & medium       & low    & -5179.42(1.8061)   & 0.85(0.0010506)         \\
     Conditional   &   100 & medium       & low    & -5143.67(0.2772)   & 0.87(0.0002244)         \\
     Unconditional &   100 & medium       & medium & -8313.9(2.3432)    & 0.91(0.0001687)         \\
     Conditional   &   100 & medium       & medium & -8276.77(0.0822)   & 0.91(6.2e-06)           \\
     Unconditional &   100 & medium       & high   & -5117.49(4.3901)   & 0.99(8.89e-05)          \\
     Conditional   &   100 & medium       & high   & -5013.39(0.0461)   & 0.99(5e-07)             \\
     Unconditional &   100 & high         & low    & -4913.78(0.0408)   & 0.76(2.04e-05)          \\
     Conditional   &   100 & high         & low    & -4913.82(0.0245)   & 0.76(9e-06)             \\
     Unconditional &   100 & high         & medium & -8461.96(0.1054)   & 0.89(2.59e-05)          \\
     Conditional   &   100 & high         & medium & -8462.15(0.0443)   & 0.89(5.8e-06)           \\
     Unconditional &   100 & high         & high   & -6001.89(0.1919)   & 0.98(4.7e-06)           \\
     Conditional   &   100 & high         & high   & -5997.51(0.024)    & 0.98(5e-07)             \\
     Unconditional &  1000 & low          & low    & -50951.32(1.1928)  & 0.82(0.0001415)         \\
     Conditional   &  1000 & low          & low    & -50947.66(0.0539)  & 0.82(2.23e-05)          \\
     Unconditional &  1000 & low          & medium & -82876.23(0.6251)  & 0.91(7.08e-05)          \\
     Conditional   &  1000 & low          & medium & -82876.2(0.0689)   & 0.91(2.9e-06)           \\
     Unconditional &  1000 & low          & high   & -53233.93(1.8217)  & 0.99(1.39e-05)          \\
     Conditional   &  1000 & low          & high   & -53204.11(0.0351)  & 0.99(2e-07)             \\
     Unconditional &  1000 & medium       & low    & -50693.04(1.0425)  & 0.8(7.96e-05)           \\
     Conditional   &  1000 & medium       & low    & -50681.28(0.064)   & 0.8(1.34e-05)           \\
     Unconditional &  1000 & medium       & medium & -81994.59(0.4083)  & 0.91(4.73e-05)          \\
     Conditional   &  1000 & medium       & medium & -81992.01(0.0737)  & 0.91(3e-06)             \\
     Unconditional &  1000 & medium       & high   & -55111.02(1.5721)  & 0.99(9.2e-06)           \\
     Conditional   &  1000 & medium       & high   & -55098.72(0.0323)  & 0.99(2e-07)             \\
     Unconditional &  1000 & high         & low    & -50277.48(0.0492)  & 0.77(7.6e-06)           \\
     Conditional   &  1000 & high         & low    & -50277.68(0.0271)  & 0.77(3.1e-06)           \\
     Unconditional &  1000 & high         & medium & -84675.37(0.0926)  & 0.89(4.8e-06)           \\
     Conditional   &  1000 & high         & medium & -84674.12(0.0404)  & 0.89(1.2e-06)           \\
     Unconditional &  1000 & high         & high   & -61529.04(0.3132)  & 0.98(1.5e-06)           \\
     Conditional   &  1000 & high         & high   & -61511.64(0.0268)  & 0.98(1e-07)             \\
     Unconditional &  2000 & low          & low    & -101230.21(1.987)  & 0.81(0.0001159)         \\
     Conditional   &  2000 & low          & low    & -101203.51(0.1243) & 0.81(2e-05)             \\
     Unconditional &  2000 & low          & medium & -163284.12(2.081)  & 0.91(3.94e-05)          \\
     Conditional   &  2000 & low          & medium & -163266.38(0.0707) & 0.91(2.1e-06)           \\
     Unconditional &  2000 & low          & high   & -110632.57(3.8001) & 0.99(5.5e-06)           \\
     Conditional   &  2000 & low          & high   & -110567.61(0.0306) & 0.99(1e-07)             \\
     Unconditional &  2000 & medium       & low    & -101850.23(0.3624) & 0.81(7.66e-05)          \\
     Conditional   &  2000 & medium       & low    & -101849.97(0.057)  & 0.82(1.44e-05)          \\
     Unconditional &  2000 & medium       & medium & -163018.38(0.1773) & 0.91(3.05e-05)          \\
     Conditional   &  2000 & medium       & medium & -163021.54(0.0658) & 0.91(2e-06)             \\
     Unconditional &  2000 & medium       & high   & -107468.3(7.3799)  & 0.99(7.3e-06)           \\
     Conditional   &  2000 & medium       & high   & -107230.03(0.0386) & 0.99(1e-07)             \\
     Unconditional &  2000 & high         & low    & -99940.46(0.0389)  & 0.76(5e-06)             \\
     Conditional   &  2000 & high         & low    & -99940.57(0.0199)  & 0.76(2.2e-06)           \\
     Unconditional &  2000 & high         & medium & -172386.18(0.1524) & 0.89(4.1e-06)           \\
     Conditional   &  2000 & high         & medium & -172383.19(0.0434) & 0.89(1e-06)             \\
     Unconditional &  2000 & high         & high   & -121700.89(0.124)  & 0.98(1e-06)             \\
     Conditional   &  2000 & high         & high   & -121698.41(0.02)   & 0.98(1e-07)             \\
    \hline
    \end{tabular}
    \caption{Log-likelihood and filtering performance for unconditional and conditional SimBa-CL across different scenarios.}
    \label{tab:condsimba}
\end{table}

\subsubsection{Gradient considerations}

We are differentiate quantities of the form $\sum_{n \in [N]} \log \left ( p(y_{[T]}^n|\theta)\right )$, with $\tilde{p}$ for SimBa-CL without feedback. This reduces to differentiating $p(y_{[T]}^n|\theta)$:
\begin{equation}
    \begin{split}
        \frac{\partial p(y_{[T]}^n|\theta)}{\partial \theta}
        &= \sum_{x_{[0:T-1]}^{\setminus n}} p\left ( x_{[0:T-1]}^{\setminus n}|\theta \right )  \frac{\partial p\left (y_{[T]}^n|x_{[0:T-1]}^{\setminus n},\theta\right )}{\partial \theta}\\
        &+
        \sum_{x_{[0:T-1]}^{\setminus n}}  p\left (y_{[T]}^n|x_{[0:T-1]}^{\setminus n},\theta\right ) \frac{\partial p\left ( x_{[0:T-1]}^{\setminus n}|\theta \right ) }{\partial \theta}.
    \end{split}
\end{equation}
In our experiments, we consider TensorFlow's default automatic differentiation, which consists of the approximation:
$$
\frac{\partial p(y_{[T]}^n|\theta)}{\partial \theta}
        \approx \frac{1}{P} \sum_{i \in [P]} \frac{\partial p\left (y_{[T]}^n|x_{[0:T-1]}^{(i),\setminus n},\theta\right )}{\partial \theta}.
$$

As discussed in the main paper, we could also use the Gumbel-Softmax trick \citep{gumbel1954statistical,maddison2016concrete,jang2016categorical}, which approximates the one-hot encoding representation of categorical random variable with a probability vector that converges to the one-hot encoding as the temperature parameter goes to zero. Indeed, suppose we have an $M$-dimensional categorical distribution $X \sim Cat(\pi)$, with $X$ one-hot encoding representation. We can simulate it using the Gumbel distribution as follows:
$$
X =  \text{one-hot}\left( \arg \max_{k=1,\dots,M} \{ \log(\pi^{(k)}) + G_k \}_k\right ), \quad \text{ with } G_k \sim Gumb(0,1) \text{ i.i.d.},
$$
with $\text{one-hot}(k)$ being the vector of all zeros with $1$ in the $k$th position. Now, the $\arg \max$ can be substituted with a Softmax to produce a continuous function:
$$
X^{(k)} \approx  \frac{ \exp  \left (\frac{\log(\pi^{(k)}) + G_k}{\tau}\right )}{\sum_{j} \exp  \left (\frac{\log(\pi^{(j)}) + G_j}{\tau}\right )} \quad \text{for } j=1,\dots,M,
$$
which converges to the $\arg \max$, i.e. a one-hot encoding vector, when $\tau \to 0$. We are now considering a continuous function of i.i.d. random variables that are independent of the parameters of the categorical distribution, meaning that we can apply the reparameterization trick \citep{blundell15}, and exchange expectations and derivatives. Applying this to our scenario we get something along the lines of:
\begin{equation}
    \begin{split}
        \frac{\partial p(y_{[T]}^n|\theta)}{\partial \theta}
        &= 
        \frac{\partial }{\partial \theta} \mathbb{E}_{x_{[0:T-1]}} \left [ p\left (y_{[T]}^n|x_{[0:T-1]}^{\setminus n},\theta\right ) \right ]\\
        &\approx 
        \frac{\partial }{\partial \theta} \mathbb{E}_{G_{[N],[0:T-1]}} \left [ p\left (y_{[T]}^n|s(G_{[N]\setminus n,[0:T-1]},\theta,\tau),\theta\right ) \right ]\\
        &= 
        \mathbb{E}_{G_{[N],[0:T-1]}} \left [ \frac{\partial }{\partial \theta}p\left (y_{[T]}^n|s(G_{[N]\setminus n,[0:T-1]},\theta,\tau),\theta\right ) \right ]\\
        &= 
        \mathbb{E}_{G_{[N],[0:T-1]}} \left [ \left \{ \frac{\partial s(G_{[N]\setminus n,[0:T-1]},\theta,\tau)}{\partial \theta} \frac{\partial  p\left (y_{[T]}^n|s,\theta\right )}{\partial s} \right \}_{s=s(G_{[N]\setminus n,[0:T-1]},\theta,\tau)} \right.\\
        &\left. \qquad \qquad \qquad \qquad+ \left \{ \frac{\partial p\left (y_{[T]}^n|s,\theta\right )}{\partial \theta} \right \}_{s=s(G_{[N]\setminus n,[0:T-1]},\theta,\tau)} \right ],
    \end{split}
\end{equation}
where $G_{[N],[0:T-1]} \coloneqq (G_{n,t})_{n \in [N],t \in [0:T-1]}$ are i.i.d. Gumbel random variables, and $s$ is a function of the Gumbel random variables, the parameter $\theta$, and the temperature parameter $\tau$, which is a composition of continuous functions. The Gumbel-Softmax relaxation and the reparameterization trick are implemented in tensorflow\_probability via the distribution \textit{RelaxedOneHotCategorical}.

In this section, we compare the default \textit{OneHotCategorical}, which we used in our experiments, with the \textit{RelaxedOneHotCategorical}, in terms of gradients estimate and computational cost for our baseline SIS model.

Starting from gradient estimates, we consider our baseline SIS model for a population size $N=100,1000$ and we compute the gradient of $\beta_{\lambda}$ both at the data-generating parameter DGP and in $\beta_{\lambda}=[0,0]^\top$. To compute the gradient we consider both \textit{OneHotCategorical} which we refer to as the ``Without $\tau$'' scenario and \textit{RelaxedOneHotCategorical} for a $\tau=0.1,0.01,0.0001$. We also set $P=500$. Table \ref{tab:gradients_relax_vs} shows that all gradient estimates have the same sign in all scenarios, with these estimates being particularly close for large populations. It is also important to observe that TensorFlow's default implementation underestimates the variance of the gradient.

\begin{table}[]
    \centering
    \begin{tabular}{lrlllll}
    \hline
                    &    N & Without $\tau$   & $\tau=0.1$       & $\tau=0.01$      & $\tau=0.0001$    \\
    \hline
     At DGP (0)     &  100 & -0.0023(8e-05)   & -0.0042(0.00436) & -0.0038(0.00941) & -0.0026(0.00044) \\
     Not at DGP (0) &  100 & 0.0093(0.00021)  & 0.0183(0.00278)  & 0.0163(0.00938)  & 0.012(0.00324)   \\
     At DGP (0)     & 1000 & -0.0006(8e-05)   & -0.0011(0.00116) & -0.0019(0.00794) & -0.0009(0.00113) \\
     Not at DGP (0) & 1000 & 0.0289(0.00027)  & 0.057(0.0051)    & 0.0521(0.01858)  & 0.0394(0.01097)  \\
     At DGP (1)     &  100 & 0.0005(2e-05)    & 1e-04(0.00179)   & 0.0013(0.00679)  & 0.0005(0.00033)  \\
     Not at DGP (1) &  100 & -0.0011(0.00017) & -0.0024(0.00125) & -0.0028(0.00369) & -0.0022(0.00426) \\
     At DGP (1)     & 1000 & -0.0002(2e-05)   & -0.0003(0.00054) & -0.0005(0.00223) & -0.0003(0.00055) \\
     Not at DGP (1) & 1000 & -0.0188(9e-05)   & -0.0187(0.00179) & -0.0197(0.01062) & -0.0204(0.00806) \\
    \hline
    \end{tabular}
    \caption{Gradients estimates for $\beta_{\lambda}$ in different scenarios. $(0)$ refers to the first component. $(1)$ refers to the second component.}
    \label{tab:gradients_relax_vs}
\end{table}

Moving to the computational considerations, we again consider our baseline SIS model for a population size $N=10, 100, 500, 1000, 1500, 2000, 2500$, a time horizon $T=10,50,100$, and a number of particles $P=100,300,500$. In Table \ref{tab:comp_gradients_relax_vs} we report the time in seconds of running a gradient computation of TensorFlow's default implementation, ``Without $\tau$'', and the continuous relaxation, ``With $\tau$''. To fit within the margin we have reported only a few columns of all the simulations. The standard deviation of the computational time is reported in brackets and computed across 100 replicates. Overall we observe that the ``Without $\tau$'' is 1.5 times faster than the ``With $\tau$'' approach. Moreover, it is more frequent for the ``With $\tau$'' approach to go out-of-memory (OOM). For instance, we were not able to run a single gradient computation with $N=2500$. This is due to the heavier computational graph required by the ``With $\tau$'' approach.

\begin{table}[]
    \centering
    \begin{tabular}{ll|ll|lll|l}
    \hline
                    & N    & T=10, P=300   & P=500   & T=50, P=100   & P=300   & P=500   & T=100, P=100   \\
    \hline
     Without $\tau$ & 10   & 0.16(0.07)    & 0.16(0.06)    & 0.15(0.07)    & 0.28(0.06)    & 0.28(0.06)    & 0.3(0.07)      \\
     With $\tau$    &      & 0.23(0.08)    & 0.23(0.22)    & 0.22(0.08)    & 0.38(0.08)    & 0.38(0.08)    & 0.39(0.08)     \\
     Without $\tau$ & 100  & 0.16(0.07)    & 0.15(0.07)    & 0.15(0.07)    & 0.3(0.07)     & 0.3(0.07)     & 0.29(0.07)     \\
     With $\tau$    &      & 0.21(0.08)    & 0.21(0.08)    & 0.21(0.08)    & 0.39(0.08)    & 0.39(0.08)    & 0.39(0.08)     \\
     Without $\tau$ & 500  & 0.16(0.14)    & 0.17(0.07)    & 0.2(0.07)     & 0.31(0.07)    & 0.33(0.07)    & 0.39(0.07)     \\
     With $\tau$    &      & 0.2(0.08)     & 0.24(0.08)    & 0.3(0.08)     & 0.39(0.08)    & 0.46(0.08)    & 0.58(0.08)     \\
     Without $\tau$ & 1000 & 0.16(0.07)    & 0.22(0.07)    & 0.32(0.07)    & 0.29(0.07)    & 0.42(0.07)    & 0.62(0.08)     \\
     With $\tau$    &      & 0.21(0.08)    & 0.33(0.09)    & 0.48(0.09)    & 0.41(0.08)    & 0.65(0.09)    & OOM            \\
     Without $\tau$ & 1500 & 0.18(0.08)    & 0.3(0.09)     & 0.46(0.09)    & OOM           & OOM           & OOM            \\
     With $\tau$    &      & 0.25(0.08)    & 0.45(0.09)    & 0.7(0.1)      & OOM           & OOM           & OOM            \\
     Without $\tau$ & 2000 & 0.19(0.07)    & 0.39(0.2)     & 0.59(0.09)    & OOM           & OOM           & OOM            \\
     With $\tau$    &      & 0.27(0.08)    & 0.57(0.08)    & OOM           & OOM           & OOM           & OOM            \\
     Without $\tau$ & 2500 & 0.2(0.07)     & 0.45(0.08)    & OOM           & OOM           & OOM           & OOM            \\
     With $\tau$    &      & OOM           & OOM           & OOM           & OOM           & OOM           & OOM            \\
    \hline
    \end{tabular}
    \caption{Time in seconds in different scenarios. ``OOM'' stands for out-of-memory. Columns that are grouped together have the same $T$. }
    \label{tab:comp_gradients_relax_vs}
\end{table}

\subsection{The effect of $P$}

In this section, we graphically explore the effect of increasing $P$. We consider our baseline SIS model with $P = 10, 100, 1000$ and simulate data from it. Next, we define a grid for the parameter $\beta_\lambda$ and compute SimBa-CL without feedback for each element of the grid. Specifically, we set $\beta_\lambda$ to each grid value while keeping the true values for the other parameters. The resulting log-likelihood surfaces are shown in Figure \ref{fig:simba_P}.

We observe that the primary effect of $P$ is on the noisiness of the log-likelihood surface. For small values of $P$ (left-hand side of the figure), the log-likelihood surface is less smooth compared to that for larger values of $P$ (right-hand side of the figure). This effect is particularly noticeable in regions with low log-likelihood values. However, the recovery of the parameter does not appear to be significantly affected.

\begin{figure*}[httb!]
    \centering
    \includegraphics[width=\textwidth]{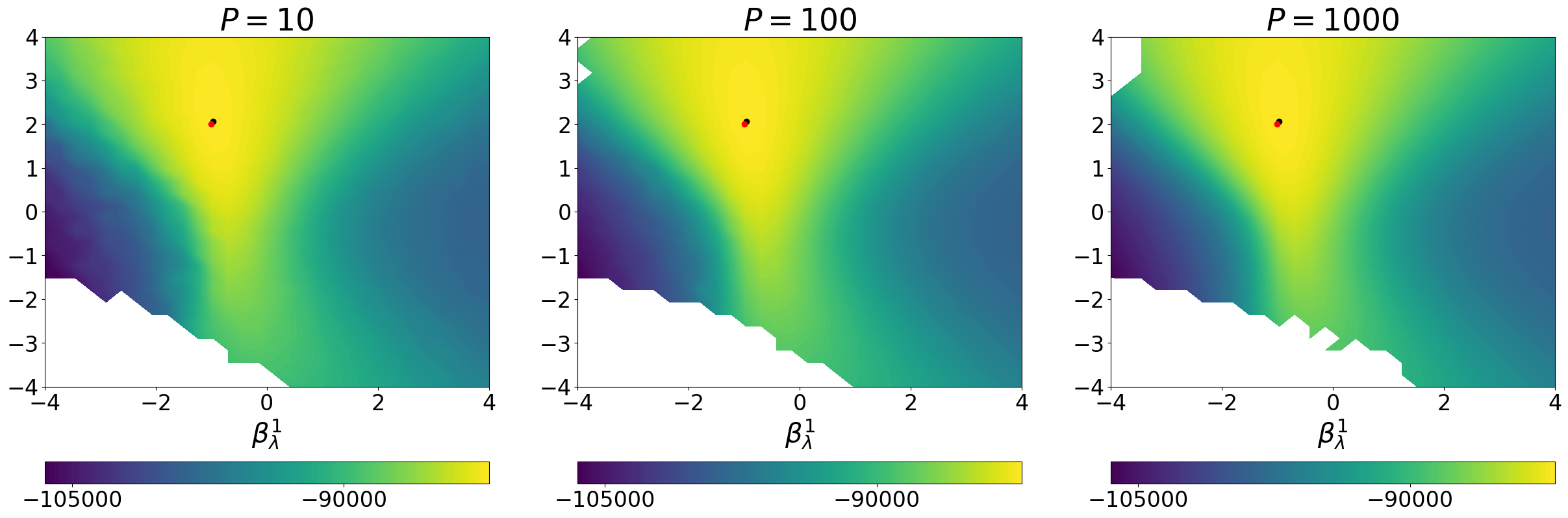}
    \caption{Profile log-likelihood surfaces for $\beta_\lambda$ obtained from fully factorized SimBa-CL without feedback for different values of $P$. The red dot indicates the data-generating parameter, and the black dot denotes the maximum on the grid. }
    \label{fig:simba_P}
\end{figure*}


\end{document}